\documentclass[11pt]{article}

\usepackage[margin=1in]{geometry}
\usepackage{amsmath,amssymb,amsthm,mathtools}
\usepackage{enumitem}
\usepackage[colorlinks=true,linkcolor=blue,citecolor=blue,urlcolor=blue]{hyperref}
\usepackage[nameinlink,capitalize]{cleveref}
\usepackage{microtype}
\usepackage{thmtools}
\usepackage{thm-restate}
\usepackage{xcolor}
\usepackage{xurl}
\usepackage{graphicx}
\usepackage{subcaption}
\usepackage{mdframed}
\newtheorem{theorem}{Theorem}[section]


\newtheorem{lemma}[theorem]{Lemma}
\newtheorem{claim}[theorem]{Claim}

\newtheorem{definition}[theorem]{Definition}
\newtheorem{proposition}[theorem]{Proposition}
\newtheorem{fact}[theorem]{Fact}

\newcommand{\calF}{\mathcal F}
\newcommand{\calL}{\mathcal L}
\newcommand{\calO}{\mathcal O}
\newcommand{\calQ}{\mathcal Q}
\newcommand{\frakp}{\mathfrak p}
\newcommand{\fraka}{\mathfrak a}
\newcommand{\frakd}{\mathfrak d}
\newcommand{\frakA}{\mathfrak A}

\newcommand{\frakP}{\mathfrak P}
\newcommand{\id}{\mathrm{id}}
\newcommand{\ur}{\mathrm{ur}}

\newcommand{\class}{\mathrm{class}}

\DeclareMathOperator{\Tr}{\operatorname{Tr}}

\DeclareMathOperator{\disc}{\operatorname{disc}}
\newcommand{\val}{\operatorname{val}}
\newcommand{\poly}{\operatorname{poly}}

\newcommand{\Gal}{\operatorname{Gal}}
\newcommand{\Aut}{\operatorname{Aut}}
\newcommand{\CRR}{\operatorname{CRR}}
\newcommand{\Norm}{\operatorname{N}}
\newcommand{\rd}{\operatorname{rd}}
\newcommand{\classnum}{\operatorname{h}}

\newcommand{\N}{\mathbb N}
\newcommand{\Z}{\mathbb Z}
\newcommand{\Q}{\mathbb Q}
\newcommand{\R}{\mathbb R}
\newcommand{\C}{\mathbb C}

\renewcommand{\Re}{\operatorname{Re}}
\renewcommand{\Im}{\operatorname{Im}}
\newcommand{\F}{\mathbb F}
\newcommand{\OO}{\mathcal O}

\newcommand{\eps}{\varepsilon}

\newcommand{\ceil}[1]{\left\lceil #1\right\rceil}
\newcommand{\floor}[1]{\left\lfloor #1\right\rfloor}
\newcommand{\set}[1]{\{ #1 \}}

\crefname{theorem}{Theorem}{Theorems}
\crefname{lemma}{Lemma}{Lemmas}
\crefname{proposition}{Proposition}{Propositions}
\crefname{corollary}{Corollary}{Corollaries}
\crefname{definition}{Definition}{Definitions}
\crefname{remark}{Remark}{Remarks}
\Crefname{theorem}{Theorem}{Theorems}
\Crefname{lemma}{Lemma}{Lemmas}
\Crefname{proposition}{Proposition}{Propositions}
\Crefname{corollary}{Corollary}{Corollaries}
\Crefname{definition}{Definition}{Definitions}
\Crefname{remark}{Remark}{Remarks}

\title{Furthest Pair Requires Quadratic Time in Superconstant Dimension under SETH}
\author{
Barna Saha\thanks{Supported by NSF HDR TRIPODS Phase II grant 2217058 (EnCORE Institute).}  \\
University of California San Diego  \\
\texttt{bsaha@ucsd.edu}
\and
Yinzhan Xu\footnotemark[1]  \\
University of California San Diego  \\
\texttt{xyzhan@ucsd.edu}
\and
Christopher Ye\footnotemark[1]  \\
University of California San Diego  \\
\texttt{czye@ucsd.edu}
}
\date{}

\begin{document}
\maketitle

\begin{abstract}
Several fundamental problems in computational geometry admit algorithms with running time $f(d)\cdot n^{2-\Theta(1/d)}$ for $n$ points in $d$ dimensions, making them among the most prominent examples of barely subquadratic computation. Notable members of this class include Furthest Pair, Bichromatic Closest Pair, (Bichromatic) Maximum Inner Product, and Hopcroft’s Problem. Chen [Theory Comput. 2020] proved that, assuming the Strong Exponential Time Hypothesis (SETH), these problems require $n^{2-o(1)}$ time when the dimension satisfies $d=2^{\Theta(\log^* n)}$. We extend this lower bound to all efficiently constructible dimensions $d=\omega(1)$. Thus, assuming SETH, the dependence of the best known algorithms on the dimension is essentially unavoidable. The proof utilizes techniques in OpenAI's recent disproof of the Erd\H{o}s unit distance conjecture. 

The proof was initially discovered by ChatGPT 5.5 Pro. The authors have validated and substantially edited the proof to improve the presentation. 
\end{abstract}

\section{Introduction}

Furthest Pair and Closest Pair are among the most fundamental problems in computational geometry. Given a set of $n$ points in $d$ dimensions, Furthest Pair (resp. Closest Pair) asks to find a pair of points that are furthest (resp. closest) to each other in Euclidean distance. Their applications reach far beyond computational geometry \cite{DBLP:conf/focs/ShamosH75, toussaint1983solving, DBLP:journals/ijcga/MalandainB02, DBLP:conf/compgeom/Har-Peled01}, ranging over spatial databases \cite{DBLP:conf/sigmod/CorralMTV00, DBLP:journals/dke/CorralMTV04, DBLP:conf/adbis/Garcia-GarciaCI16, DBLP:conf/pci/MavrommatisMC21}, data mining \cite{DBLP:journals/jea/Eppstein00, DBLP:conf/vldb/ManolopoulosT01, gonzalez1985clustering, xiao2011efficient}, computer vision \cite{DBLP:journals/vc/ShapiraSC08, chen2018fast}, 
and robotics \cite{lin1991fast, DBLP:journals/trob/GilbertJK88, DBLP:journals/jcise/RedonLMK05}.

Despite analogous definitions, Closest Pair and Furthest Pair have  markedly different algorithmic complexities, observed as early as 1976 \cite{DBLP:conf/stoc/BentleyS76}: Currently, the best algorithms for Closest Pair run in $f(d) \cdot n \log^{O(d)}(n)$ time \cite{DBLP:conf/stoc/BentleyS76,DBLP:journals/iandc/KhullerM95,DBLP:journals/jal/DietzfelbingerHKP97} (which is truly subquadratic as long as $d = o(\log n / \log \log n)$), while the best algorithms for Furthest Pair run in $f(d) \cdot n^{2-\Theta(1/d)}$ time \cite{yao1982constructing,DBLP:conf/compgeom/AgarwalESW90} (which stops being truly subquadratic for any superconstant $d$). 

Furthest Pair is not the only fundamental problem with a $f(d) \cdot n^{2-\Theta(1/d)}$ running time. Other examples include:
\begin{itemize}
    \item \textit{Bichromatic Closest Pair} \cite{DBLP:conf/compgeom/AgarwalESW90}. In this problem, we are given two sets of points $A,B \subseteq \Z^d$, and we need to find $a \in A, b \in B$ that minimizes $\lVert a - b\rVert_2$. A priori, it may be surprising that the complexity of Bichromatic Closest Pair is more comparable to that of Furthest Pair than to that of Closest Pair. Similar to Furthest Pair and Closest Pair, Bichromatic Closest Pair is a fundamental problem with applications in various domains. In fact, some of the applications listed earlier specifically apply to this Bichromatic version of  Closest Pair (e.g., \cite{lin1991fast, DBLP:journals/trob/GilbertJK88}).
    \item \textit{Maximum Inner Product} ($\Z$-Max-IP) \cite{DBLP:conf/compgeom/AgarwalESW90,yao1982constructing,DBLP:journals/dcg/Matousek92}. In this problem, we are given two sets of points $A,B \subseteq \Z^d$, and we need to find $a \in A, b \in B$ that maximizes $a \cdot b$. It is one of the central problems in similarity search with a myriad of  applications \cite{DBLP:journals/toc/Har-PeledIM12, DBLP:journals/cacm/AndoniI08, DBLP:conf/nips/RahimiR07, DBLP:conf/kdd/RamG12, DBLP:conf/nips/Shrivastava014, DBLP:conf/soda/AndoniINR14, DBLP:conf/nips/AndoniILRS15, DBLP:conf/stoc/AndoniR15, DBLP:conf/icml/NeyshaburS15, DBLP:conf/www/Shrivastava015, DBLP:journals/jacm/Valiant15, DBLP:conf/focs/AlmanW15, DBLP:journals/talg/KarppaKK18, DBLP:conf/pods/AhlePR016, DBLP:journals/tods/TeflioudiG17, DBLP:conf/stoc/ChristianiP17, DBLP:conf/soda/Christiani17}. 
    \item \textit{Hopcroft’s problem  (also known as $\Z$-OV)} \cite{DBLP:journals/dcg/Matousek93,DBLP:journals/dcg/Chazelle93a}. In this problem, we are given two sets of points $A,B \subseteq \Z^d$, and we need to determine whether there exists $a \in A, b \in B$ such that $a \cdot b = 0$. Hopcroft’s problem, first posed by John Hopcroft in the early 1980s, is a basic problem related to a range of computational geometry problems \cite{DBLP:conf/cccg/Erickson95}. Finding faster algorithms for Hopcroft’s problem in various settings remains an active research direction \cite{DBLP:journals/talg/ChanZ24, DBLP:conf/mfcs/AndrejevsBV24}. 
\end{itemize}

Prior work \cite{DBLP:conf/soda/Williams18, DBLP:journals/toc/Chen20} has studied the fine-grained hardness of these problems under the following Orthogonal Vectors Hypothesis (OVH), which is known to hold under the Strong Exponential Time Hypothesis (SETH) \cite{DBLP:journals/tcs/Williams05}. 

\begin{definition}[OVH]
For every $\delta>0$ there is a constant $c=c(\delta)$ such that any algorithm that, given two sets of $n$ Boolean vectors in dimension $c\log n$, determines whether there is a pair of orthogonal vectors cannot run in time $O(n^{2-\delta})$.
\end{definition}

Straightforward transformations \cite{DBLP:journals/tcs/Williams05,DBLP:conf/focs/AlmanW15} from Orthogonal Vectors show that Furthest Pair, Bichromatic Closest Pair, Maximum Inner Product, and Hopcroft’s problem require $n^{2-o(1)}$ time when $d = \omega(\log n)$ under OVH. Williams \cite{DBLP:conf/soda/Williams18} improved the dependence to $d = \poly \log \log n$, and Chen \cite{DBLP:journals/toc/Chen20} further improved it to $d = 2^{\Theta(\log^* n)}$. However, there remained a key gap between the upper and lower bounds: 
\begin{center}
    \textit{
    Is there a superconstant $d$ where any of Furthest Pair, Bichromatic Closest Pair, Maximum Inner Product, or Hopcroft’s problem in dimension $d$ can be solved in truly subquadratic time? 
    }
\end{center}

This paper resolves this gap and shows that these problems are already hard for any superconstant dimension (as long as the dimension is efficiently computable). 

\begin{theorem}[Informal \cref{thm:main}]
\label{thm:main-intro}
    Under OVH or SETH, Furthest Pair, Bichromatic Closest Pair, Maximum Inner Product, and Hopcroft’s problem  (also known as $\Z$-OV) require $n^{2-o(1)}$ time for any efficiently constructible dimension $d = \omega(1)$. 
\end{theorem}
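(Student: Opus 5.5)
We reduce Orthogonal Vectors to Hopcroft's problem ($\Z$-OV) in dimension $d$, and then use the known dimension-preserving (up to constant/quadratic blowup) reductions from $\Z$-OV to $\Z$-Max-IP, Bichromatic Closest Pair, and Furthest Pair from \cite{DBLP:journals/toc/Chen20}. Those later reductions are standard: for example, $\Z$-OV in dimension $d$ reduces to $\Z$-Max-IP in dimension $O(d^2)$ by using $-(a\cdot b)^2$ via tensoring, and Max-IP reduces to Bichromatic Closest Pair and Furthest Pair by padding norms. So the heart of the proof is the following. Given an OV instance with $n$ Boolean vectors in dimension $c\log n$ and any slowly growing $d=\omega(1)$, we want to produce $n^{o(1)}$ instances of $\Z$-OV with $n$ vectors each in dimension $d$ and integer entries of size $n^{o(1)}$.

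The approach follows Chen's Chinese-remainder framework, but with a better "encoding ring." Split the $c\log n$ coordinates into $d'$ blocks, where $d'$ is about $\sqrt d$ or similar. Each block contains $\ell = c\log n/d'$ Boolean coordinates, and we want a block's inner-product pattern to be captured by a single algebraic integer. The key step is to pick a number field $K$ of degree $k$ (small relative to $d$) with many split primes of small norm, as in OpenAI's unit-distance construction (class field towers or fields with large unramified abelian extensions). We choose distinct primes $\frakp_1,\dots,\frakp_\ell$ of small norm. Via CRT, each $a$-block maps to $\alpha\in\calO_K$ with $\alpha\equiv a_i \pmod{\frakp_i}$, and each $b$-block maps to $\beta$ with $\beta \equiv b_i$. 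Then the block is orthogonal iff $\alpha\beta\in\frakp_1\cdots\frakp_\ell$. To turn "divisible by an ideal" into "equals zero," we use the short-vector/ideal structure: we choose representatives $\alpha,\beta$ with small height, and multiply by a generator-like element of controlled height, so that $\alpha\beta$ lies in the ideal and has small coordinates. The condition then becomes that the sum over blocks equals one of $n^{o(1)}$ targets, which we enumerate. Each target yields a linear equality over $\Z$ in the coordinates of $\alpha\beta$ in an integral basis. That is bilinear in the $k$-dimensional coordinates of $\alpha$ and $\beta$, so it gives dimension $O(k^2 d')$, followed by a standard merge of "sum of block values equals target" into a single $\Z$-OV via the expanding/guessing trick.

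The main obstacle is the height bound. The quantity $N(\frakp_1\cdots\frakp_\ell)^{1/k}$ times the root discriminant must be $n^{o(1)}$ per block so that the enumeration of targets costs only $n^{o(1)}$. Equivalently, we need fields with bounded root discriminant, which infinite class field towers (Golod--Shafarevich) provide, together with many primes of bounded norm that split in the tower, as used in the unit-distance disproof. Then $\ell$ can be linear in $k$ with bounded per-prime norm, giving the needed bound when $d'$ and $k$ both tend to infinity slowly with $d$. I would first construct such a tower explicitly with efficient computation of integral bases, since "efficiently constructible" is essential to making the reduction run in $n^{o(1)}$ overhead. Verifying that the target enumeration stays $n^{o(1)}$ is the delicate calculation I expect to dominate.
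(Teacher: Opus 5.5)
\textbf{There is a genuine gap: the dimension count fails, and fixing it needs the paper's central idea.} Since $d'\le d$ blocks must carry $c\log n$ bits, each block has $\ell\ge c\log n/d$ bits. A degree-$k$ field has at most $k\,\pi(X)$ prime ideals of norm at most $X$, so $k\ge \ell/\pi(X)$. With bounded norms, as you propose, this means $k=\Omega(\ell)$, which is nearly logarithmic in $n$ and not ``small relative to $d$''. Taking $k$ small instead forces prime norms up to about $\log n/\poly(d)$, which is the prime-density barrier again.

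Writing $\alpha,\beta$ in an integral basis then costs $k^2$ coordinates per block, and even a single linear functional of $\alpha\beta$ costs $k$. That gives $\Omega(\log n)$ total dimension, i.e.\ the trivial regime. What you are missing is a way to compress a degree-$\Theta(\ell)$ algebraic integer into $O(1)$ integer coordinates while keeping the product structure.

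The paper does this with one complex embedding $\sigma$. Because $\sigma$ is multiplicative, each block product becomes a product of two complex numbers, which gives $4T=O(1)$ integer coordinates per block after rounding at precision $N$ and packing real and imaginary parts with powers of $M_0$. Exactness is recovered from the product formula: a nonzero $\Delta\in\calO_K$ has $|\Norm_{K/\Q}(\Delta)|\ge 1$, so $|\sigma(\Delta)|\ge\prod_{\tau\ne\sigma}|\tau(\Delta)|^{-1}$.

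This requires \emph{every} conjugate of every block encoding to be small, and CRT representatives modulo $\frakp_1\cdots\frakp_\ell$ do not provide that. The quantity $\Norm(\frakp_1\cdots\frakp_\ell)^{1/k}\rd(K)$ is the Minkowski bound for a shortest vector. The generic bound on the covering radius instead carries a factor like $\sqrt{|\Delta_K|}=\rd(K)^{k/2}$, exponential in $k$, which destroys the separation estimate. So bounded root discriminant alone does not give your height bound.

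\textbf{How the paper gets height control, and two further steps that fail.} The paper gets archimedean control for free by encoding bits in valuations rather than residues. In a CM field with $h(K)\le H^{f}$, it pigeonholes the $2^{Tf}$ ideals $\prod_m\frakP_m^{\eps_m}(c\frakP_m)^{1-\eps_m}$ into ideal classes and applies Sauer--Shelah. This yields, for every $a\in\{0,1\}^b$, an element $u_a=v/c(v)$ with $|\tau(u_a)|=1$ for all $\tau$ and $\val_{\frakp_{j,r}}(u_a)=2a_j$. Fermat's little theorem applied to $(Q^2c(u_a))^{q_t-1}$ then forces the residues to be exactly $0$ or $1$.

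First, with bounded-norm primes the aggregation across blocks breaks. The count $\sum_j a^{(j)}_ib^{(j)}_i\in[0,d']$ can be a nonzero multiple of the residue characteristic, so a non-orthogonal pair can land in the ideal. The paper instead takes $q_t\in[Y,2Y]$ with $Y=\ceil{4\sqrt L}$ and gives each bit two primes with $q_{j,1}q_{j,2}>L$. This in turn needs the Chebotarev plus Golod--Shafarevich quotient argument to keep an infinite tower in which these growing primes split completely.

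Second, you do not need to construct the tower. ``Efficiently constructible'' refers to $D(n)$, and the paper proves only existence of the local gadget. It obtains uniformity by brute-force search for a base gadget on $O(\log\log\log b)$ bits, followed by three rounds of ordinary integer CRT composition.
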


Some of these problems, in particular Maximum Inner Product and Bichromatic Closest Pair, have been reduced to other problems in a black-box way, including some linear algebra problems on geometric graphs \cite{DBLP:conf/focs/AlmanCS020},  dynamic detection of firing neurons~\cite{DBLP:conf/nips/AlmanL00Z23}, and attention computation \cite{gupta2026subquadratic}. Hence, 
\cref{thm:main-intro} also directly implies improved lower bounds for these downstream applications.

\subsection{Technical Overview}

We now give an overview of the techniques in this work.
The hardness of exact $\Z$-Max-IP, Furthest Pair and Bichromatic Closest Pair follows from reductions from $\Z$-OV, based on reductions in \cite{DBLP:conf/soda/Williams18, DBLP:journals/toc/Chen20}. See \cref{app:reductions} for details. Hence, we focus on $\Z$-OV. 

Recall in the $\Z$-OV problem, we are given two sets of points $A,B \subseteq \Z^d$, and we need to determine whether there exists $x \in A, y \in B$ such that $x \cdot y = 0$.
The key technical contribution of this work is to design a better reduction from OV to $\Z$-OV.

Generally, such reductions proceed as follows: given Boolean vectors $x, y$, split each $d$-dimensional vector into $L$ blocks of size $b$.
Then map each $b$-bit block $x^{(i)}$ in $x$ and each $b$-bit block $y^{(i)}$ in $y$ into $w = O(1)$ integers using (efficiently computable) maps $\alpha$ and $\beta$ respectively. 
For a $d = bL$-dimensional vector $x$, $\alpha(x)$ applies the map $\alpha$ to each block.
This way, $\alpha(x) \cdot \beta(y) = \sum_{i = 1}^{L} \alpha(x^{(i)}) \cdot \beta(y^{(i)})$. 
The goal is to show the existence of a set $V$ such that
\begin{equation*}
    \alpha(x) \cdot \beta(y) \in V \quad\Longleftrightarrow\quad x \cdot y = 0 \text{.}
\end{equation*}
Checking membership in $V$ can be simulated by $|V|$ calls to $\Z$-OV (with one extra dimension), since for a fixed $v \in V$, $x \cdot y = v$ iff $(x, v) \cdot (y, -1) = 0$.
Thus, the reduction is efficient as long as $|V| = n^{o(1)}$ and the entries of the mapped vectors are small.

Consider the first step in \cite{DBLP:journals/toc/Chen20}'s reduction.
Given $b$ distinct primes $p_1, \dots, p_b > L$, the Chinese Remainder Theorem (CRT) gives a one-to-one correspondence $\alpha$ between $\set{0, 1}^{b}$ and a subset of $[0, \prod p_{j})$ (note that in the approach, $\beta=\alpha$).
In particular, for the $i$-th block of $x$, $\alpha(x^{(i)}) \equiv x_{j}^{(i)} \pmod{p_{j}}$ for $j \in [b]$, i.e. we can recover the $j$-th bit of the $i$-th block of $x$ from $\alpha(x^{(i)})$ modulo $p_j$.
In other words, 
\begin{equation*}
    \alpha(x) \cdot \alpha(y) = \sum_{i = 1}^{L} \alpha(x^{(i)}) \cdot \alpha(y^{(i)}) \equiv \sum_{i = 1}^{L} x^{(i)}_{j} y^{(i)}_{j} \pmod{p_{j}} \text{.}
\end{equation*}
Since the latter sum is in $[0, L] \subseteq [0, p_{j})$, it is zero iff $\sum_{i = 1}^{L} x^{(i)}_{j} y^{(i)}_{j} = 0$.
In other words, $x \cdot y = 0$ iff $\alpha(x) \cdot \alpha(y) \equiv 0 \pmod{p_j}$ for all $j \in [b]$.
Furthermore, the values of the product $\alpha(x) \cdot \alpha(y)$ are at most $L \cdot \left(\prod p_{j}\right)^2$ so it suffices to take $V$ to be the set of integers in this range that are $0$ modulo  all $p_{j}$.
Here, we note that typically (e.g. \cite{DBLP:journals/toc/Chen20}) $|V|$ is simply bounded by the range of $\alpha(x) \cdot \alpha(y)$.
Unfortunately, since we need $b$ distinct primes, we have $\prod p_j = b^{\Omega(b)}$.
Thus, if we want to use the same bound $|V| \leq L (\prod p_j)$ we need $b^{b} = n^{o(1)}$ to ensure an efficient reduction.
Since $bL = d = \Theta(\log n)$, the requirement $b^{b} = n^{o(1)}$ forces $b \log b = o(\log n) = o(b L)$ or $L = \omega(\log b)$.
The above approach establishes hardness of $\Z$-OV in dimension $\omega(\log \log n)$.
Through a recursive refinement of this approach, \cite{DBLP:journals/toc/Chen20} achieves hardness for dimensions $2^{\Theta(\log^* n)}$.

At a high level, the previous approach was bottlenecked by the density of primes in $\Z$.
In the reduction above, we use the fact that there are $b$ distinct primes bounded by $O(b \log b)$.
If, hypothetically, one had access to $b$ distinct primes bounded by a constant, then one could use CRT to map $b$ bits into integers of size $2^{O(b)}$, which is $2^{o(bL)} = n^{o(1)}$ for any superconstant $L$.
Interestingly, a similar density barrier appears in Erd\H{o}s's $n^{1+c / \log \log n}$ lower bound  of the unit distance problem~\cite{erdos1946sets}, where the relevant primes are those congruent to $1$ modulo $4$. 
The breakthrough result of \cite{UnitDistance} on the unit distance problem bypasses this barrier, and instead constructs number fields in which selected rational primes split into many prime ideals.
Indeed, the number theory machinery of \cite{UnitDistance} is exactly what allowed us to overcome the barrier faced by previous attempts to establish the hardness of $\Z$-OV. 

We now briefly discuss how the number theory machinery helps us achieve the desired reduction.
First, a few definitions (we will try to keep these as minimal as possible).
A \emph{ring} is a set closed under addition and multiplication (the classical example is the integers $\Z$).
An \emph{ideal} is a set closed under addition and scalar multiplication.
An ideal $\frakp$ is prime if whenever $ab \in \frakp$, either $a \in \frakp$ or $b \in \frakp$.
A \emph{field} is a ring with multiplicative inverses (the classical example is $\Q$).
A (finite) \emph{field extension} is obtained by adjoining to $\Q$ the roots of some polynomials (for example, $\Q(\sqrt{2})$ is the extension obtained by adjoining roots of $x^2 - 2$).
Every number field is equipped with a \emph{ring of integers}, the set of elements that are roots of monic polynomials with integer coefficients.
If we take a prime $q \in \Z$ and the ideal generated by $q$ in $\calO_{K}$, $q$ may no longer be prime, but it factors uniquely into prime ideals (analogous to how integers have a unique prime factorization in $\Z$).
A prime $q$ \emph{splits completely} in $K$ if the factorization of $q$ in $\calO_{K}$ has (1) no repeated factors, and (2) for every factor $\frakp$, the residue field $\calO_{K}/\frakp \cong \F_{q}$.
An \emph{embedding} of a field extension $K$ is a homomorphism $K \hookrightarrow \C$ that fixes $\Q$.

\Cref{prop:codebook} shows that for any sufficiently large $b$ and $L$, there is a degree $O(b)$ field extension $K$, its corresponding ring of integers $\calO_{K}$ and a \emph{constant} number of primes $\calQ = \set{q_1, \dots, q_T}$, each of size $\Theta(\sqrt{L})$ satisfying:
\begin{enumerate}
    \item Each $q_t$ splits completely into $\Theta(b)$ distinct prime ideals in $\calO_{K}$.
    \item For each $j \in [b]$, there are two primes $q_{j, 1}, q_{j, 2} \in \calQ$ and corresponding prime ideal factors $\frakp_{j, 1}, \frakp_{j, 2}$. 
    \item For every bit vector $a \in \set{0, 1}^{b}$, there is a corresponding element $u_a \in K \setminus \set{0}$ such that:
    \begin{enumerate}
        \item $|\sigma(u_a)| = 1$ for every embedding $\sigma$.
        \item $Q^2 u_a \in \calO_{K}$ where $Q := \prod_{t = 1}^{T} q_t$.
        \item For every prime ideal $\frakp_{j, r}$, $\val_{\frakp_{j, r}}(u_a) = 2 a_j$ and $\val_{c\frakp_{j, r}}(u_a) = \val_{\frakp_{j, r}}(c(u_a)) = - 2 a_j$, where $\val_{\frakp}(x)$ denotes the valuation of $x$ at $\frakp$ (roughly speaking, how many times $\frakp$ divides $x$)\footnote{We note that a valuation can be negative. For example, in the field $\Q$, $1/9 = 3^{-2}$.} and $c$ denotes the complex conjugation automorphism on $K$ (e.g. $i \mapsto -i$ if $K = \Q(i)$).
    \end{enumerate}
\end{enumerate}

\paragraph{Embedding $\set{0, 1}^{b}$ into $\calO_{K}$.}
Our first step is to embed binary $b$-bit vectors into the ring of integers $\calO_{K}$.
Formally, for each prime $q_t$, map $a \in \set{0, 1}^{b}$ to $F_{t}(a) := (Q^2 c(u_a))^{q_t - 1} \in \calO_{K}$.\footnote{In the actual proof, we define $F_t$ as a map on $u_a$.}
Define $G_t(a) := Q^2 c(u_a)$ so that $F_t(a) = G_t(a)^{q_t - 1}$.
Define $K_t(x, y) = \sum_{i = 1}^{L} F_t(x^{(i)}) F_t(y^{(i)})$. 
If $x \cdot y = 0$ and $u \cdot v > 0$, then 
we claim $K_t(x, y) \neq K_t(u, v)$ for some $t$.
Note that if we apply the map $F_t$ to each block, the dot product of the resulting vector is
\begin{equation*}
    \sum_{i = 1}^{L} \sum_{t = 1}^{T} F_t(x^{(i)}) F_t(y^{(i)}) = \sum_{t = 1}^{T} K_t(x, y) \text{.}
\end{equation*}
By additivity of valuations\footnote{In general, $\val_{\frakp}(xy) = \val_{\frakp}(x) + \val_{\frakp}(y)$ and $\val_{\frakp}(x + y) \geq \min(\val_{\frakp}(x), \val_{\frakp}(y))$.} and 3(c) above,
\begin{equation*}
    \val_{\frakp_{j, r}}(G_t(x^{(i)}) G_t(y^{(i)})) = \val_{\frakp_{j, r}}(Q^4 c(u_{x^{(i)}}) c(u_{y^{(i)}})) = 2 (2 - x^{(i)}_j - y^{(i)}_j)
\end{equation*} 
where $x^{(i)}_j$ is the $j$-th bit of $x^{(i)}$, i.e. the valuation is $0$ iff $x^{(i)}_j y^{(i)}_j = 1$ (see \Cref{lem:valuation-table}).
In other words, if $x, y$ are orthogonal, then every prime ideal $\frakp_{j, r}$ divides $G_t(x^{(i)}) G_t(y^{(i)})$ for all $i \in [L]$, while if $x, y$ are not orthogonal, say $x^{(i)}_{j} y^{(i)}_{j} = 1$, then some prime ideal $\frakp_{j, r}$ does not divide $G_t(x^{(i)}) G_t(y^{(i)})$.
In the residue field $\calO_{K}/\frakp_{j, r} \cong \F_{q_t}$,\footnote{Here we use the fact that $q_t$ splits completely.} we have $G_t(x^{(i)}) G_t(y^{(i)}) \equiv 0 \pmod{\frakp_{j, r}}$ for all $i, j$ in the former case and non-zero for some $i, j$ in the latter case.

We now examine $K_t$.
When $x, y$ are orthogonal, $F_t(x^{(i)}) F_{t}(y^{(i)}) \equiv 0 \pmod{\frakp_{j, r}}$ for all selected prime ideals so that $K_t(x, y) \equiv 0 \pmod{\frakp_{j, r}}$.
When $x, y$ are not orthogonal, Fermat's Little Theorem implies that $F_t(x^{(i)}) F_t(y^{(i)}) \equiv 1 \pmod{\frakp_{j, r}}$ for some $i, j$, since any non-zero element taken to the $(q_t - 1)$-th power in $\F_{q_t}$ is $1$.
For such a $j \in [b]$, let
\begin{equation*}
    0 < k := |\set{i \mid x_{j}^{(i)} = y_{j}^{(i)} = 1}| \leq L
\end{equation*}
denote the number of blocks where the $j$-th bit is not orthogonal.
Since $q_{j, 1} q_{j, 2} > L \geq k$ at least one of $q_{j, 1}, q_{j, 2}$ will not divide $k$.
In particular, since each non-zero coordinate contributes $1$ and the remaining terms contribute $0$, we have $K_t(x, y) \equiv k \not\equiv 0 \pmod{\frakp_{j, r}}$ for the corresponding prime $\frakp_{j, r}$ lying above $q_{j, r} \nmid k$.

\paragraph{Embedding $\calO_{K}$ into $\C$.}
Next, we embed $\calO_{K}$ into $\C$ via $\sigma$, with the guarantee that $K_t(x, y) \neq K_t(u, v)$ implies that $|\sigma(K_t(x, y)) - \sigma(K_t(u, v))| \ge \exp(-O(b \sqrt{L} \log L))$, i.e. distinct elements in the ring of integers are mapped to sufficiently separated elements of $\C$ (\Cref{lem:product-separation}).
As before, note that if we apply $\sigma$ entry-wise, the resulting dot product of the mapped vectors is
\begin{equation*}
    \sum_{i = 1}^{L} \sum_{t = 1}^{T} \sigma(F_t(x^{(i)})) \sigma(F_t(y^{(i)})) = \sum_{t = 1}^{T} \sigma(K_t(x, y)) \text{.}
\end{equation*}
Fix $\Delta := K_t(x, y) - K_t(u, v)$.
First, observe that any embedding is not too large:
\begin{equation*}
    |\sigma(F_t(a))| = Q^{2 (q_t - 1)} |\sigma(c(u_a))|^{q_t - 1} = Q^{2(q_t - 1)} = \exp(O(\sqrt{L} \log L)) 
\end{equation*}
since $\sigma(c(u_a)) = \overline{\sigma(u_a)}$ has absolute value $1$ and $Q$ is the product of $O(1)$ primes in the range $\Theta(\sqrt{L})$.
A similar bound holds for $\Delta$.

To conclude, we claim that the product $\prod_{\sigma} |\sigma(\Delta)|$ over all embeddings $\sigma$ is a non-zero integer (i.e. at least $1$).
Note that this immediately ensures $|\sigma(\Delta)| \geq \exp(-O(b \sqrt{L} \log L))$ for any embedding $\sigma$, since the number of embeddings is equal to the degree of $K$, or $O(b)$.
To bound the product, we note that $\Delta \in \calO_{K}$ satisfies a monic polynomial with integer coefficients.
In particular, the product of all its embeddings is (up to sign) a power of the constant coefficient, an integer.

\paragraph{Embedding $\C$ into $\Z$.}
So far, we have mapped each block of $b$ bits into $T$-dimensional vectors in $\C$ via $\sigma(F_t(a))$ with the guarantee that $|\sigma(K_t(x, y)) - \sigma(K_t(u, v))| \geq \delta := \exp(- O(b \sqrt{L} \log L))$ for some $t$ whenever $0 = x \cdot y < u \cdot v$.
For each $q_t$, we construct maps $f_t, g_t: \C \rightarrow \Z^4$ such that $f_t(x) \cdot g_t(y) \simeq M^{2t} \Re(N^2 x y) + M^{2t + 1} \Im(N^2 xy)$ for some sufficiently large $N \gg 1/\delta$ and $M$.
(Here, note that we will not be able to define $f_t, g_t$ to equal the right-hand side exactly, since $xy$ may not be integral, but they will be close to the desired value up to rounding error. We will ignore the rounding errors and treat the above equality as exact in this overview.)
If we apply $f_t, g_t$ entry-wise to the $L$ coordinates corresponding to prime $q_t$, then the dot product of the resulting $\Z$-vectors is
\begin{equation*}
    \sum_{i = 1}^{L} \sum_{t = 1}^{T} f_t(\sigma(F_t(x^{(i)}))) \cdot g_t(\sigma(F_t(y^{(i)}))) = \sum_{t = 1}^{T} M^{2t} \Re(N^2 \sigma(K_t(x, y))) + M^{2t + 1} \Im(N^2 \sigma(K_t(x, y))) \text{.}
\end{equation*}
Now, we observe that $|N^2 \sigma( K_{t}(x, y))| = \poly(1/\delta) \cdot L^{O(\sqrt{L})} = \exp(O(b \sqrt{L} \log L))$ so that if we set $M \gg \exp(O(b \sqrt{L} \log L))$, we can recover the individual summands from the dot product.
By setting $N \gg 1/\delta$, we can see that either the real parts or imaginary parts of $N^2 \sigma(K_t(x, y))$ and $N^2 \sigma(K_t(u, v))$ will disagree if $|\sigma(K_t(x, y)) - \sigma(K_t(u, v))| \geq \delta$.

Thus, combined with our previous guarantees, we can ensure that  (1) the images of vector pairs with $x \cdot y = 0$ and $u \cdot v > 0$ are disjoint, and (2) the entries of the vectors, the size of the possible dot products, and the output set $V$ have size bounded by $\exp(O(b \sqrt{L} \log L))$.
In particular, we can reduce OV to $\exp(O(b \sqrt{L} \log L))$ instances of $O(L)$-dimensional $\Z$-OV.

\paragraph{Statement on AI use.} The initial proof was generated by ChatGPT 5.5 Pro, and we also used other AI systems (Codex, Claude Opus and Gemini) to generate feedback. The initial prompt was essentially the only mathematically meaningful input from the authors, and it is very simple: 
\begin{center}
\begin{minipage}{0.9\linewidth}
\centering
    Try to use this proof idea \url{https://cdn.openai.com/pdf/74c24085-19b0-4534-9c90-465b8e29ad73/unit-distance-proof.pdf} to improve the $2^{O(\log^* n)}$ bound in \url{https://arxiv.org/pdf/1802.02325}. 
\end{minipage}
\end{center}
ChatGPT was not able to solve the problem in its first response, and needed multiple rounds of back and forth. These follow-up rounds included asking the model to continue trying, and asking the model to fix or improve the manuscript based on AI-generated feedback; in later rounds, we used Codex for this process.

The proof uses heavy machinery from algebraic number theory to prove the existence of an embedding from OV to $\Z$-OV (\cref{lem:nonuiform-local-lean}), which is the key technical tool used in the reductions. 
We used Aristotle \cite{achim2025aristotleimolevelautomatedtheorem} to formalize the proof of \cref{lem:nonuiform-local-lean} in Lean 4,\footnote{\url{https://github.com/xuyinzhan/max-ip-lean}.} which depends on Aleph Prover's formalization \cite{erdos_unit_distance_formalization_2026} of \cite{UnitDistance}. The formalization assumes three standard results from algebraic number theory. 
See \cref{app:lean} for more details. 

The authors have validated and edited the proof and are solely responsible for its correctness.

\section{Preliminaries}

Given a field $K$, let $K^{\times}$ denote its multiplicative group.
Given a real number $x \in \R$, we let $\lfloor x \rfloor$ denote its floor, $\lceil x \rceil$ its ceiling, and $\lfloor x \rceil$ its nearest integer.
For a complex number $z$, let $\Re(z)$ denote its real part and $\Im(z)$ its imaginary part. Throughout the paper, we use $\log$ to denote natural logarithm. 

We define constructible dimension functions. 
Formally, to ensure that our reduction is efficient, we require that the number of dimensions in the constructed $\Z$-OV instance can be computed efficiently.

\begin{definition}[Constructible dimension function]\label{def:constructible}
    A function $D:\Z_{\ge1}\to\Z_{\ge1}$ is constructible if $D(n)$ can be computed exactly in $n^{o(1)}$ time.
\end{definition}

We use the following standard tools from number theory.

\begin{theorem}[Prime Number Theorem]
    \label{thm:prime-number}
    Let $\pi(n)$ denote the number of primes $\leq n$.
    Then,
    \begin{equation*}
        \lim_{n \rightarrow \infty} \frac{\pi(n)}{n/\log n} = 1 \text{.}
    \end{equation*}
\end{theorem}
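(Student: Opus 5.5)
We would follow the classical analytic route in D.~J.~Newman's streamlined form, rather than give an elementary (Selberg--Erd\H{o}s) proof. Since the paper only uses the Prime Number Theorem as a black box (to guarantee enough primes of size $\Theta(\sqrt{L})$), one could simply cite it. For a self-contained argument, the plan is as follows.

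\emph{Reduction to Chebyshev's function.} Let $\vartheta(x) := \sum_{p \le x} \log p$. First we prove the Chebyshev bound $\vartheta(x) = O(x)$. This follows from the observation that $\prod_{n < p \le 2n} p$ divides $\binom{2n}{n} \le 4^n$, so $\vartheta(2n) - \vartheta(n) \le n \log 4$, and then summing dyadically. We then show that $\vartheta(x) \sim x$ implies $\pi(x) \sim x/\log x$. The upper bound is immediate from $\vartheta(x) \le \pi(x)\log x$. For the lower bound, we only count primes in $(x^{1-\eps}, x]$, which gives $\vartheta(x) \ge (1-\eps)\log x\,(\pi(x) - x^{1-\eps})$.

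\emph{Analytic input.} Define $\Phi(s) := \sum_p \log p \cdot p^{-s}$ for $\Re s > 1$. Comparing with $-\zeta'(s)/\zeta(s) = \sum_p \log p/(p^s - 1)$ shows that $\Phi(s) - 1/(s-1)$ extends holomorphically to a neighbourhood of $\Re s \ge 1$, provided $\zeta(s) \neq 0$ on $\Re s = 1$. Two ingredients are needed here. The first is that $\zeta(s) - 1/(s-1)$ continues to $\Re s > 0$, which we get by comparing $\sum n^{-s}$ with $\int x^{-s}\,dx$. The second is the non-vanishing on $\Re s = 1$. For this we use the standard argument: if $\zeta$ had a zero of order $\mu$ at $1+i\alpha$ and a zero of order $\nu$ at $1+2i\alpha$, then evaluating $\sum_{r=-2}^{2}\binom{4}{2+r}\Phi(1+\eps+ir\alpha) = \sum_p \log p \cdot p^{-1-\eps}(p^{i\alpha/2}+p^{-i\alpha/2})^4 \ge 0$ as $\eps \to 0^+$ yields $6 - 8\mu - 2\nu \ge 0$, which forces $\mu = 0$.

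\emph{Tauberian step.} Write $\int_1^\infty (\vartheta(x)-x)x^{-2}\,dx = \int_0^\infty (\vartheta(e^t)e^{-t}-1)\,dt$. Its Laplace transform is $\Phi(z+1)/(z+1) - 1/z$, which is holomorphic on $\Re z \ge 0$. We apply Newman's analytic theorem: if $f$ is bounded and its Laplace transform extends holomorphically to $\Re z \ge 0$, then $\int_0^\infty f$ converges. The proof is a contour-integral estimate using Cauchy's formula with the kernel $e^{zT}(1+z^2/R^2)/z$. From convergence of the integral, together with the monotonicity of $\vartheta$, we deduce $\vartheta(x) \sim x$. If $\vartheta(x) \ge \lambda x$ for some $\lambda>1$ along arbitrarily large $x$, then $\int_x^{\lambda x}(\vartheta(t)-t)t^{-2}\,dt$ exceeds a fixed positive constant, contradicting the Cauchy criterion; the case $\lambda<1$ is symmetric.

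The main obstacle is this last analytic core: the non-vanishing of $\zeta$ on $\Re s = 1$ combined with the Tauberian theorem. Everything else is routine partial summation.
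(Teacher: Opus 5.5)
The paper gives no proof of this statement. It is quoted in the preliminaries as one of the ``standard tools from number theory'' and used as a black box in three places: to find primes in $(L,2L]$ in Case~1 of \Cref{prop:int-gadget}, to find primes in $[D_{\rm micro}^2,2D_{\rm micro}^2]$ in \Cref{lem:crt-composition}, and, together with Chebotarev, to count primes in $[Y,2Y]$ splitting completely in $N$.

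Your proposal instead proves the theorem along Newman's route, and the steps check out:
\begin{itemize}
\item The $\binom{2n}{n}$ argument gives $\vartheta(x)=O(x)$, so $f(t)=\vartheta(e^t)e^{-t}-1$ is bounded.
\item The identity $\int_0^\infty f(t)e^{-zt}\,dt=\Phi(z+1)/(z+1)-1/z$ holds. Writing $\Phi(z+1)=1/z+h(z)$ with $h$ holomorphic on $\Re z\ge 0$, it equals $(h(z)-1)/(z+1)$, which is holomorphic there, as required.
\item The weights $\binom{4}{2+r}$ give exactly $6-8\mu-2\nu\ge 0$.
\item Monotonicity of $\vartheta$ correctly converts convergence of $\int_1^\infty(\vartheta(x)-x)x^{-2}\,dx$ into $\vartheta(x)\sim x$.
\end{itemize}
One wording point: $\vartheta(x)\le\pi(x)\log x$ yields the \emph{lower} bound $\liminf \pi(x)\log x/x\ge 1$, while the $(x^{1-\eps},x]$ count yields the \emph{upper} bound on $\pi(x)$. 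Your ``upper/lower'' labels refer to $\vartheta$, not $\pi$, but both inequalities are correct.

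Citation is the sensible choice for the paper, since the result is classical and unrelated to its contribution. Your version makes the document self-contained at the cost of about a page of complex analysis. Note also that the first two uses only need a lower bound of order $x/\log x$ on the number of primes in $(x,2x]$, which elementary Chebyshev--Erd\H{o}s estimates via $\binom{2n}{n}$ already provide. The genuinely asymptotic statement is needed only where it is combined with Chebotarev to count split primes in the dyadic window $[Y,2Y]$.
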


\begin{theorem}[Chinese Remainder Theorem]
    \label{thm:chinese-remainder}
    Given $d$ pairwise co-prime integers $q_1,q_2,\dots,q_d$, and $d$ integers $r_1,r_2,\dots,r_d$, there is
    exactly one integer $0 \leq t < \prod_{i = 1}^{d} q_i$ such that
    \begin{equation*}
        t \equiv r_i \pmod{q_i} \text{ for all } i \in [d] \text{.}
    \end{equation*}
    We call this the Chinese remainder representation (or the $\CRR$ encoding) of the $r_i$’s (with respect to these
    $q_i$’s). We also denote
    \begin{equation*}
        t = \CRR(\set{r_i}; \set{q_i})
    \end{equation*}
    for convenience. We sometimes omit the sequence $\set{q_i}$ for simplicity, when it is clear from the context.
    Moreover, $t$ can be computed in polynomial time with respect to the total bits of all the given integers.
\end{theorem}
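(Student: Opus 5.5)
We prove uniqueness first, then existence via an explicit formula, which also yields the polynomial-time algorithm. Throughout, write $P := \prod_{i=1}^{d} q_i$.

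\emph{Uniqueness.} Suppose $0 \le t, t' < P$ both satisfy $t \equiv t' \equiv r_i \pmod{q_i}$ for all $i \in [d]$. Then $q_i \mid t - t'$ for each $i$. Since the $q_i$ are pairwise co-prime, an induction on $d$ gives $P \mid t - t'$. The induction step uses the fact that if $m \mid x$, $q \mid x$ and $\gcd(m,q)=1$, then $mq \mid x$, which follows from B\'ezout's identity. Because $|t - t'| < P$, we conclude $t = t'$.

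\emph{Existence.} For each $i$, let $P_i := P/q_i$. By pairwise co-primality, $\gcd(P_i, q_i) = 1$, since any prime dividing both would divide some $q_k$ with $k \ne i$ as well as $q_i$. Hence the extended Euclidean algorithm produces $e_i$ with $P_i e_i \equiv 1 \pmod{q_i}$. Set
\begin{equation*}
    t := \Bigl(\sum_{i=1}^{d} r_i P_i e_i\Bigr) \bmod P \text{.}
\end{equation*}
For each $k \ne i$ we have $q_k \mid P_i$, so modulo $q_k$ only the $k$-th summand survives, and it is $\equiv r_k \cdot 1$. Thus $t$ satisfies every congruence and lies in $[0, P)$.

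Alternatively, existence follows by counting: the map $\Z/P\Z \to \prod_i \Z/q_i\Z$ is injective by the uniqueness argument, and both sides have size $P$.

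\emph{Efficiency.} Let $B$ be the total bit length of the input. Then $P$ and every $P_i$ have at most $B$ bits. Computing each $P_i$ takes $O(d)$ multiplications of $O(B)$-bit numbers. Each $e_i$ comes from one extended-Euclid call on $O(B)$-bit inputs, which runs in $\poly(B)$ time. We may reduce each $r_i$ modulo $q_i$ first, so every summand has $O(B)$ bits. The final sum and reduction modulo $P$ then cost $\poly(B)$, and since $d \le B$ the whole computation runs in $\poly(B)$ time.

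There is no real obstacle here. The only point needing care is this bit-length bookkeeping: intermediate quantities must stay polynomial in $B$, which holds because each is bounded by $P$ or by $P^2$ before reduction.
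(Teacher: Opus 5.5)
Your proof is correct and is the standard argument: uniqueness because pairwise coprimality forces $P \mid t-t'$, existence from the explicit combination $\sum_i r_i P_i e_i$ with the inverses $e_i$ obtained by the extended Euclidean algorithm, and bit-length bookkeeping showing that every intermediate quantity has polynomially many bits in $B$. The paper does not prove this statement; it invokes it as a standard number-theoretic tool, so there is no route to compare against.
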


\section{An Improved Local Reduction to \texorpdfstring{$\Z$}{Z}-OV}
\label{sec:gadget}

The goal of this section is to obtain a better reduction to $\Z$-OV.
We formally state the required reduction below.

\begin{lemma}
    \label{lem:nonuiform-local-lean}
    There exist absolute constants $K_0, L_0, w_0$ such that for all integers $b \geq 1$ and $L \geq L_0$, there exist functions $\alpha, \beta: \set{0, 1}^{b} \rightarrow \Z_{\ge 0}^{w_0}$ and $V \subseteq \Z_{\ge 0}$ satisfying the following properties.
    Let $B = \lceil L^{K_0 b \sqrt{L}} \rceil$. Then:
    \begin{enumerate}
        \item (Entry Bound) For all $a \in \set{0, 1}^{b}$ and $k \in [w_0]$, $\alpha(a)_{k}, \beta(a)_{k} < B$ and for all $v \in V$, $v \leq B$. Also, $|V| \leq B$.
        \item (Product Bound) For all $x, y \in \set{0, 1}^{bL}$, let $x_i, y_i$ denote the $i$-th block of size $b$ in $x, y$ respectively.
        Then
        \begin{equation*}
            \sum_{i = 1}^{L} \alpha(x_i) \cdot \beta(y_i) \leq B
        \end{equation*}
        \item (Correctness) For all $x, y \in \set{0, 1}^{bL}$ and $x_i, y_i$ as above,
        \begin{equation*}
            x \cdot y = 0 \quad\Longleftrightarrow\quad
           \sum_{i = 1}^{L} \alpha(x_i) \cdot \beta(y_i) \in V \text{.}
        \end{equation*}
    \end{enumerate}
\end{lemma}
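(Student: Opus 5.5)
We follow the three-stage pipeline of the technical overview. We take as given the codebook of \cref{prop:codebook}. It supplies a number field $K$ of degree $O(b)$, a constant number $T$ of primes $q_1,\dots,q_T=\Theta(\sqrt L)$ each splitting completely in $\calO_K$, chosen prime ideals $\frakp_{j,1},\frakp_{j,2}$ above $q_{j,1},q_{j,2}$ with $q_{j,1}q_{j,2}>L$, and units-on-the-circle $u_a$ with $Q^2u_a\in\calO_K$ and the stated valuations at $\frakp_{j,r}$ and $c\frakp_{j,r}$.

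\textbf{Step 1: the algebraic code.} Set $G_t(a)=Q^2c(u_a)\in\calO_K$ and $F_t(a)=G_t(a)^{q_t-1}$, and put $K_t(x,y)=\sum_{i\le L}F_t(x_i)F_t(y_i)$. Additivity of valuations gives $\val_{\frakp_{j,r}}(G_t(x_i)G_t(y_i))=2(2-x_{i,j}-y_{i,j})$, using $\val_{\frakp}(Q)=1$ since $q_t$ splits without ramification. This valuation is $0$ iff $x_{i,j}=y_{i,j}=1$. Work in $\calO_K/\frakp_{j,r}\cong\F_{q_{j,r}}$ with $t$ the index of $q_{j,r}$. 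Fermat gives $K_t(x,y)\equiv k_j \pmod{\frakp_{j,r}}$, where $k_j=|\{i:x_{i,j}=y_{i,j}=1\}|\le L$. Hence if $x\cdot y=0$, then $K_t(x,y)\in\frakp_{j,r}$ for every selected pair. If $x\cdot y>0$, pick $j$ with $k_j>0$; since $q_{j,1}q_{j,2}>L$, some $r$ has $q_{j,r}\nmid k_j$, so $K_t(x,y)\notin\frakp_{j,r}$. In either case the tuple $(K_t(x,y))_t$ of an orthogonal pair differs from that of any non-orthogonal pair. This is all we need: we define $V$ as the set of \emph{achieved} images of orthogonal pairs, so the value set need not be computable, only bounded.

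\textbf{Step 2: separation in $\C$.} Fix one complex embedding $\sigma$. Since $|\tau(u_a)|=1$ for all embeddings $\tau$, we get $|\tau(F_t(a))|=Q^{2(q_t-1)}=\exp(O(\sqrt L\log L))$. So every conjugate of $\Delta=K_t(x,y)-K_t(u,v)$ has modulus at most $2LQ^{2q_t}$. If $\Delta\ne0$, its norm $\prod_\tau\tau(\Delta)$ is a nonzero rational integer, so
\[
|\sigma(\Delta)|\ge (2LQ^{2q_t})^{-(\deg K-1)}=\delta:=\exp(-O(b\sqrt L\log L)).
\]

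\textbf{Step 3: discretize to $\Z_{\ge0}^{w_0}$.} Choose $N=\lceil 10L/\delta\rceil$ and a base $M$ exceeding every possible rounded block sum by a large factor; both are $\exp(O(b\sqrt L\log L))$. For $z=\sigma(F_t(a))$, round $Nz$ to a Gaussian integer $\zeta=m+ni$, and for the $\beta$ side likewise $\eta=m'+n'i$. The real and imaginary parts of $\zeta\eta$ are bilinear forms in integer coordinates. We realize $M^{2t}\Re(\zeta\eta)+M^{2t+1}\Im(\zeta\eta)$ as a dot product of two $O(1)$-dimensional vectors, split over coordinates $t=1,\dots,T$, so $w_0=O(T)$. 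To make entries nonnegative, shift each signed integer coordinate by a large offset $R$ and append compensating coordinates. The standard trick is $(x+R)(y+R)=xy+Rx+Ry+R^2$, with the correction terms $-Rx$, $-Ry$ and $-R^2$ cancelled using extra coordinates that pair with constants. One then checks that every entry and product stays nonnegative after adding a global constant offset absorbed into $V$.

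Per block the rounding error is $O(\sup|Nz|)=\exp(O(\sqrt L\log L))\ll N^2\delta/10$, so the summed errors are below $N^2\delta/4$. The gap $N^2\delta$ between differing $K_t$ therefore survives rounding. Choosing $M$ beyond the total range of each digit lets us read off the integer sums $\sum_i\Re(\zeta_i\eta_i)$ and $\sum_i\Im(\zeta_i\eta_i)$ injectively from the total. So distinct tuples $(K_t)_t$ give distinct outputs, and orthogonal and non-orthogonal outputs are disjoint, which proves correctness. Every bound is $\exp(O(b\sqrt L\log L))\le L^{K_0b\sqrt L}$ for a suitable absolute $K_0$ once $L\ge L_0$. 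In particular $|V|\le B$ holds because $V$ lies in $[0,B]$.

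\textbf{Main obstacle.} The delicate part is Step 3's bookkeeping. The total rounding error over $L$ blocks must stay below half the separation, while the digits in base $M$ must not carry into each other after the nonnegativity offsets. I would first fix $\delta$, then $N$, then the offset $R$, then $M$, in that order, verifying each inequality with crude $\exp(C b\sqrt L\log L)$ bounds.
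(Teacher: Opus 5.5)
Your plan follows the paper's proof essentially step for step. It uses the same code $F_t(a)=(Q^2c(u_a))^{q_t-1}$ and the same valuation-plus-Fermat argument giving $K_t(x,y)\equiv k_j \pmod{\frakp_{j,r}}$. It uses the same norm bound for separation under one embedding $\sigma$, the same rounding to Gaussian integers with base-$M$ packing of real and imaginary parts, and defines $V$ as the set of achieved orthogonal values. Your nonnegativity fix via $(x+R)(y+R)$ plus two correction coordinates works; the paper's pairing $(B+a,B-a)\cdot(B+b,B-b)=2B^2+2ab$ is just cleaner. Two points, however, need repair.

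\textbf{The choice of $N$.} In Step 3, $N=\lceil 10L/\delta\rceil$ is too small, because your error estimate drops a factor of $N$.
\begin{itemize}
\item Write $\zeta=Nz+e$ and $\eta=Nw+e'$ with $|e|,|e'|\le 1$. The per-block error is then $\zeta\eta-N^2zw=N(ze'+we)+ee'$.
\item Here $|z|=|w|=Q^{2(q_t-1)}$ exactly, so the only available bound is of order $N\,Q^{2(q_t-1)}=N e^{\Theta(\sqrt L\log L)}$, not $e^{O(\sqrt L\log L)}$.
\item Over $L$ blocks and two pairs, this error is of order $LN\,Q^{2(q_t-1)}$. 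The gap is only $N^2\delta\approx 10LN$.
\item So your claim that the summed errors stay below $N^2\delta/4$ is false by a factor of order $Q^{2(q_t-1)}$, and the separation argument does not go through.
\end{itemize}
The remedy is to take $N\ge 10L\,Q^{2\max_t(q_t-1)}/\delta$, which is still $\exp(O(b\sqrt L\log L))$. This is exactly why the paper sets $N=\lceil\exp(C_6b\sqrt L\log L)\rceil$ with $C_6>C_3+C_2+O(1)$. Relatedly, the conjugate bound on $\Delta$ should read $2LQ^{4(q_t-1)}$, since each summand is a product of two $F_t$'s; this only changes constants.

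\textbf{Small $b$.} \Cref{prop:codebook} is available only for sufficiently large $b$ and $L$, but the lemma is asserted for every $b\ge 1$. You must treat $b$ below the codebook threshold $b_*$ separately. The paper does this with a direct CRT encoding using $b$ primes in $(L,2L]$, with all bounds $L^{O(b)}$. Alternatively, pad each block with zeros to length $b_*$ and absorb the constant $b_*$ into $K_0$.
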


For clarity, we will occasionally denote $\alpha, \beta, V$ as $\alpha_{b, L}, \beta_{b, L}, V_{b, L}$ to make the dependence on $b, L$ explicit.
\Cref{lem:nonuiform-local-lean} has been formally verified in Lean 4 (see \Cref{app:lean} for details).

\subsection{Number Theory Preliminaries}
\label{sec:number-theory-prelims}

We begin with several standard notions from algebraic number theory, with some simple examples to illustrate the definitions.

\begin{definition}[Number field]
\label{def:number-field}
A \emph{number field} is a finite-dimensional field extension $K/\Q$.
Its degree, denoted $[K:\Q]$, is the dimension of $K$ as a vector space over $\Q$.
\end{definition}

As an example $\Q(\sqrt{2}) = \set{a + b \sqrt{2} \mid a, b \in \Q}$ is a degree-2 extension (also called a quadratic extension).
$\Q(2^{1/3})$ is a degree-3 extension (also called a cubic extension).

\begin{definition}[Real embedding]
An embedding of $K/\Q$ into $\C$ is a homomorphism $\sigma: K \hookrightarrow \C$ that fixes $\Q$.
An embedding is \emph{real} if its image is contained in the real numbers: $\sigma(K)\subseteq \mathbb R$ and \emph{complex} otherwise.
\end{definition}

\begin{definition}[Totally real and totally imaginary field]
A number field $K$ is \emph{totally real} if every embedding $\sigma:K \hookrightarrow\mathbb C$ is real. 
A number field $K$ is \emph{totally imaginary} if every embedding $\sigma:K \hookrightarrow\mathbb C$ is complex.
\end{definition}

For example, $\Q(\sqrt{2})$ is totally real (an embedding must send $\sqrt{2}$ to $\sqrt{2}$ or $-\sqrt{2}$), while $\Q(i)$ is totally imaginary ($i$ must be mapped to $i$ or $-i$).

\begin{definition}[Infinite Place]
An infinite place of $K/\Q$ into $\C$ is an equivalence class of embeddings $\sigma:K\hookrightarrow\C$, where two embeddings $\sigma$ and $\tau$ are equivalent if $\tau=\sigma$ or $\tau=\overline{\sigma}$.
An infinite place is \emph{real} if $\sigma(K) \subseteq \R$ for one (equivalently every) representative embedding $\sigma$ and \emph{complex} otherwise.
\end{definition}

We observe that for a real place, the equivalence class consists of one homomorphism, since $\sigma_1 = \overline{\sigma_2}$ implies $\sigma_1 = \sigma_2$.
For a complex place, the equivalence class consists of two homomorphisms that are complex conjugates of each other.
For a field extension $K/L$, we say an infinite place $w$ of $K$ lies above an infinite place $v$ of $L$ if the restriction to $L$ of any representative of $w$ represents the place $v$.

\begin{lemma}[{\cite[Proposition 2.1(b)]{milne2003fields}}]
    \label{lem:num-embeddings-ub}
    Let $K/\Q$ be a finite extension of degree $[K:\Q]$.
    Then, there are exactly $[K:\Q]$ embeddings of $K$ into $\C$.
\end{lemma}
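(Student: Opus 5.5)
By the primitive element theorem (which applies since $\Q$ has characteristic zero, so every finite extension is separable), I would write $K = \Q(\theta)$ for some $\theta \in K$. Let $f \in \Q[x]$ be the minimal polynomial of $\theta$ over $\Q$. Since $1, \theta, \dots, \theta^{\deg f - 1}$ is a $\Q$-basis of $K$, we get $\deg f = [K:\Q] =: n$. The plan is then to set up a bijection between embeddings $K \hookrightarrow \C$ and the complex roots of $f$, and to show that $f$ has exactly $n$ distinct roots in $\C$.

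\emph{Step 1 (injection).} An embedding $\sigma$ fixes $\Q$ and is a ring homomorphism, so it is determined by $\sigma(\theta)$: every element of $K$ is a $\Q$-polynomial in $\theta$. Moreover, $0 = \sigma(f(\theta)) = f(\sigma(\theta))$, so $\sigma(\theta)$ is a root of $f$ in $\C$. Hence $\sigma \mapsto \sigma(\theta)$ is an injective map from embeddings to roots of $f$.

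\emph{Step 2 (surjection).} Since $f$ is irreducible, the evaluation map $\Q[x] \to K$, $x \mapsto \theta$, induces an isomorphism $\Q[x]/(f) \cong K$. For any root $\zeta \in \C$ of $f$, the evaluation map $\Q[x] \to \C$, $x \mapsto \zeta$, kills $(f)$. It therefore descends to a ring homomorphism $\Q[x]/(f) \to \C$ fixing $\Q$. This map is injective because its domain is a field. Composing it with the isomorphism above gives an embedding with $\theta \mapsto \zeta$.

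\emph{Step 3 (distinct roots).} By the fundamental theorem of algebra, $f$ has $n$ roots in $\C$ counted with multiplicity, so it remains to rule out repeated roots. The derivative $f'$ is a nonzero polynomial of degree $n-1$ (characteristic zero), so $\gcd(f, f')$ is a proper divisor of $f$ in $\Q[x]$. Irreducibility of $f$ then forces $\gcd(f, f') = 1$. Hence $f$ and $f'$ share no common root in $\C$, so $f$ is separable and has exactly $n$ distinct complex roots.

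Combining the three steps gives exactly $n = [K:\Q]$ embeddings. The only step with real content is Step 3. There the point to check is that the gcd computed over $\Q$ controls common roots over $\C$; this holds because a B\'ezout identity $uf + vf' = 1$ in $\Q[x]$ remains valid over $\C$. The rest is routine, and the cited reference \cite{milne2003fields} contains exactly this argument.
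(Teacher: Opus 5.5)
Your proof is correct and is the standard argument behind the paper's citation. The paper gives no proof of its own: Milne's Proposition 2.1(b) is exactly your Steps 1--2 for a simple extension (embeddings of $\Q(\theta)$ correspond bijectively to the distinct complex roots of the minimal polynomial), and the reduction of an arbitrary finite $K/\Q$ to that case uses the primitive element theorem and characteristic-zero separability just as you do (the paper itself invokes Milne's Theorem~5.1 for the primitive element step in the proof of \Cref{lem:product-separation}).
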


\begin{definition}[CM field ({\cite[Definition A.4]{UnitDistance}})]
    \label{def:cm-field}
    A \emph{CM field} is a totally imaginary quadratic extension $K/L$ of a totally real field. 
    The nontrivial automorphism of $K/L$, also known as the complex conjugation automorphism, is denoted $c$.
    If $K = L(i)$ with $L$ totally real, then for every complex embedding $\sigma: K \hookrightarrow \C$, $\sigma(c(\alpha)) = \overline{\sigma(\alpha)}$. 
\end{definition}

For example $\Q(i)$ is a CM field and the complex conjugation automorphism maps $i$ to $-i$.
When $K = L(i)$ with $L$ totally real, since $\sigma(c(c(\alpha))) = \sigma(\alpha)$ for all $\alpha$ and $\sigma$ is an automorphism, we have $c(c(\alpha)) = \alpha$ is the identity map.

\begin{definition}[Ring of integers]
Let $K$ be a number field. Its \emph{ring of integers} is
\[
\calO_K = \set{ x\in K : x \text{ satisfies a monic polynomial with coefficients in } \Z}.
\]
\end{definition}

For example, $\calO_{\Q} = \Z$.

\begin{definition}[Ideal]
    Given a ring $R$, a subset $I \subseteq R$ is an ideal if: (i) $a + b \in I$ for all $a, b \in I$ and (ii) $r a \in I$ for all $r \in R$ and $a \in I$.
    An ideal is proper if $I \subsetneq R$.
    An ideal is principal if it is generated by a single element.
\end{definition}

Given two ideals $I, J \subseteq R$, their product $IJ$ is the ideal generated by $\set{ab \mid a \in I, b \in J}$.

\begin{definition}[Prime and Maximal ideals]
    A proper ideal $I \subsetneq R$ is prime if for all $a, b \in R$, $ab \in I$ implies that $a \in I$ or $b \in I$.
    A proper ideal $I \subsetneq R$ is maximal if there is no proper ideal $I'$ satisfying $I \subsetneq I' \subsetneq R$.
\end{definition}

For an ideal $I$, we say that a prime ideal $P$ divides $I$  if $I \subseteq P$.
Since rings of integers of number fields are Dedekind Domains (see e.g. Theorem 3.29 of \cite{milne2020algebraic}), every ideal factors uniquely into a product of prime ideals.
We use the following standard fact regarding prime ideals of the ring of integers.

\begin{fact}[Theorem 3.1 of \cite{neukirch2013algebraic}]
    \label{fact:prime-ideal-maximal}
    Every non-zero prime ideal $\mathfrak p$ of $\calO_{K}$ is maximal.
    Since quotients by maximal ideals are fields, $\calO_{K}/\mathfrak{p}$ is a field.
\end{fact}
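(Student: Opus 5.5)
The plan is to prove directly that every nonzero prime ideal $\frakp$ of $\calO_K$ is maximal, by showing that the quotient $\calO_K/\frakp$ is a finite integral domain. A finite integral domain is automatically a field, and an ideal whose quotient is a field is maximal. Three steps are needed: (i) $\frakp$ contains a nonzero rational integer; (ii) $\calO_K/\frakp$ is finite; (iii) a finite domain is a field.

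\textbf{Step (i).} Pick a nonzero $x\in\frakp$. Since $x\in\calO_K$, it satisfies a monic integer polynomial $x^m+c_{m-1}x^{m-1}+\dots+c_0=0$, which we take of minimal degree. Then $c_0\neq 0$: otherwise we could factor out $x$ and, because $K$ is a field and $x\ne 0$, obtain a monic relation of smaller degree. Rearranging gives
\[
c_0=-x\,(x^{m-1}+\dots+c_1)\in\frakp,
\]
using that $x^{m-1}+\dots+c_1\in\calO_K$ and that $\frakp$ is an ideal. Hence $\frakp\cap\Z$ contains the nonzero integer $N:=|c_0|$.

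\textbf{Step (ii).} This is the only step needing a real input: $\calO_K$ is a finitely generated abelian group of rank $[K:\Q]$, i.e.\ $\calO_K\cong\Z^{n}$ as a $\Z$-module. This standard fact comes from the existence of an integral basis, proved via the discriminant and the trace pairing. Given it, $\calO_K/N\calO_K\cong(\Z/N\Z)^n$ is finite. Since $N\calO_K\subseteq\frakp$, the quotient $\calO_K/\frakp$ is a quotient of $\calO_K/N\calO_K$ and is therefore finite. I would cite the finite-generation fact from the references the paper already uses (e.g.\ \cite{milne2020algebraic}), since proving it is the main obstacle and lies outside the scope of this fact.

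\textbf{Step (iii).} The ring $R=\calO_K/\frakp$ is an integral domain because $\frakp$ is prime, and it is nonzero because $\frakp$ is proper. For nonzero $r\in R$, the map $s\mapsto rs$ is injective, since $rs=rs'$ gives $r(s-s')=0$ and hence $s=s'$. As $R$ is finite, this map is also surjective, so $rs=1$ for some $s$ and $r$ is invertible. Thus $R$ is a field. Finally, if $\frakp\subsetneq I\subseteq\calO_K$, pick $a\in I\setminus\frakp$. Its image $\bar a$ has an inverse $\bar s$ in $R$, so $1-as\in\frakp\subseteq I$, which forces $1\in I$ and hence $I=\calO_K$. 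Therefore $\frakp$ is maximal, and the second sentence of the statement is exactly what we showed in Step (iii).
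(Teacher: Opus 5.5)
Your proof is correct. The paper gives no argument of its own and only cites Theorem 3.1 of \cite{neukirch2013algebraic}. Your Step (i) is exactly the first step of that proof: the constant term of a minimal monic relation for $0\ne x\in\mathfrak p$ lies in $\mathfrak p\cap\mathbb Z$. After that the routes differ. Neukirch does not pass through finiteness of $\mathcal O_K/\mathfrak p$. Since $\mathfrak p\cap\mathbb Z=p\mathbb Z$ is a nonzero prime, $\mathcal O_K/\mathfrak p$ is an integral domain containing $\mathbb F_p$, and every element is integral over $\mathbb F_p$ (reduce its monic integer relation mod $p$). Such a domain is a field. In effect this is your Step (i) trick run a second time inside $\mathcal O_K/\mathfrak p$: a minimal-degree monic relation for $\bar\alpha\ne 0$ over $\mathbb F_p$ has nonzero constant term, which exhibits $\bar\alpha^{-1}$. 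That route needs only integrality, so it applies to any domain integral over $\mathbb Z$. An example is the ring of all algebraic integers, whose residue fields $\overline{\mathbb F}_p$ are infinite and where your Step (ii) is unavailable. Your route instead imports the integral-basis theorem, which Neukirch uses anyway for the Noetherian part of the same theorem. In exchange you get the stronger conclusion that $\mathcal O_K/\mathfrak p$ is finite, of order at most $N^{[K:\mathbb Q]}$. The paper tacitly relies on that finiteness later, when it writes the residue degree $[\mathcal O_K/\mathfrak p:\mathbb F_q]$ and the norm $|\mathcal O_K/\mathfrak p|$.
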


To distinguish prime ideals and prime numbers, we say a prime $q$ is rational if $q \in \Z$.
In a number field $K$, the ideal generated by $q$ in $\calO_{K}$ is denoted $q \calO_{K}$. 
Note that $q \calO_{K}$ is not necessarily a prime ideal in $\calO_{K}$.
For an extension $K/L$ and prime ideal $\frakp \subseteq \calO_{L}$, observe that if a prime ideal $\frakP \subseteq \calO_{K}$ lies above the ideal generated by $\frakp$, denoted $\frakp \calO_{K}$, then since $\frakp \subseteq \frakP \cap \calO_{L}$, $\frakP \cap \calO_{L}$ is prime and therefore equal to $\frakp$.

\begin{definition}[Local ring]
Let $K$ be a number field and let $\mathfrak p$ be a prime ideal of
$\mathcal O_K$. The \emph{local ring of $K$ at $\mathfrak p$} is
\[
\mathcal O_{K,\mathfrak p}
:=
\left\{
\frac{a}{b}
:
a,b\in\mathcal O_K,\;
b\notin\mathfrak p
\right\} \text{.}
\]
\end{definition}

\begin{definition}[Residue field]
If $\mathfrak p$ is a prime ideal of $\mathcal O_K$, the field $\calO_K/\mathfrak p$ is the \emph{residue field} of $\mathfrak p$.
\end{definition}

From Chapter 1 Corollary 11.2 of \cite{neukirch2013algebraic}, we have for any prime (and therefore maximal) ideal $\frakp$, an isomorphism $\calO_{K}/\frakp \simeq \calO_{K, \frakp}/\frakp \calO_{K, \frakp}$.

\begin{definition}[Residue degree]
Let $K$ be a number field and let $q$ be a rational prime.
For a prime ideal $\mathfrak p$ above $q$, the \emph{residue degree} of $\mathfrak p$ over $q$ is $f(\mathfrak p\mid q) := [\mathcal O_K/\mathfrak p : \mathbf F_q]$.
\end{definition}

\begin{definition}[Splitting completely (as in Definition A.2 of \cite{UnitDistance})]
Let $q$ be a rational prime. 
We say that $q$ is \emph{unramified} in $K$ if
\[
q\mathcal O_K = \prod_{i = 1}^{g} \mathfrak{p}_{i}^{e(\frakp_i|q)} \text{.}
\]
where the prime ideals $\mathfrak p_i$ are distinct and the exponents (called ramification indices) $e(\frakp_{i} | q) = 1$.
We say $q$ \emph{splits completely} if it further satisfies $g = [K:\Q]$ or equivalently $f(\frakp_i | q) = 1$ i.e. each residue field is
\[
\mathcal O_K/\mathfrak p_i \cong \mathbb F_q \text{.}
\]
\end{definition}

As an example, $7$ splits completely in $\Q(\sqrt{2})$, since $(3 + \sqrt{2})(3 - \sqrt{2}) \equiv 0 \pmod{7}$.
Furthermore, for the prime ideal $(7, 3 - \sqrt{2})$, we can see that any element in $\calO_{K}$ is $a + b \sqrt{2} \equiv a + 3 b \pmod{(7, 3 - \sqrt{2})}$ and thus an element of $\F_{7}$.
The following definition will only be used in \Cref{sec:arithmetic} but we state it here as it depends on the above definitions.

\begin{definition}[Definition A.2 of \cite{UnitDistance}]
    \label{def:unramified}
    An \emph{infinite ramification} of an extension $K/F$ is a real place $v: F \rightarrow \C$ such that there is a complex place $w: K \rightarrow \C$ lying above $v$.
    An extension $K/F$ is \emph{everywhere unramified} if there is no finite ramification (i.e. all finite primes of $F$ are unramified) and no infinite ramification.
\end{definition}

We require some additional preliminaries regarding prime ideals.

\begin{definition}[Conjugate prime ideals]
    \label{def:conjugate-prime}
Let $c$ be the complex conjugation automorphism of a CM field $K$.
If $\mathfrak p$ is a prime ideal of $\mathcal O_K$, then $c \mathfrak p = \set{c(x) \mid x\in\mathfrak p}$ is also a prime ideal.
The pair $\set{\mathfrak p,c\mathfrak p}$ is called a \emph{conjugate pair of prime ideals}.
\end{definition}

The fact that prime ideals are preserved under automorphism is standard in commutative algebra (see e.g. Chapter 1 of \cite{neukirch2013algebraic} or \cite{atiyah2018introduction}).

\begin{definition}[Valuation at a prime ideal (see e.g. Chapter 3 of \cite{milne2020algebraic})]
    \label{def:valuation}
    Let $\mathfrak p$ be a prime ideal of $\mathcal O_K$.
    For a nonzero element $x\in K^\times$, the valuation $\operatorname{val}_{\mathfrak p}(x)$ is the exponent of $\mathfrak p$ appearing in the factorization of the principal ideal $(x)$.
\end{definition}

For example, if $q$ splits completely, then $\val_{\frakp}(q) = 1$ for all $\frakp$ dividing $q$ and $0$ otherwise.
Note that valuation satisfies: 
\[\val_{\frakp}(xy) = \val_{\frakp}(x) + \val_{\frakp}(y)\text{, } \val_{\frakp}(x + y) \geq \min(\val_{\frakp}(x), \val_{\frakp}(y))\text{.}\]
We will require the following standard lemma.

\begin{lemma}
    \label{lem:valuation-complex}
    Let $K$ be a CM field with complex conjugation automorphism $c$ and $\frakp$ a prime in $\calO_{K}$.
    Then, for any $z \in K^{\times}$,
    \begin{equation*}
        \val_{\frakp}(c(z)) = \val_{c \frakp}(z) \text{.}
    \end{equation*}
\end{lemma}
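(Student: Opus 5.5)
The plan is to work with principal fractional ideals and exploit that $c$ is a ring automorphism of $\calO_K$. First I check that $c$ maps $\calO_K$ to itself. If $x$ satisfies a monic integer polynomial $P$, then applying the field automorphism $c$, which fixes $\Q$, gives $P(c(x)) = c(P(x)) = 0$. So $c(\calO_K) \subseteq \calO_K$. Since $c \circ c = \id$, this is an equality, and $c$ restricts to a ring automorphism of $\calO_K$.

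Next I show that $c$ respects ideal arithmetic. For ideals $I, J$, I have $c(IJ) = c(I)c(J)$, since $c$ sends the generators $ab$ to $c(a)c(b)$ and preserves sums and $\calO_K$-multiples (because $c(\calO_K)=\calO_K$). Also $c(I)$ is prime whenever $I$ is prime, as recorded after \cref{def:conjugate-prime}. For a principal ideal, $c(x\calO_K) = c(x)\calO_K$. Both statements extend to fractional ideals. One can also avoid fractional ideals by writing $z = x/y$ with $x,y \in \calO_K\setminus\{0\}$ and using $\val_\frakp(z) = \val_\frakp(x) - \val_\frakp(y)$, together with $c(z) = c(x)/c(y)$. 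So it suffices to treat $z \in \calO_K \setminus \{0\}$.

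For nonzero $z \in \calO_K$, write the unique factorization $z\calO_K = \prod_i \frakp_i^{e_i}$. Applying $c$ gives
\begin{equation*}
    c(z)\calO_K = \prod_i (c\frakp_i)^{e_i},
\end{equation*}
and the $c\frakp_i$ are again distinct prime ideals because $c$ is a bijection on ideals. By uniqueness of factorization in the Dedekind domain $\calO_K$, the exponent of a prime $\frakp$ in $c(z)\calO_K$ is $e_i$ if $\frakp = c\frakp_i$, and $0$ otherwise. Since $c$ is an involution, $\frakp = c\frakp_i$ is equivalent to $c\frakp = \frakp_i$. So the exponent of $\frakp$ in $c(z)\calO_K$ equals the exponent of $c\frakp$ in $z\calO_K$, which is exactly $\val_\frakp(c(z)) = \val_{c\frakp}(z)$.

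The main point needing care is multiplicativity, $c(IJ)=c(I)c(J)$, together with the bijectivity of $c$ on prime ideals; uniqueness of factorization then finishes the argument. All of this is routine once $c(\calO_K)=\calO_K$ is established.
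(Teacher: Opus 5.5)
Your proposal is correct and follows essentially the same route as the paper: you factor $(z)$ into prime ideals, apply $c$ to get the factorization of $(c(z))$ in terms of the primes $c\frakp_i$, and conclude by uniqueness of factorization together with $c = c^{-1}$. You also verify the routine facts the paper takes for granted ($c(\calO_K)=\calO_K$, multiplicativity of $c$ on ideals, and the reduction from $K^\times$ to $\calO_K\setminus\{0\}$), which makes the argument more complete without changing it.
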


\begin{proof}[Proof of \Cref{lem:valuation-complex}]
    Consider the unique factorization of $(z)$ into prime ideals:
    \begin{equation*}
        (z) = \prod_{\frakp_i} \frakp_i^{\val_{\frakp_i}(z)} \text{.}
    \end{equation*}
    Applying the automorphism $c$ we have
    \begin{equation*}
        (c(z)) = c((z)) = \prod_{\frakp_i} (c \frakp_i)^{\val_{\frakp_i}(z)}
    \end{equation*}
    where each $(c \frakp_i)$ is a prime ideal.
    In particular, we have $\val_{c\frakp_i}(c(z)) = \val_{\frakp_i}(z)$. 
    Observe that $c = c^{-1}$ (see discussion below \Cref{def:cm-field}).
    Setting $\frakp := (c\frakp_i)$, we have $\val_{\frakp}(c(z)) = \val_{c\frakp}(z)$, for any $\frakp$ with nonzero valuation in $(c(z))$.
    Otherwise, if $\frakp$ has zero valuation in $(c(z))$, then $c \frakp$ also has zero valuation in $(z)$, preserving the equality.
\end{proof}

\subsection{Proof of \texorpdfstring{\Cref{lem:nonuiform-local-lean}}{Lemma~\ref*{lem:nonuiform-local-lean}}}

We will prove \Cref{lem:nonuiform-local-lean} assuming the following \Cref{prop:codebook}, with all necessary preliminaries introduced above.
This section contains a proof of \Cref{lem:nonuiform-local-lean} assuming \Cref{prop:codebook}.
See \Cref{sec:arithmetic} for a proof of \Cref{prop:codebook}.

\begin{restatable}[Medium-Prime Norm-One Codebook]{proposition}{NumberTheoryProp}
\label{prop:codebook}
    There exist absolute constants $C_0>0$ and $T \geq 16$ such that the following holds for all sufficiently large $L$ and $b$.  Put $Y:=\ceil{4\sqrt L}$.
    There exist:
    \begin{enumerate}[label=(\roman*)]
        \item a CM field $K$ with complex conjugation $c$ and degree $[K:\Q]\le C_0b$;
        \item $T$ distinct rational primes $q_1,\ldots,q_T\in [Y,2Y]$, each splitting completely in $K$, and $Q:=\prod_{t=1}^Tq_t$;
        \item for each block bit $j\in[b]$, two prime ideals
        \[
              \mathfrak p_{j,1},\mathfrak p_{j,2}\subseteq \OO_K
        \]
        lying above two distinct rational primes $q_{j,1},q_{j,2}\in\{q_1,\ldots,q_T\}$, with $q_{j,1}q_{j,2}>L$;
        \item for every $a\in\{0,1\}^b$, an element $u_a\in K^\times$;
    \end{enumerate}
    with the following properties.  For every $a\in\{0,1\}^b$,
    \begin{align}
       u_ac(u_a)&=1, \nonumber \\
       |\sigma(u_a)|&=1 \quad\text{for every complex embedding }\sigma:K\hookrightarrow\C, \nonumber \\
       Q^2u_a&\in\OO_K. \label{eq:denom}
    \end{align}
    Moreover, for every selected prime ideal $\mathfrak p_{j,r}$,
    \begin{align}
       \val_{\mathfrak p_{j,r}}(u_a)&=2a_j \label{eq:val1} \\
       \val_{c\mathfrak p_{j,r}}(u_a)&=-2a_j \text{.} \label{eq:val2}
    \end{align}
    Because each selected rational prime splits completely in $K$, one also has
    \[
            \OO_K/\mathfrak p_{j,r}\cong \F_{q_{j,r}}.
    \]
\end{restatable}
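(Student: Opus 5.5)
The plan is to build $K$ as a CM field $K=F(i)$ over a totally real field $F$, and to realise each $u_a$ as a product of ``norm-one quotients'' $\pi/c(\pi)$ of generators of the selected prime ideals. Every property in the statement should then follow from elementary valuation bookkeeping, except the existence of the field itself.

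\textbf{Step 1 (the field).} Fix $T\ge 16$ and choose $T$ rational primes $q_1,\dots,q_T\in[Y,2Y]$; the Prime Number Theorem (\Cref{thm:prime-number}) gives enough of them for large $L$. Since $Y^2\ge 16L$, any two of these primes satisfy $q_{j,1}q_{j,2}>L$, which handles that condition in (iii). I would then use the number-field construction from \cite{UnitDistance}: an everywhere unramified (in the sense of \Cref{def:unramified}) tower over a suitable totally real base, obtained via Golod--Shafarevich. The field I need from it is $F$ totally real with the following properties:
\begin{enumerate}[label=(\alph*)]
\item $[F:\Q]\asymp b$;
\item every $q_t$ splits completely in $F$ and in $K=F(i)$;
\item the prime ideals of $K$ above the $q_t$ that I select are principal.
\end{enumerate}
Splitting in $K$ is arranged by taking $q_t\equiv 1\pmod 4$. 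Principality is where the tower is used: ideals of one layer become principal in a later layer (principal ideal theorem). Bounded root discriminant lets one stop at a layer of degree $\Theta(b)$, while the number of degree-one primes above each $q_t$ is $[K:\Q]=\Theta(b)\ge 4b/T$.

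\textbf{Step 2 (selecting primes).} Above each $q_t$ there are $[K:\Q]$ distinct primes, which come in conjugate pairs $\{\frakp,c\frakp\}$. These pairs are distinct because $q_t$ splits completely in $K/F$, so $\frakp\neq c\frakp$. I assign to each $j\in[b]$ two primes $\frakp_{j,1},\frakp_{j,2}$ lying above distinct $q_{j,1}\neq q_{j,2}$, using each conjugate pair at most once. Thus all $4b$ ideals $\frakp_{j,r}$ and $c\frakp_{j,r}$ are pairwise distinct. Let $\frakp_{j,r}=(\pi_{j,r})$, which is possible by (c), and define
\[
u_a:=\prod_{j:\,a_j=1}\ \prod_{r=1,2}\Bigl(\frac{\pi_{j,r}}{c(\pi_{j,r})}\Bigr)^{2}.
\]

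\textbf{Step 3 (verification).} The identity $u_a c(u_a)=1$ holds because $c$ is an involution. Combined with $\sigma\circ c=\overline{\sigma}$ (\Cref{def:cm-field}), it gives $|\sigma(u_a)|^2=\sigma(u_a)\overline{\sigma(u_a)}=1$. For the valuations, \Cref{lem:valuation-complex} gives $\val_{\frakp}(c(\pi))=\val_{c\frakp}(\pi)$. Since all the ideals are distinct, the factor indexed by $(j,r)$ contributes $+2$ at $\frakp_{j,r}$, $-2$ at $c\frakp_{j,r}$, and $0$ at every other selected ideal; this yields \eqref{eq:val1} and \eqref{eq:val2}.

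For the denominator bound \eqref{eq:denom}, fix $t$. The ideals $c\frakp_{j,r}$ lying above $q_t$ are distinct prime factors of the squarefree ideal $q_t\OO_K$. Hence their product divides $q_t$, so the product of their squares divides $q_t^2$. Only these primes occur with negative valuation in $u_a$, so $Q^2u_a\in\OO_K$. Finally, $\OO_K/\frakp_{j,r}\cong\F_{q_{j,r}}$ is the definition of splitting completely.

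\textbf{Main obstacle.} Step 1 is the hard part. I must get principality of the selected primes, or at least of their squares, without the degree leaving the range $O(b)$. Capitulation in a Hilbert class field may multiply the degree by an uncontrolled amount.

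What I would try first is to take the tower from \cite{UnitDistance} and use the level at which the relevant ideals capitulate, arguing that the degree of each layer over the previous one can be chosen bounded. If that fails, there is a cheaper route. It only needs the classes of $\frakp_{j,r}c\frakp_{j,r}^{-1}$ to be $2$-torsion, which is exactly what the exponent $2$ in the statement permits: if $(\frakp c\frakp^{-1})^2=(\gamma)$, I would use $\gamma/c(\gamma)$ in place of $(\pi/c(\pi))^2$. For this I would aim for a field in which the minus part of the class group has exponent $2$.
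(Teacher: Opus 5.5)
\textbf{There is a genuine gap in Step 1.} Your Steps 2 and 3 are correct as conditional bookkeeping, but all the content is in Step 1, and property (c) is not something the tower supplies.

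\emph{Capitulation cannot deliver principality.} The principal ideal theorem capitulates classes only in the Hilbert class field, and by class field theory a prime $\frakp$ of $K$ splits completely there only if $\frakp$ is already principal. So going up to capitulate non-principal primes above $q_t$ destroys the complete splitting of $q_t$. It also destroys the residue fields $\F_{q_t}$ that the Fermat step in \Cref{lem:anticancel} needs. In smaller unramified extensions you have no control over simultaneous capitulation of $2b$ classes. Moreover, passing to the Hilbert class field multiplies the degree by up to $h(K)$, which is only bounded by $H^{f}$ and hence potentially exponential in $b$. The bound $[K:\Q]\le C_0b$ is essential, because \Cref{lem:product-separation} loses a factor $\exp(-C_4([K:\Q]-1)\sqrt L\log L)$.

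\emph{The fallback also fails.} Nothing supports a minus class group of exponent $2$. Even granting $(\frakp\,c\frakp^{-1})^2=(\gamma)$, the norm-one element $\gamma/c(\gamma)$ has valuation $4$ at $\frakp$ and $-4$ at $c\frakp$. That violates \eqref{eq:val1} and \eqref{eq:denom} as stated.

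\emph{The choice of primes is also a gap.} Choosing the $q_t$ by the Prime Number Theorem is not enough, since arbitrary primes need not split in any layer of a tower. The paper fixes the base $F_\ell$ once and uses Chebotarev in the fixed field $N_\ell$ to pick $q_t\in[Y,2Y]$ splitting completely there. This forces $q_t\equiv 1\pmod 4$ and puts their Frobenius elements in $\Phi(G_\ell)$. It then quotients $G_\ell$ by the closed normal subgroup generated by these $3T$ Frobenius elements, and Golod--Shafarevich keeps the quotient infinite. Layers of degree $3\cdot 3^i$ come from normal subgroups of index $3^i$. The root-discriminant bound is used for the class number, not to control the degree.

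\textbf{The missing idea is that no individual prime needs to be principal.} The paper uses only $h(K)\le H^{f}$, where $[K:\Q]=2f$ and $H$ is absolute. It takes $T$ to be a large constant with $T\log 2-\log H>TH_2(4/T)$; this, not $q_{j,1}q_{j,2}>L$, is what $T$ is for.
\begin{itemize}
\item The $2^{Tf}$ ideals $\frakA_\eps=\prod_m\frakP_m^{\eps_m}(c\frakP_m)^{1-\eps_m}$ fall into at most $H^f$ classes. So one class contains a family $\calF$ with $|\calF|\ge e^{\gamma f}$.
\item For $\eps,\eta\in\calF$ one has $\frakA_\eps\frakA_\eta^{-1}=(v_\eps)$. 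Then $u_\eps=v_\eps/c(v_\eps)$ has norm one, valuations $\pm 2(\eps_m-\eta_m)$ at $\frakP_m$ and $c\frakP_m$, and satisfies $Q^2u_\eps\in\OO_K$.
\item Since $|\calF|$ exceeds the Sauer--Shelah bound $e^{TfH_2(4/T)}$, $\calF$ shatters at least $(3C_{\rm ar}+2)b$ coordinates.
\item From these one extracts $b$ pairs with distinct rational-prime labels and chooses $\frakp_{j,r}\in\set{\frakP_m,c\frakP_m}$ according to $\eta_m$. Shattering then gives $\eps(a)\in\calF$ realizing every pattern $a\in\set{0,1}^b$, and one sets $u_a:=u_{\eps(a)}$.
\end{itemize}
Your explicit product formula for $u_a$ has to be replaced by this pigeonhole-plus-shattering selection from an exponentially large principal fiber.
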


Fix parameters $b, L$ with $L$ sufficiently large and let $K, q_1, \dots, q_T$ be guaranteed by \Cref{prop:codebook}. 
Let
\[
     E_t:=q_t-1\qquad (t=1,\ldots,T).
\]
Since $q_t = \Theta(\sqrt{L})$ and $T$ is constant,
\begin{equation}\label{eq:Qbound}
      Q=\prod_{t=1}^Tq_t=L^{O(1)},\qquad
      Q^{O(E_t)}=\exp(O(\sqrt L\log L)).
\end{equation}

For a block $a\in\{0,1\}^b$, let $u_{a}$ be guaranteed by \Cref{prop:codebook} and define
\begin{equation*}
      F_t(a):=Q^{2E_t}\,c(u_a^{E_t})\in K.
\end{equation*}
By \eqref{eq:denom}, we have $Q^2 u_a \in \calO_{K}$ and therefore $c(Q^2 u_a) = Q^2 c(u_a) \in \calO_{K}$ since $c$ is an embedding fixing $\Q$ and therefore $Q^2$. 
In particular, $F_t(a) = Q^{2 E_t} c(u_a)^{E_t} = (Q^2 c(u_a))^{E_t} \in\OO_K$.
For block vectors
\[
      x=(x_1,\ldots,x_L),\qquad y=(y_1,\ldots,y_L),
      \qquad x_i,y_i\in\{0,1\}^b,
\]
define
\begin{equation*}
      K_t(x,y):=\sum_{i=1}^L F_t(x_i)F_t(y_i)\in\OO_K \text{.}
\end{equation*}

\begin{lemma}[Valuation Table]\label{lem:valuation-table}
Let $\mathfrak p_{j,r}$ lie above the rational prime $q_t$.  Then, for every block $a$,
\[
      \val_{\mathfrak p_{j,r}}(F_t(a))=2E_t(1-a_j).
\]
Consequently, for any blocks $a, b$,
\[
      \val_{\mathfrak p_{j,r}}(F_t(a)F_t(b)) = 2E_t(2-a_j-b_j).
\]
In particular, this product has valuation $0$ exactly when $a_j=b_j=1$, and has valuation at least $2E_t$ otherwise.
\end{lemma}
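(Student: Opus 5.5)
The plan is to compute the valuation directly from the definition $F_t(a)=Q^{2E_t}\,c(u_a)^{E_t}$, using multiplicativity of $\val_{\mathfrak p_{j,r}}$. Here $c(u_a^{E_t})=c(u_a)^{E_t}$ because $c$ is a ring automorphism. Additivity of valuations then gives
\[
\val_{\mathfrak p_{j,r}}(F_t(a)) = 2E_t\,\val_{\mathfrak p_{j,r}}(Q) + E_t\,\val_{\mathfrak p_{j,r}}(c(u_a)),
\]
so it suffices to show $\val_{\mathfrak p_{j,r}}(Q)=1$ and $\val_{\mathfrak p_{j,r}}(c(u_a))=-2a_j$.

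For the first quantity, write $Q=\prod_{s}q_s$. Since $\mathfrak p_{j,r}$ lies above $q_t$ and $q_t$ splits completely, it is unramified, so $\val_{\mathfrak p_{j,r}}(q_t)=1$. For $s\neq t$, the prime ideal $\mathfrak p_{j,r}$ does not divide $q_s$. Indeed, $\mathfrak p_{j,r}\cap\Z=q_t\Z$ is a prime of $\Z$ and $q_s\notin q_t\Z$, so $\val_{\mathfrak p_{j,r}}(q_s)=0$. Summing gives $\val_{\mathfrak p_{j,r}}(Q)=1$.

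For the second quantity, apply \Cref{lem:valuation-complex} to get $\val_{\mathfrak p_{j,r}}(c(u_a))=\val_{c\mathfrak p_{j,r}}(u_a)$, which equals $-2a_j$ by \eqref{eq:val2}. As a consistency check, one can instead use $c(u_a)=u_a^{-1}$ (from $u_a c(u_a)=1$) together with \eqref{eq:val1}, which gives the same value $-2a_j$. Combining the two computations yields $2E_t - 2E_t a_j = 2E_t(1-a_j)$.

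The consequence follows by additivity: the product has valuation $2E_t(1-a_j)+2E_t(1-b_j)=2E_t(2-a_j-b_j)$. Because $a_j,b_j\in\{0,1\}$, the factor $2-a_j-b_j$ lies in $\{0,1,2\}$. It equals $0$ exactly when $a_j=b_j=1$, and otherwise it is at least $1$, so the valuation is at least $2E_t$.

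There is no real obstacle in this argument. The only point needing care is justifying $\val_{\mathfrak p_{j,r}}(q_s)=0$ for $s\ne t$, which uses the fact that $\mathfrak p_{j,r}$ lies above exactly one rational prime, namely $q_t$.
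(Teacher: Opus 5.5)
Your proposal is correct and follows the paper's proof essentially step for step. Both expand $\val_{\frakp_{j,r}}(F_t(a)) = 2E_t\val_{\frakp_{j,r}}(Q) + E_t\val_{\frakp_{j,r}}(c(u_a))$ and get $\val_{\frakp_{j,r}}(Q)=1$ from unramifiedness of $q_t$. Both then get $-2a_j$ from \Cref{lem:valuation-complex} and \eqref{eq:val2}. The only cosmetic difference is that you justify $\val_{\frakp_{j,r}}(q_s)=0$ for $s\neq t$ via $\frakp_{j,r}\cap\Z=q_t\Z$, while the paper uses a Bezout identity; this is the same fact.
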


\begin{proof}
Since the rational prime $q_t$ splits completely in $K$, it is unramified; hence $\val_{\mathfrak p_{j,r}}(q_t)=1$. 
For any other rational prime $q$ dividing $Q$, we claim $\val_{\frakp_{j,r}}(q) = 0$ so that $\val_{\mathfrak p_{j,r}}(Q)=1$. 
This follows as $\gcd(q, q_t) = 1$ so that Bezout's identity guarantees integers $a, b$ such that $aq + bq_t = 1$.
In other words, if $q, q_t \in \frakp_{j, r}$, then $\frakp_{j, r}$ is not a proper ideal, a contradiction.
By \eqref{eq:val2} and \Cref{lem:valuation-complex},
\[
     \val_{\mathfrak p_{j,r}}(c(u_a))=-2a_j.
\]
Therefore,
\[
     \val_{\mathfrak p_{j,r}}(F_t(a)) = E_t(2 \val_{\frakp_{j,r}}(Q) + \val_{\frakp_{j,r}}(c(u_a))) = E_t(2- 2 a_j) = 2E_t(1-a_j) \text{.}
\]
The product formula follows by additivity of valuations.
\end{proof}

\begin{lemma}[Multi-Characteristic Anti-Cancellation]\label{lem:anticancel}
For $x,y\in\{0,1\}^{bL}$, written in $L$ blocks of length $b$, the following are equivalent:
\[
     x\cdot y=0
     \quad\Longleftrightarrow\quad
     \text{for all }t\text{ and all selected }\mathfrak p_{j,r}\mid q_t,
     \;\val_{\mathfrak p_{j,r}}(K_t(x,y))\ge 2E_t .
\]
In particular, if $x\cdot y=0$ and $x'\cdot y'>0$, then
\[
      (K_1(x,y),\ldots,K_T(x,y))
      \ne
      (K_1(x',y'),\ldots,K_T(x',y')).
\]
\end{lemma}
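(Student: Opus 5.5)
We prove the two directions of the equivalence separately, working one selected prime ideal at a time; \Cref{lem:valuation-table} does most of the work, together with Fermat's little theorem in the residue field $\OO_K/\frakp_{j,r}\cong\F_{q_t}$. Throughout, fix $t$ and a selected $\frakp_{j,r}$ lying above $q_t$.

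\emph{Forward direction.} Suppose $x\cdot y=0$. Then for every block $i$ we cannot have $(x_i)_j=(y_i)_j=1$. By \Cref{lem:valuation-table}, $\val_{\frakp_{j,r}}(F_t(x_i)F_t(y_i))\ge 2E_t$ for every $i$. The ultrametric inequality then gives $\val_{\frakp_{j,r}}(K_t(x,y))\ge 2E_t$. If $K_t(x,y)=0$, the valuation is $+\infty$ by convention, so the condition still holds.

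\emph{Reverse direction.} Suppose $x\cdot y>0$. Pick a bit $j$ such that the count
\[
k:=|\{i:(x_i)_j=(y_i)_j=1\}|
\]
satisfies $1\le k\le L$. Since $q_{j,1}q_{j,2}>L\ge k$ and the two primes are distinct, they cannot both divide $k$. Choose $r$ with $q_{j,r}\nmid k$, and let $t$ be the index with $q_t=q_{j,r}$. Now reduce $K_t(x,y)$ modulo $\frakp_{j,r}$, which makes sense because $K_t(x,y)\in\OO_K$.
\begin{itemize}
\item For the $k$ contributing blocks, $F_t(x_i)F_t(y_i)=(Q^4c(u_{x_i})c(u_{y_i}))^{E_t}$. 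The base $Q^4c(u_{x_i})c(u_{y_i})$ lies in $\OO_K$ and has valuation $0$ at $\frakp_{j,r}$ by \Cref{lem:valuation-table}. So its image in $\F_{q_t}$ is a nonzero element $g$, and Fermat gives $g^{q_t-1}=1$.
\item For every other block, the term has valuation $\ge 2E_t\ge1$, so it is $\equiv 0$.
\end{itemize}
Hence $K_t(x,y)\equiv k\pmod{\frakp_{j,r}}$. Because $\frakp_{j,r}\cap\Z=q_t\Z$ and $q_t\nmid k$, this residue is nonzero. Therefore $\val_{\frakp_{j,r}}(K_t(x,y))=0<2E_t$, which violates the condition; here $E_t\ge1$ since $q_t\ge Y\ge 2$.

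\emph{The ``in particular'' statement.} The valuation condition depends only on the tuple $(K_1,\dots,K_T)$. By the equivalence, it holds for $(x,y)$ with $x\cdot y=0$ and fails for $(x',y')$ with $x'\cdot y'>0$, so the two tuples must differ.

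The only delicate point is the reverse direction. Writing the term as an $E_t$-th power of an integral element is what lets us apply Fermat in the residue field, and the two-prime choice is needed so that the count $k$ does not vanish modulo $q_t$.
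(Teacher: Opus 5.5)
Your proof is correct and follows the paper's argument. The forward direction uses the ultrametric inequality. For the converse, you choose $q_{j,r}\nmid k$ and apply Fermat's little theorem in $\OO_K/\frakp_{j,r}\cong\F_{q_t}$ to get $K_t(x,y)\equiv k$. The one difference is a small streamlining: you apply Fermat directly to the integral base $Q^4c(u_{x_i})c(u_{y_i})\in\OO_K\setminus\frakp_{j,r}$, which avoids the paper's detour through the local ring $\OO_{K,\frakp}$ and the separate factor $(Q')^{2E_t}$.
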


\begin{proof}
If $x\cdot y=0$, then for every block $i \in [L]$ and bit $j \in [b]$ we never have $x_{i,j}=y_{i,j}=1$.
By \Cref{lem:valuation-table}, every summand $F_t(x_i)F_t(y_i)$ has valuation at least $2E_t$ at every selected prime over $q_t$. 
Hence the sum $K_t(x,y)$ has valuation at least $2E_t$ at every such prime, since the valuation of the sum is at least the minimum of the valuations (see discussion below \Cref{def:valuation}).

Conversely suppose $x\cdot y>0$.  Then there exists a bit $j$ for which
\[
      k:=\#\{i\in[L]:x_{i,j}=y_{i,j}=1\}
\]
satisfies $1\le k\le L$.  
The bit $j$ has two selected prime ideals $\mathfrak p_{j,1},\mathfrak p_{j,2}$ lying over distinct rational primes $q_{j,1}$ and $ q_{j,2}$ with $q_{j,1}q_{j,2}>L$.  
Therefore $k$ is not divisible by at least one of $q_{j,1}, q_{j,2}$ (otherwise $k \geq q_{j, 1} q_{j, 2}$).
Choose such a prime and write it as $q_t$, with corresponding selected prime ideal $\mathfrak p=\mathfrak p_{j,r}$.

For a bad block, $a_{j}=1$, \Cref{lem:valuation-table} implies that the element $F_t(a)$ has valuation $0$ at $\mathfrak p$.
Since $q_t$ splits completely in $K$ and $\frakp$ lies above $q_t$, we also have $\val_{\frakp}(q_t) = 1$.
Furthermore, since $a_j=1$, \Cref{prop:codebook} gives 
\begin{equation*}
    \val_{\mathfrak p}(q_t^2c(u_a)) = 2 \val_{\frakp}(q_t) + \val_{\frakp}(c(u_a)) = 2 - 2 a_j = 0 \text{.}
\end{equation*}
Next, we claim $q_t^2 c(u_a)$ is in the local ring $\calO_{K, \frakp}$.
Write $Q = q_t Q'$ so that $Q' \not\in \frakp$ (as argued in \Cref{lem:valuation-table}, $\frakp$ does not divide any other rational prime in $Q$) and $(Q')^2 \not\in \frakp$.
Then, $q_t^2 c(u_a) = (Q^2 c(u_a))/(Q')^2 \in \calO_{K, \frakp}$.
Since the valuation of $q_t^2 c(u_a)$ at $\frakp$ is $0$, we also have $q_t^2c(u_a) \in \calO_{K,\mathfrak p}^{\times}$
and has a well-defined nonzero residue modulo $\frakp$.
Reducing modulo $\mathfrak p$ and observing that $\calO_{K,\frakp}/\frakp \calO_{K,\frakp} \simeq \calO_{K}/\frakp \simeq \F_{q_t}$ since $q_t$ splits completely, we have that $q_t^2 c(u_a)$ is a non-zero element in $\F_{q_t}$.
Fermat's Little Theorem then implies that 
\begin{equation*}
    (q_t^2 c(u_a))^{E_t} \equiv (q_t^2 c(u_a))^{q_t - 1} \equiv 1 \pmod{\frakp} \text{.}
\end{equation*}
Also $(Q')^{2E_t} \equiv ((Q')^{E_t})^{2} \equiv 1^2 \equiv 1 \pmod{\mathfrak p}$.  Hence $F_t(a)\equiv1\pmod{\mathfrak p}$, and similarly $F_t(b)\equiv1\pmod{\mathfrak p}$ for the right block $b$ when $b_j=1$.  Thus every bad product satisfies
\[
      F_t(a)F_t(b)\equiv1\pmod{\mathfrak p}.
\]
The other summands are divisible by $\mathfrak p$, since the valuation is at least $2 E_t$.
Therefore
\[
      K_t(x,y)\equiv k\pmod{\mathfrak p}.
\]
Since the residue characteristic is $q_t$ and $q_t\nmid k$, this residue is nonzero.  Hence
\[
      \val_{\mathfrak p}(K_t(x,y))=0<2E_t.
\]
This proves the equivalence and the separation of orthogonal from nonorthogonal algebraic value vectors.
\end{proof}

\subsubsection{Constructing the Reduction}

We now proceed to describe the reduction.
First, we embed binary vectors into the complex vectors with entries in $\C$.
Fix one complex embedding $\sigma:K\hookrightarrow\C$.  For $a\in\{0,1\}^b$, set
\[
      z_t(a):=\sigma(F_t(a)).
\]
Then
\[
      \sigma(K_t(x,y))=\sum_{i=1}^L z_t(x_i)z_t(y_i).
\]
Because $|\tau(u_a)|=1$ for every embedding $\tau$, \eqref{eq:Qbound} gives an absolute constant $C_2$ such that
\begin{equation}\label{eq:blockarch}
      |\tau(F_t(a))|\le \exp(C_2\sqrt L\log L)
\end{equation}
for every $t,a,\tau$.

\begin{lemma}[Product-Formula Separation]\label{lem:product-separation}
There is an absolute constant $C_3$ such that if
\[
      (K_1(x,y),\ldots,K_T(x,y))\ne (K_1(x',y'),\ldots,K_T(x',y')),
\]
then, for some $t$,
\[
      |\sigma(K_t(x,y)-K_t(x',y'))|
      \ge \exp(-C_3b\sqrt L\log L).
\]
\end{lemma}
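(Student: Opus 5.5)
Pick an index $t$ with $K_t(x,y)\ne K_t(x',y')$; one exists by hypothesis. Set $\Delta:=K_t(x,y)-K_t(x',y')$. Each $K_t$ lies in $\OO_K$, so $\Delta$ is a nonzero element of $\OO_K$. The plan is to use the product formula, i.e.\ integrality of the norm. We bound $|\tau(\Delta)|$ from above at every embedding $\tau$. We then use that $\prod_\tau|\tau(\Delta)|=|\Norm_{K/\Q}(\Delta)|\ge 1$ to get a lower bound at the single fixed embedding $\sigma$.

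\emph{Step 1: integrality of the norm.} The product $\prod_{\tau:K\hookrightarrow\C}\tau(\Delta)$ is the norm $\Norm_{K/\Q}(\Delta)$. Up to sign, it is a power of the constant coefficient of the minimal polynomial of $\Delta$. That coefficient is a nonzero integer, since $\Delta\in\OO_K\setminus\{0\}$ has a monic integer minimal polynomial with nonzero constant term. Hence $\prod_\tau|\tau(\Delta)|\ge 1$. By \Cref{lem:num-embeddings-ub} there are exactly $[K:\Q]\le C_0 b$ embeddings.

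\emph{Step 2: archimedean upper bounds.} For every embedding $\tau$ we have $\tau(K_t(x,y))=\sum_{i=1}^L\tau(F_t(x_i))\tau(F_t(y_i))$. By \eqref{eq:blockarch} this gives $|\tau(K_t(x,y))|\le L\exp(2C_2\sqrt L\log L)$, and the same bound holds for $(x',y')$. So $|\tau(\Delta)|\le 2L\exp(2C_2\sqrt L\log L)\le \exp(C_4\sqrt L\log L)$ for an absolute constant $C_4$ and $L$ large.

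\emph{Step 3: combine.} Write $1\le|\sigma(\Delta)|\prod_{\tau\ne\sigma}|\tau(\Delta)|$ and apply Step 2 to the at most $C_0b-1$ other embeddings. This gives
\[
|\sigma(\Delta)|\ge \exp\bigl(-(C_0b-1)C_4\sqrt L\log L\bigr)\ge \exp(-C_3 b\sqrt L\log L)
\]
with $C_3:=C_0C_4$.

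I expect the main subtlety to be Step 1. One must make sure the relevant product runs over all $[K:\Q]$ embeddings, including both members of each conjugate pair, and equals $\pm$ an integer. This is the standard fact that the field norm is the product of all embeddings; it can be justified via the minimal polynomial raised to the power $[K:\Q(\Delta)]$. Everything else is routine bookkeeping of the constants.
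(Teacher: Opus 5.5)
Your proposal is correct and follows essentially the same route as the paper's proof. Both pick $t$ with $\Delta\neq0$, bound every $|\tau(\Delta)|$ by $2L\exp(2C_2\sqrt L\log L)\le\exp(C_4\sqrt L\log L)$ via \eqref{eq:blockarch}, and use integrality of $\Norm_{K/\Q}(\Delta)$ (justified through the minimal polynomial, each root occurring $[K:\Q(\Delta)]$ times). Both then isolate $\sigma$ and use $[K:\Q]\le C_0 b$ to conclude.
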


\begin{proof}
Choose $t$ with $\Delta:=K_t(x,y)-K_t(x',y')\ne0$.  By construction $\Delta\in\OO_K$.  For every embedding $\tau$,
\[
      |\tau(\Delta)|\le 2L\exp(2C_2\sqrt L\log L)
      \le \exp(C_4\sqrt L\log L)
\]
for large $L$ after enlarging $C_4$.  

We now define the field norm, $\Norm_{K/\Q}(x) := \prod_{\tau: K \hookrightarrow \C} \tau(x)$, where the product is taken over all embeddings $\tau$. 
From \Cref{lem:num-embeddings-ub}, there are $[K:\Q]$ such embeddings.
We claim that $\Norm_{K/\Q}(x) \in \Z$ for any $x \in \calO_{K}$.
Since $x \in \calO_{K}$, $x$ satisfies a monic polynomial with coefficients in $\Z$, e.g. $x^n + a_{n - 1} x^{n - 1} + \dots + a_0 = 0$.
Let $m_{x}$ denote the minimal polynomial with degree $n$.
We claim that each root of $m_{x}$ appears equally often in the set $\set{\tau(x)}$.
In particular, the product of $\tau(x)$ over all $\tau$ is $(-1)^{w} a_0^{v}$ for some integers $w, v$, and therefore an integer.
To see this, let \(\alpha\) be a root of \(m_x\).
By Proposition 2.1(b) of \cite{milne2003fields}, \(\alpha\) determines an embedding
\[
\rho_\alpha:\mathbb Q(x)\hookrightarrow \mathbb C,\qquad x\mapsto \alpha \text{.}
\]
Since \(K=\mathbb Q(x)(\gamma)\) (see e.g. Theorem 5.1 of \cite{milne2003fields} and \Cref{def:separable} for the required definition of separable) for some \(\gamma\in K\), applying Proposition 2.12 of \cite{milne2003fields} to the extension \(K/\mathbb Q(x)\) shows that \(\rho_\alpha\) extends to exactly $[K:\mathbb Q(x)]$ embeddings \(K\hookrightarrow \mathbb C\). 
Hence each root \(\alpha\) of \(m_x\) occurs exactly \([K:\mathbb Q(x)]\) times among the values \(\tau(x)\).

Since $\Norm_{K/\Q}(\Delta)$ is a nonzero integer,
\[
      1 \le |\Norm_{K/\Q}(\Delta)|
      =\prod_{\tau:K\hookrightarrow\C}|\tau(\Delta)|.
\]
Here the product is over all $[K:\Q]$ complex embeddings, counted individually rather than by conjugate pairs. 
Isolating the fixed embedding $\sigma$ and bounding every other factor by the displayed Archimedean upper bound gives
\[
      |\sigma(\Delta)|
      \ge \exp(-C_4([K:\Q]-1)\sqrt L\log L).
\]
The degree of $K$ is $O(b)$ by \Cref{prop:codebook}, so
\[
      |\sigma(\Delta)|\ge \exp(-C_3b\sqrt L\log L)
\]
for a suitable absolute constant $C_3$.
\end{proof}

We now map the binary vectors into $\Z$-vectors.

\begin{proposition}[Nonuniform Finite Integer Block Gadget]\label{prop:int-gadget}
There are absolute constants $C_5$ and $L_*\ge2$ such that, for every $b\ge1$ and every $L\ge L_*$, there exist two blockwise maps
\[
      \Phi_A,\Phi_B:\{0,1\}^{bL}\to\Z^r,
      \qquad r=O(L),
\]
and a finite set $V_{b,L}\subseteq\Z$ such that
\begin{equation}
    \label{eq:gadgetpred}
      x\cdot y=0
      \quad\Longleftrightarrow\quad
      \Phi_A(x)\cdot\Phi_B(y)\in V_{b,L}.
\end{equation}
Here blockwise means that there are local maps $\alpha_{b,L},\beta_{b,L}:\{0,1\}^b\to\Z^{w}$, with $w=O(1)$, such that
\[
      \Phi_A(x_1,\ldots,x_L)=(\alpha_{b,L}(x_1),\ldots,\alpha_{b,L}(x_L))
\]
and similarly for $\Phi_B$.  Moreover,
\begin{align*}
      \|\Phi_A(x)\|_\infty,\;\|\Phi_B(y)\|_\infty,
      \;\max_{v\in V_{b,L}}|v| &\le 2^{C_5 b\sqrt L\log L},  \\
      |V_{b,L}|&\le 2^{C_5 b\sqrt L\log L} \text{.}
\end{align*}
\end{proposition}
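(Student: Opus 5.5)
We build $\alpha_{b,L},\beta_{b,L}$ by rounding the complex numbers $z_t(a)=\sigma(F_t(a))$ to integers, following the ``Embedding $\C$ into $\Z$'' step of the overview. Set $\delta:=\exp(-C_3 b\sqrt L\log L)$ from \Cref{lem:product-separation} and $R:=\exp(C_2\sqrt L\log L)$ from \eqref{eq:blockarch}.

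\textbf{Choice of parameters.} We pick an integer $N=\lceil 16 L R/\delta\rceil$, which is $\exp(O(b\sqrt L\log L))$. For each $t$ and $a$, let $p_t(a):=\lfloor N\Re z_t(a)\rceil$ and $s_t(a):=\lfloor N\Im z_t(a)\rceil$. Then $|N z_t(a)-(p_t(a)+i s_t(a))|\le 1$ and $|p_t|,|s_t|\le NR+1$. The integer quantities
\[
P_t(x,y):=\sum_{i=1}^L \big(p_t(x_i)p_t(y_i)-s_t(x_i)s_t(y_i)\big),\qquad
S_t(x,y):=\sum_{i=1}^L \big(p_t(x_i)s_t(y_i)+s_t(x_i)p_t(y_i)\big)
\]
are the real and imaginary parts of $\sum_i \hat z_t(x_i)\hat z_t(y_i)$, where $\hat z=p+is$. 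Each product differs from $N^2 z_t(x_i)z_t(y_i)$ by at most $2NR+1$. Hence $P_t+iS_t$ approximates $N^2\sigma(K_t(x,y))$ within $L(2NR+1)\le 3LNR$, which is less than $N^2\delta/4$ by the choice of $N$.

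\textbf{Packing into one integer.} We first shift to make all entries nonnegative. Set $H:=N R+1$ and use $p+H,\ s+H$. Expanding a product $(p+H)(p'+H)=pp'+H(p+p')+H^2$ produces correction terms, and these depend on $x$ and $y$ separately. To cancel them we plan to add a few extra coordinates, such as $(H,\,p+H,\,1)$ paired with $(-\,\cdot\,)$. Since $\alpha,\beta$ must be nonnegative, we instead note that $\sum_i$ of the cross terms can be absorbed by simply defining $V$ in terms of the resulting shifted total. The target value is just whatever the exact integer $D(x,y):=\sum_i\alpha(x_i)\cdot\beta(y_i)$ equals, and the decoding below only needs $D$ to determine $(P_t,S_t)_t$ up to the same precision.

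Concretely, let $M:=2^{\lceil\log_2(10 L H^2)\rceil}$, chosen so that every shifted quantity $\tilde P_t,\tilde S_t$ lies in $[0,M)$. For each $t$, the coordinates of $\alpha(a)$ are $M^{2t}$ or $M^{2t+1}$ times the shifted entries $p_t(a)+H$, $s_t(a)+H$. The coordinates of $\beta$ are the matching partner entries, arranged so that the dot product equals $\sum_t M^{2t}\tilde P_t+M^{2t+1}\tilde S_t$. Here $\tilde P_t$ and $\tilde S_t$ are $P_t,S_t$ plus $H^2L$-type offsets together with linear terms. The minus sign in $-s s'$ is handled by writing $-ss' = (H-s)(s'+H) - H(s'+H)+\dots$. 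In practice we will use the identity $\Re(zw)=\Re z\Re w+\Im(iz)\,\dots$, or equivalently encode $\bar{\ }$ adjustments, with $w=O(T)=O(1)$ coordinates. Base-$M$ digits are then unique, so $D$ determines all $\tilde P_t,\tilde S_t$.

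\textbf{Defining $V$ and correctness.} Let $V_{b,L}:=\{D(x,y): x\cdot y=0\}$, a finite set of integers bounded by $T M^{2T+2}=\exp(O(b\sqrt L\log L))$. Its size is bounded by the same range. The direction $\Rightarrow$ in \eqref{eq:gadgetpred} is trivial. For $\Leftarrow$, suppose $D(x',y')=D(x,y)$ with $x\cdot y=0<x'\cdot y'$. By \Cref{lem:anticancel} and \Cref{lem:product-separation}, some $t$ has $|N^2\sigma(K_t(x,y))-N^2\sigma(K_t(x',y'))|\ge N^2\delta$. The rounded values lie within $N^2\delta/4$ of these two points, so $(P_t,S_t)$ differs between the pairs. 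This contradicts the unique base-$M$ decoding.

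The main obstacle is the sign and offset bookkeeping needed to keep the entries nonnegative while still making $D$ an injective function of $(P_t,S_t)_t$. The cleanest fix is to shift $p,s$ by $H$ and include the additional cross coordinates $(1,\,p+H,\,s+H)$ with suitably scaled partners so that all offsets become exactly $x$-independent constants. The remaining work is routine bounding, and all entries are $\le 2^{C_5 b\sqrt L\log L}$.
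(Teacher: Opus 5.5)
Your overall plan (round $N\sigma(F_t(a))$ to Gaussian integers, pack the real and imaginary component sums in base $M$, separate via \Cref{lem:anticancel} and \Cref{lem:product-separation}) is the paper's Case~2, and your rounding estimates are correct. Comparing $P_t+iS_t$ as a single complex number is even slightly cleaner than the paper's $\delta_0/\sqrt2$ split.

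The gap is the packing step. You flag it as the main obstacle but never construct it, and the version you sketch fails. With entries $p+H$, $s+H$ on both sides, each block product contains $H\,p_t(x_i)+H\,p_t(y_i)$ and similar $s$-terms. So the base-$M$ digit you decode is $\tilde P_t=P_t+H\sum_i\bigl(p_t(x_i)+p_t(y_i)\bigr)+\cdots$. These extra terms depend on $x$ and $y$ separately and have size up to about $2LH^2\gg N^2\delta$. They are therefore neither constant nor negligible. Then ``$(P_t,S_t)$ differs'' no longer implies ``$(\tilde P_t,\tilde S_t)$ differs,'' and your final contradiction with unique decoding breaks. Defining $V$ as the set of totals attained by orthogonal pairs absorbs nothing, since the whole point is that no nonorthogonal pair may attain one of those totals.

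Your repair via extra coordinates is left unspecified. With nonnegative entries, cancelling $+H\,p_t(a)$ needs some coordinate decreasing in $p_t(a)$, e.g.\ via $(H+p)(H+p')+(H-p)(H-p')=2H^2+2pp'$. More to the point, none of this is needed: the proposition asks for maps into $\Z^{w}$, not $\Z_{\ge0}^{w}$. The paper uses these coordinates:
left $M_0^{2t}X_t(a),M_0^{2t}Y_t(a),M_0^{2t+1}X_t(a),M_0^{2t+1}Y_t(a)$, and right $X_t(b'),-Y_t(b'),Y_t(b'),X_t(b')$.
The dot product is then exactly $\sum_t\bigl(M_0^{2t}\sum_iR_t(x_i,y_i)+M_0^{2t+1}\sum_iI_t(x_i,y_i)\bigr)$. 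The paper decodes these signed digits by taking $M_0$ larger than twice every possible component sum, so digit differences have absolute value below $M_0$ and vanish one at a time modulo $M_0$. Nonnegativity is imposed only later, in the proof of \Cref{lem:nonuiform-local-lean}, using the pairs $(B+a_i,B-a_i)$.

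You also do not handle small $b$. Everything you use rests on \Cref{prop:codebook}, which holds only for $b$ and $L$ above fixed thresholds, while the proposition is claimed for every $b\ge1$. The $L$-threshold can be absorbed into $L_*$, but the $b$-threshold cannot. The paper treats $b<b_*$ separately by a direct CRT encoding with $b$ primes in $(L,2L]$, giving bounds $L^{O(b)}$. Alternatively, you could pad each block with zeros to length $b_*$; this preserves orthogonality and only multiplies the exponent by the constant $b_*$.
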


\begin{proof}
Let $b_*$ be the finite threshold in \Cref{prop:codebook}.  Choose once and for all an absolute $L_*$ dominating the $L$-threshold in \Cref{prop:codebook}, the prime-number-theorem threshold needed in the finite fallback below, and the finitely many lower bounds on $L$ used in the rounding inequalities.  We prove the claim for $L\ge L_*$. 
We split into two cases.

\paragraph{Case 1: $b < b_*$.}
If $b<b_*$, we use the Chinese Remainder Theorem (CRT) to directly construct the required maps.
Since $b<b_*$ is constant, the Prime Number Theorem implies that the interval $(L,2L]$ contains at least $b_*$ primes.  Choose $b$ distinct primes
\[
      p_1,\ldots,p_b\in(L,2L].
\]
Then
\[
      P:=\prod_{j=1}^b p_j\le (2L)^b\le L^{Cb}
\]
for an absolute constant $C$ depending only on $b_*$.  For a block $a\in\{0,1\}^b$, let $t(a)\in[0,P)$ be the CRR encoding with $t(a) \equiv a_j \pmod{p_j}$ for all $j \in [b]$.
Put one local coordinate on each side, $\alpha(a)=\beta(a)=t(a)$. 
Then for full block vectors,
\[
       D(x,y):=\sum_{i=1}^L t(x_i)t(y_i)
\]
satisfies $D(x,y)\equiv\sum_i x_{i,j}y_{i,j}\pmod {p_j}$.  For each fixed $j$, the integer
\[
      s_j:=\sum_{i=1}^L x_{i,j}y_{i,j}
\]
lies in $[0,L]$.  Because $p_j>L$, the congruence $s_j\equiv0\pmod {p_j}$ is therefore equivalent to $s_j=0$.  Hence
\[
      D(x,y)\equiv0\pmod P
      \quad\Longleftrightarrow\quad
      s_j=0\text{ for every }j\in[b]
      \quad\Longleftrightarrow\quad
      x\cdot y=0.
\]
Taking
\[
       V=\{z\in[0,LP^2]\cap\Z:z\equiv0\pmod P\}
\]
gives coordinate, value, and cardinality bounds $L^{O(b)}$, which are stronger than the claimed $2^{O(b\sqrt L\log L)}$ for sufficiently large $L$.  Hence we may assume $b\ge b_*$ and use the arithmetic construction.

\paragraph{Case 2: $b \geq b_*$.}

Let
\begin{equation}
    \label{eq:N-choice}
      N:=\left\lceil\exp(C_6b\sqrt L\log L)\right\rceil
\end{equation}
with $C_6$ sufficiently large.  For every $t$ and block $a$, define integers
\[
      X_t(a):=\lfloor N\Re(z_t(a))\rceil,
      \qquad
      Y_t(a):=\lfloor N\Im(z_t(a))\rceil .
\]
For two blocks $a,b'$ (we use $b'$ to avoid confusion with the parameter $b$), define
\begin{align*}
      R_t(a,b')&:=X_t(a)X_t(b')-Y_t(a)Y_t(b'),\\
      I_t(a,b')&:=X_t(a)Y_t(b')+Y_t(a)X_t(b').
\end{align*}
These are rounded versions of $N^2\operatorname{Re}(z_t(a)z_t(b'))$ and $N^2\operatorname{Im}(z_t(a)z_t(b'))$.  Indeed, write
\[
      X_t(a)=N\operatorname{Re}(z_t(a))+\xi_a,
      \qquad
      Y_t(a)=N\operatorname{Im}(z_t(a))+\eta_a,
\]
with $|\xi_a|,|\eta_a|\le1/2$, and similarly for $b'$.  Using \eqref{eq:blockarch}, for each block pair

\begin{align}
 |R_t(a,b')-N^2\operatorname{Re}(z_t(a)z_t(b'))|&\le C_7N\exp(C_2\sqrt L\log L)+C_7,\label{eq:block-round-error}\\
 |I_t(a,b')-N^2\operatorname{Im}(z_t(a)z_t(b'))|&\le C_7N\exp(C_2\sqrt L\log L)+C_7.\nonumber
\end{align}

Let $M_0$ be an integer larger than twice the absolute value of every possible component sum
\[
      \sum_{i=1}^L R_t(x_i,y_i),
      \qquad
      \sum_{i=1}^L I_t(x_i,y_i).
\]
By \eqref{eq:blockarch}, \eqref{eq:block-round-error}, and the choice of $N$ \eqref{eq:N-choice}, we may take
\[
      M_0\le 2^{C_8b\sqrt L\log L}
\]
for a sufficiently large $C_8$.

For each block $a$, include, for every $t=1,\ldots,T$, the four left local coordinates
\[
      M_0^{2t}X_t(a),\quad
      M_0^{2t}Y_t(a),\quad
      M_0^{2t+1}X_t(a),\quad
      M_0^{2t+1}Y_t(a),
\]
and the four right local coordinates
\[
      X_t(b'),\quad -Y_t(b'),\quad Y_t(b'),\quad X_t(b').
\]
Concatenating these local maps over the $L$ blocks gives $r=4TL=O(L)$ coordinates.  For block vectors $x,y$, their integer dot product is
\begin{equation*}
   \Phi_A(x)\cdot\Phi_B(y)
   =\sum_{t=1}^T M_0^{2t}\sum_{i=1}^L R_t(x_i,y_i)
    +\sum_{t=1}^T M_0^{2t+1}\sum_{i=1}^L I_t(x_i,y_i).
\end{equation*}

The packing is injective on component sums. If two packed sums are equal, subtract them and divide by the smallest used power of $M_0$, namely $M_0^2$, and then reindex the remaining exponents from $0$.  We get an identity $\sum_s D_sM_0^s=0$.  Each $D_s$ is the difference of two possible component sums, so the choice of $M_0$ gives $|D_s|<M_0$.  Reducing modulo $M_0$ gives $D_0=0$; dividing by $M_0$ and repeating gives every $D_s=0$.  Hence equality of packed dot products is equivalent to equality of all real and imaginary component sums.

By \Cref{lem:product-separation}, an orthogonal algebraic value vector and a nonorthogonal algebraic value vector differ in some complex component by at least
\[
      \delta_0:=\exp(-C_3b\sqrt L\log L).
\]
Therefore either the corresponding real parts or imaginary parts differ by at least $\delta_0/\sqrt2$.  Summing \eqref{eq:block-round-error} over the $L$ blocks, the total rounding error in any component sum is at most
\[
      L\bigl(C_7N\exp(C_2\sqrt L\log L)+C_7\bigr).
\]
Choosing $C_6$ sufficiently large (recall this ensures $N$ is sufficiently large) ensures
\[
      N^2\delta_0/\sqrt2
      >2L\bigl(C_7N\exp(C_2\sqrt L\log L)+C_7\bigr).
\]
Indeed, after dividing the dominant right-hand term by $N$, the comparison is between exponents
\[
      (C_6-C_3)b\sqrt L\log L
      \quad\text{and}\quad
      C_2\sqrt L\log L+O(\log L).
\]
Since $b\ge b_*\ge1$ in the arithmetic case, one fixed choice of $C_6>C_3+C_2+O(1)$ works for all large $L$; the remaining additive term $2LC_7$ is then smaller as well.
Thus after rounding, no orthogonal pair and nonorthogonal pair have the same packed dot product.  Define
\[
      V_{b,L}:=\{\Phi_A(x)\cdot\Phi_B(y):x\cdot y=0\}.
\]
Then \eqref{eq:gadgetpred} follows.  The coordinate and value bounds follow from \eqref{eq:blockarch}, the choices of $N$ and $M_0$, and the fact that $T$ is constant.  Since all possible dot products lie in an interval of length $2^{O(b\sqrt L\log L)}$, the same bound holds for $|V_{b,L}|$.
\end{proof}

We are now ready to prove \Cref{lem:nonuiform-local-lean}.
From \Cref{prop:int-gadget}, we simply need to shift the coordinates so that all values are nonnegative.

\begin{proof}[Proof of \Cref{lem:nonuiform-local-lean}]
Let $L_0:=L_*$, where $L_*$ is the absolute threshold from \Cref{prop:int-gadget}.  
Assume $L\ge L_0$.
The (possibly negative) coordinates in the vectors produced by \Cref{prop:int-gadget} have magnitude at most $2^{O(b\sqrt L\log L)}=L^{O(b\sqrt L)}$.  
Let $V^{\rm old}_{b,L}$ be the old value set, let $r_s$ be the full dimension of the vectors (i.e. $r_s = 4 T L$), and choose an integer $B\ge1$ that bounds both the magnitude of every signed coordinate and $|v|$ for every $v\in V^{\rm old}_{b,L}$.  This still has $B=L^{O(b\sqrt L)}$ by \Cref{prop:int-gadget}.  Replace each signed left coordinate $a_i$ and right coordinate $b_i$ by the nonnegative pair
\[
        (B+a_i,B-a_i),\qquad (B+b_i,B-b_i).
\]
The contribution of this pair is
\[
        (B+a_i)(B+b_i)+(B-a_i)(B-b_i)=2B^2+2a_ib_i.
\]
Therefore the full transformed dot product equals
\[
        2r_sB^2+2D_{\rm old}(x,y),
\]
where $D_{\rm old}$ is the signed dot product from \Cref{prop:int-gadget}.  Since each signed product has magnitude at most $B^2$ and there are $r_s$ signed coordinates,
\[
        |D_{\rm old}(x,y)|\le r_sB^2
\]
for every pair $x,y$.  Define the new value set explicitly by
\[
        V^{+}_{b,L}:=\{2r_sB^2+2v:v\in V^{\rm old}_{b,L}\}.
\]
The choice of $B$ makes every element of $V^{+}_{b,L}$ nonnegative, and the displayed bound also makes every transformed dot product nonnegative.  Moreover every transformed dot product is at most $4r_sB^2$.  Since $r_s=O(L)$, $b\ge1$, and $L\ge2$, this is still $L^{O(b\sqrt L)}$; the same bound holds for the shifted values and the transformed coordinates.  Membership in the value set is preserved exactly.  This doubles the local width and changes the exponent only by an absolute constant.  Enlarging $K_0$ bounds the coordinates, the values in $V^{+}_{b,L}$, its cardinality, and all possible dot products by $\ceil{L^{K_0b\sqrt L}}$.
\end{proof}

\section{\texorpdfstring{$\Z$-OV}{Z-OV} Lower Bound via Uniform Local Reductions}
\label{sec:uniform}

At this point, \Cref{lem:nonuiform-local-lean} already gives a nonuniform reduction from (Boolean) OV to $\Z$-OV.
We can, for example, modify Theorem B.3 of \cite{DBLP:journals/toc/Chen20} to obtain a uniform reduction, and therefore the desired lower bound.
 The setting here is slightly different from \cite{DBLP:journals/toc/Chen20}'s setting. 
One key difference is that the reduction here is for the blockwise bichromatic setting, where \cite{DBLP:journals/toc/Chen20} assumes a single map for both sets of vectors. For completeness, we include a proof in this section, which closely follows \cite{DBLP:journals/toc/Chen20}'s approach.

First, we restate \Cref{lem:nonuiform-local-lean} in the language of a nonuniform local reduction (similar to \cite{DBLP:journals/toc/Chen20}).

\begin{restatable}[Local Bichromatic Nonnegative Reduction]{definition}{LocalReduction}
    \label{def:local-red}
    A local bichromatic nonnegative $(b,L,\kappa)$-reduction of width $w=O(1)$ consists of maps
    \[
           \alpha,\beta:\{0,1\}^b\to\{0,1,\ldots,\ceil{L^{\kappa b}}-1\}^{w}
    \]
    and a set $V\subseteq\Z_{\ge0}$ such that
    \[
           |V|\le \ceil{L^{\kappa b}},\qquad
           \max_{v\in V}v\le \ceil{L^{\kappa b}},
    \]
    every possible full dot product is at most $\ceil{L^{\kappa b}}$, and for all block vectors $x=(x_1,\ldots,x_L)$ and $y=(y_1,\ldots,y_L)$,
    \[
           x\cdot y=0
           \quad\Longleftrightarrow\quad
           \sum_{i=1}^L\alpha(x_i)\cdot\beta(y_i)\in V.
    \]
    It is uniform if $\alpha,\beta$ can be evaluated and $V$ can be generated in time $L^{O(\kappa b)}\poly(b)$.
\end{restatable}

It is easy to see that \Cref{lem:nonuiform-local-lean} implies the following.

\begin{restatable}{corollary}{LocalReductionLemma}
    \label{cor:nonuniform-local}
    There are absolute constants $K_0$ and $L_0\ge2$, and a fixed width $w_0=O(1)$, such that, for every $b\ge1$ and every $L\ge L_0$, there is a (possibly nonuniform) local bichromatic nonnegative $(b, L, K_0 \sqrt{L})$-reduction of width $w_0$, consisting of maps $\alpha_{b, L}, \beta_{b, L}$ and a set $V_{b, L}$.
\end{restatable}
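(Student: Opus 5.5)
The corollary is a repackaging of \Cref{lem:nonuiform-local-lean}, so the plan is to take the constants $K_0, L_0, w_0$ from that lemma and check that its conclusions match \Cref{def:local-red} with $\kappa := K_0\sqrt{L}$. For each $b \ge 1$ and $L \ge L_0$, \Cref{lem:nonuiform-local-lean} gives maps $\alpha,\beta:\{0,1\}^b \to \Z_{\ge 0}^{w_0}$ and a set $V \subseteq \Z_{\ge 0}$, together with the bound $B = \lceil L^{K_0 b\sqrt{L}}\rceil$. With $\kappa = K_0\sqrt{L}$ we have $\lceil L^{\kappa b}\rceil = B$ exactly, so the bounds in \Cref{def:local-red} are literally the bounds of the lemma.

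The verification is item by item. First, the entry bound: each coordinate satisfies $\alpha(a)_k, \beta(a)_k < B$ and is a nonnegative integer, so it lies in $\{0,1,\ldots,\lceil L^{\kappa b}\rceil - 1\}$. Second, the bounds $|V| \le B$ and $\max_{v \in V} v \le B$ are item 1 of the lemma. Third, the requirement that every full dot product is at most $\lceil L^{\kappa b}\rceil$ is the Product Bound, item 2. Fourth, the equivalence $x\cdot y = 0 \iff \sum_i \alpha(x_i)\cdot\beta(y_i) \in V$ is item 3. Finally, the width is $w_0$, an absolute constant, so $w_0 = O(1)$.

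One small point needs care. In \Cref{def:local-red}, $\kappa$ is nominally a parameter, while here it depends on $L$. Since the corollary fixes $b$ and $L$ before asking for a $(b, L, K_0\sqrt{L})$-reduction, this dependence is harmless. No uniformity is claimed, so nothing about computability needs checking.

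I do not expect a genuine obstacle; the only thing to get right is matching $B$ with $\lceil L^{\kappa b}\rceil$.
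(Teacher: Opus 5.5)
Your proposal is correct and matches the paper, which simply states that the corollary follows immediately from \Cref{lem:nonuiform-local-lean}; setting $\kappa = K_0\sqrt{L}$ makes $\lceil L^{\kappa b}\rceil = B$, and each item of the lemma then gives the corresponding requirement of \Cref{def:local-red}. The one detail you skip is that the corollary requires $L_0 \ge 2$, which the lemma does not state explicitly; replace $L_0$ by $\max\{L_0, 2\}$, which is harmless because the lemma holds for every $L \ge L_0$.
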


\begin{lemma}[CRT Composition for Local Two-Map Reductions]\label{lem:crt-composition}
There is an absolute constant $C_{\rm crt}$ with the following property.  
Suppose $\kappa\ge1$ and there is a local bichromatic nonnegative $(m,L,\kappa)$-reduction (called the micro-reduction) of constant width $w$. 
Let $B\ge m$ and put $q:=\ceil{B/m}$.  If $q\le L^{\kappa m}$, then there is a local bichromatic nonnegative $(B,L,C_{\rm crt}\kappa)$-reduction of the same constant width.  
If the micro-reduction is given uniformly, the composed reduction can be evaluated and its value set generated in time
\[
       L^{O(\kappa B)}\poly(B)
\]
plus the corresponding micro-reduction construction time.
\end{lemma}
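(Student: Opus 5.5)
The plan is to follow the CRT composition of \cite[Theorem B.3]{DBLP:journals/toc/Chen20}, adapted to two maps. Pad each $B$-bit block with zeros to length $qm\ge B$, and split it into $q$ sub-blocks of $m$ bits. For a full vector $x=(x_1,\ldots,x_L)$ write $x_{i,s}\in\{0,1\}^m$ for the $s$-th sub-block of block $i$.

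\textbf{Sub-instances.} For each $s\in[q]$, the tuple $(x_{1,s},\ldots,x_{L,s})$ is an $L$-block instance for the micro-reduction $(\alpha,\beta,V)$. Put
\[
D_s(x,y):=\sum_{i=1}^L\alpha(x_{i,s})\cdot\beta(y_{i,s})\in[0,X],\qquad X:=\ceil{L^{\kappa m}}.
\]
Since $x\cdot y=\sum_s (x_{\cdot,s}\cdot y_{\cdot,s})$ is a sum of nonnegative terms,
\[
x\cdot y=0\iff D_s(x,y)\in V\ \text{for all } s.
\]
So it suffices to pack the $q$ values $D_s$ into one integer dot product that is injective on the tuple $(D_s)_s$.

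\textbf{CRT packing.} Choose $q$ distinct primes $p_1,\ldots,p_q$ in $(X,X^2]$. These exist because $q\le L^{\kappa m}\le X$, and by the Prime Number Theorem (\cref{thm:prime-number}) there are $\gg X^2/\log X\ge X$ primes in that range once $L\ge L_0$ (tiny cases are handled by adjusting constants). Let $P=\prod_s p_s\le X^{2q}$. Since $mq\le B+m\le 2B$, this gives $P\le L^{O(\kappa B)}$. For each coordinate $k\in[w]$ define
\[
\alpha'(a)_k:=\CRR\bigl(\{\alpha(a_s)_k\};\{p_s\}\bigr),\qquad \beta'(a)_k:=\CRR\bigl(\{\beta(a_s)_k\};\{p_s\}\bigr),
\]
where $a_s$ is the $s$-th sub-block of $a$. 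The entries of $\alpha(a_s)$ are below $X<p_s$, so they are valid residues. The width stays $w$, and the entries of $\alpha',\beta'$ are $<P$. Then
\[
\sum_i\alpha'(x_i)\cdot\beta'(y_i)\equiv D_s(x,y)\pmod{p_s}.
\]
Because $0\le D_s\le X<p_s$, this residue determines $D_s$ exactly. Define $V'$ to be the set of integers $z\in[0,LwP^2]$ whose residue modulo every $p_s$ lies in $V$. Then $x\cdot y=0$ iff the composed dot product lies in $V'$.

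\textbf{Bounds and uniformity.} The full dot product is at most $LwP^2$, which is $L^{O(\kappa B)}$ since $w=O(1)$ and $\kappa B\ge1$. For the size of $V'$: there are at most $|V|^q\le X^q$ residue tuples, and each lifts to at most $LwP+1$ integers in the range. Hence $|V'|\le L^{O(\kappa B)}$, and its maximum is at most $LwP^2$. Choosing $C_{\rm crt}$ large enough absorbs all these exponents into $\ceil{L^{C_{\rm crt}\kappa B}}$.

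For uniformity, find the primes by sieving up to $X^2=L^{O(\kappa m)}$ and evaluate $\alpha',\beta'$ with $q$ micro-evaluations plus one CRR computation, which is polynomial in the bit length by \cref{thm:chinese-remainder}. Generate $V'$ by enumerating residue tuples in $V^q$, applying CRR, and adding multiples of $P$ up to $LwP^2$. All of this takes $L^{O(\kappa B)}\poly(B)$ time plus the micro-reduction's construction time.

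\textbf{Main obstacle.} The main step to verify is the counting of primes and the constant bookkeeping. We need $q$ primes all exceeding the micro value bound $X$, and not just exceeding $L$ as in Case 1 of \Cref{prop:int-gadget}. We also need $P^2$ and $|V'|$ to stay at $L^{O(\kappa B)}$ uniformly in $m$ and $q$. The hypothesis $q\le L^{\kappa m}$ is exactly what allows primes from $(X,X^2]$ with $\log P=O(q\log X)=O(\kappa B\log L)$. I would check this step first, treating $X$ small (small $L$ or $m$) by enlarging $L_0$ or $C_{\rm crt}$.
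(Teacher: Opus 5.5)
Your proposal is correct and follows essentially the same route as the paper: pad into $q$ chunks of length $m$, CRT-encode each of the $w$ coordinates modulo $q$ primes exceeding the micro bound $X=\lceil L^{\kappa m}\rceil$, take $V'$ to be the integers in $[0,LwP^2]$ whose residues all lie in $V$, and generate $V'$ from the $|V|^q$ residue tuples. The paper draws its primes from $[X^2,2X^2]$ rather than $(X,X^2]$, which makes no difference.

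The one point to tighten is the small-$X$ case. The lemma has no lower threshold on $L$, so there is no $L_0$ to enlarge. Instead, as the paper does, when $X$ is below an absolute threshold take the first $q$ primes above $X$; their product is then an absolute constant, which is absorbed into $C_{\rm crt}$. This uses $L\ge2$, which is automatic because no micro-reduction can exist when $L=1$.
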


\begin{proof}
Pad a $B$-bit block with zeros and view it as $q$ chunks of length-$m$ blocks.  Let $(\alpha,\beta,V)$ be the micro-reduction.  By \Cref{def:local-red}, all values of $V$ and all possible micro full dot products are nonnegative and at most
\[
       D_{\rm micro}:=\ceil{L^{\kappa m}}.
\]
If $D_{\rm micro}$ is below a fixed absolute threshold $D_0$, then $q\le D_{\rm micro}\le D_0$.  For each of the finitely many pairs $(D_{\rm micro},q)$ with $1\le q\le D_{\rm micro}\le D_0$, hardwire the first $q$ primes larger than $D_{\rm micro}$ into the uniform construction.  Their product is bounded by an absolute constant, which is absorbed in $L^{O(\kappa B)}$ after enlarging $C_{\rm crt}$.  Thus we may assume $D_{\rm micro}$ is large.  Choose $q$ distinct primes $p_1,\ldots,p_q$ in the interval
\[
      [D_{\rm micro}^2,2D_{\rm micro}^2].
\]
For large $D_{\rm micro}$, the prime number theorem gives
\[
      \#\{p\in[D_{\rm micro}^2, 2D_{\rm micro}^2]:p\text{ prime}\}
      =\Theta\left(\frac{D_{\rm micro}^2}{\log D_{\rm micro}}\right),
\]
which is larger than $q$ because $q \le D_{\rm micro}$.  
Thus such a choice is possible, and each selected prime satisfies $p_h > D_{\rm micro}$.  
Put $P:=\prod_{h=1}^q p_h$.  
Since every $p_h \le 2D_{\rm micro}^2$,
\[
       P\le (2D_{\rm micro}^2)^q\le L^{O(\kappa mq)}=L^{O(\kappa B)}.
\]
To find these primes deterministically, scan the interval $[D_{\rm micro}^2,2D_{\rm micro}^2]$ and run a primality test on each integer until $q$ primes have been collected.  The interval length is $D_{\rm micro}^2\le L^{O(\kappa m)}\le L^{O(\kappa B)}$, and each primality test costs polynomial time in $\log D_{\rm micro}=O(\kappa m\log L)$, so this search fits within the final stated time bound.

For a large block $a=(a^{(1)},\ldots,a^{(q)})$ and each local coordinate $r\in[w]$, define $\alpha'(a)_r$ as the unique integer in $[0,P)$ satisfying
\[
       \alpha'(a)_r\equiv \alpha(a^{(h)})_r\pmod {p_h}
       \qquad\text{for all }h\in[q].
\]
Define $\beta'$ analogously from $\beta$.  For full block vectors $x,y$, let
\[
       D'(x,y):=\sum_{i=1}^L\alpha'(x_i)\cdot\beta'(y_i).
\]
For every chunk index $h$,
\[
       D'(x,y)\equiv
       \sum_{i=1}^L \alpha(x_i^{(h)})\cdot\beta(y_i^{(h)})
       \pmod {p_h}.
\]
Because $p_h>D_{\rm micro}$, this residue identifies the integer micro dot product exactly.  Moreover,
\[
      x\cdot y=\sum_{h=1}^q \sum_{i=1}^L x_i^{(h)}\cdot y_i^{(h)},
\]
and every summand is a nonnegative integer because the vectors are Boolean.  Therefore $x\cdot y=0$ if and only if each chunked dot product $\sum_{i=1}^L x_i^{(h)}\cdot y_i^{(h)}$ is zero.  By the micro-reduction property, this is equivalent to requiring that, for every $h$, the identified micro dot product lies in $V$.

The possible range of $D'$ is explicit:
\[
       0\le D'(x,y)\le D_{\rm max}':=LwP^2.
\]
Define
\[
       V':=\{z\in[0,D_{\rm max}']\cap\Z:
              z\bmod p_h\in V\text{ for every }h\in[q]\}.
\]
Then $(\alpha',\beta',V')$ is the desired composed reduction.  The coordinate bound follows from $\alpha'(a)_r,\beta'(a)_r<P\le L^{O(\kappa B)}$, and the full dot-product range is at most $LwP^2\le L^{O(\kappa B)}$.  
For the value-set size, note that $V' \subset [L^{O(\kappa B)}]$.
After increasing $C_{\rm crt}$, all bounds fit the definition of a $(B,L,C_{\rm crt}\kappa)$-reduction.

Uniform evaluation uses $q$ micro-evaluations and CRT reconstruction for each coordinate.  To generate $V'$, enumerate the
\[
       |V|^q\le \ceil{L^{\kappa m}}^q\le L^{O(\kappa mq)}\le L^{O(\kappa B)}
\]
residue tuples and reconstruct their residue classes $r\in[0,P)$ modulo $P$.  For each such $r$, list the integers $r+aP$ with $a\ge0$ and $r+aP\le D_{\rm max}'$.  There are at most $1+D_{\rm max}'/P\le L^{O(\kappa B)}$ representatives per tuple.  Output the generated representatives with duplicates removed, for example by sorting; the total generated list length is $L^{O(\kappa B)}$, so this remains within the stated time.  Arithmetic is on $O(\kappa B\log L)$-bit integers and contributes only polynomial factors.
\end{proof}

\begin{lemma}[Base Search]\label{lem:base-search}
For every fixed $C>0$ and absolute integer width bound $W\ge1$, there are thresholds $L_{\rm bs}=L_{\rm bs}(C,W)$ and $b_{\rm bs}=b_{\rm bs}(C,W)$ with the following property.  Let $L\ge L_{\rm bs}$ and $b\ge b_{\rm bs}$, and assume
\begin{equation}\label{eq:base-range}
       m,L\le C\log\log\log b,\qquad
       1\le\kappa\le (\log\log\log b)^{C}.
\end{equation}
Suppose a local bichromatic nonnegative $(m,L,\kappa)$-reduction of width at most $W$ exists.  Then exhaustive search over widths at most $W$ finds a local bichromatic nonnegative $(m,L,C_{\rm bs}\kappa)$-reduction, for a constant $C_{\rm bs}=C_{\rm bs}(W)$, in time
\[
       b^{o(1)}\le L^{O(\kappa b)}
\]
for the ranges used in \Cref{lem:two-param-uniformization}.
\end{lemma}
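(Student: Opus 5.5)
The plan is to enumerate every candidate reduction and test each one exhaustively; the task is to show the whole search costs $b^{o(1)}$. Fix $m$, $L$, $\kappa$ in the range \eqref{eq:base-range}, and set $D:=\ceil{L^{C_{\rm bs}\kappa m}}$ with $C_{\rm bs}\ge 1$ to be chosen below. The search space consists of triples $(w,\alpha,\beta,V)$. Here $w\le W$, the maps $\alpha,\beta:\{0,1\}^m\to\{0,\ldots,D-1\}^{w}$ are arbitrary, and $V\subseteq[0,D]$ has $|V|\le D$. Since a reduction with parameter $\kappa$ exists, it is in particular a $(m,L,C_{\rm bs}\kappa)$-reduction, because all the bounds in \Cref{def:local-red} are monotone in $\kappa$. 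The search therefore terminates with some valid triple, and any triple it returns is by definition a local bichromatic nonnegative $(m,L,C_{\rm bs}\kappa)$-reduction. In fact $C_{\rm bs}=1$ would suffice for correctness. I would allow a larger constant only to absorb rounding from the ceilings, so it is harmless.

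Verifying a single candidate is straightforward. For every pair of block vectors $x,y\in\{0,1\}^{mL}$, compute $\sum_i\alpha(x_i)\cdot\beta(y_i)$ and check three things: the dot product is at most $D$, the value lies in $V$ exactly when $x\cdot y=0$, and the entry and size bounds on $\alpha,\beta,V$ hold. This takes $4^{mL}\poly(mL,\log D)$ time. A smarter way to handle $V$ is to avoid enumerating it at all. Compute the set $S_0$ of dot products of orthogonal pairs and the set $S_1$ of dot products of non-orthogonal pairs, accept iff $S_0\cap S_1=\emptyset$ and all values are at most $D$, and then set $V:=S_0$, which has $|S_0|\le D$ automatically once its elements lie in $[0,D]$. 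With this, only the maps $(\alpha,\beta)$ need to be enumerated.

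The number of candidate map pairs is at most $W\cdot D^{2W2^m}$. Hence the total time is
\[
W\, D^{2W2^m}\,4^{mL}\poly(mL,\log D)=\exp\bigl(O(W 2^m C_{\rm bs}\kappa m\log L)+O(mL)\bigr).
\]

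The main obstacle, and really the only content of the lemma, is checking that this bound is $b^{o(1)}$ under \eqref{eq:base-range}. Write $\ell:=\log\log\log b$. We have $2^m\le 2^{C\ell}=(\log\log b)^{C\log 2}$, while $\kappa m\log L\le \ell^{C}\cdot C\ell\cdot\log(C\ell)=(\log\log\log b)^{O(1)}$ and $mL=O(\ell^2)$. The exponent is therefore $(\log\log b)^{O(1)}$, which is $o(\log b)$. This gives running time $b^{o(1)}$ once $b\ge b_{\rm bs}(C,W)$, where the threshold is chosen so that the $O(\cdot)$ terms are dominated. The final inequality $b^{o(1)}\le L^{O(\kappa b)}$ is immediate, since $L\ge L_{\rm bs}\ge 2$ and $\kappa\ge1$ give $L^{\kappa b}\ge 2^{b}$. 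The threshold $L_{\rm bs}$ is needed only to keep $\log L$ positive and the ceilings in $D$ well behaved.
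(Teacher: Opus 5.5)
Your proposal is correct and follows the paper's argument. Like the paper, you enumerate all width-$\le W$ map pairs with entries below $L^{O(\kappa m)}$, split the reachable full dot products into an orthogonal set and a nonorthogonal set, accept when these are disjoint and bounded, take $V$ to be the orthogonal set, and bound the number of candidates by $L^{O(\kappa m 2^m)}=\exp((\log\log b)^{O(1)})=b^{o(1)}$.

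The one real difference is verification. You check each candidate by brute force over all $4^{mL}$ pairs, whereas the paper runs a block-by-block sumset dynamic program. Brute force is legitimate here because $mL=O((\log\log\log b)^2)$.

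One small fix: $|S_0|\le D$ does not follow from $S_0\subseteq[0,D]$ alone, since that interval contains $D+1$ integers. It does hold because the nonempty set $S_1$ also lies in $[0,D]$ and is disjoint from $S_0$. Alternatively, check it explicitly, as the paper does.
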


\begin{proof}
Choose a sufficiently large absolute constant $C_0$ and put
\[
       R:=\ceil{L^{C_0\kappa m}}.
\]
In the application $W=w_0$ from \Cref{cor:nonuniform-local}.  Enumerate all widths $1\le w\le W$ and all pairs of local maps
\[
       \alpha,\beta:\{0,1\}^m\to\{0,1,\ldots,R-1\}^{w}.
\]
Since a $(m,L,\kappa)$-reduction of some width at most $W$ exists, one of the enumerated candidates contains the witnessing maps after enlarging $C_0$ if necessary.  The number of candidates, summing over $w\le W$, is
\[
       \sum_{w\le W}R^{2w2^m}=L^{O(\kappa m2^m)}=b^{o(1)}
\]
under \eqref{eq:base-range}.  Indeed,
\[
      \log\!\left(L^{O(\kappa m2^m)}\right)=O(\kappa m2^m\log L).
\]
Here $\kappa,m,\log L\le(\log\log\log b)^{O(1)}$ by \eqref{eq:base-range}, while
\[
      2^m\le 2^{O(\log\log\log b)}=(\log\log b)^{O(1)}.
\]
Therefore the logarithm of the search space is bounded by a fixed power of $\log\log b$, which is $o(\log b)$.  Exponentiating gives $L^{O(\kappa m2^m)}=b^{o(1)}$.

For each candidate, compute all one-block contributions
\[
       g(a,b'):=\alpha(a)\cdot\beta(b'),\qquad a,b'\in\{0,1\}^m,
\]
and mark whether $a\cdot b'=0$.  Use dynamic programming over the $L$ block positions to compute two sets of sums: $S_{\rm good}$, obtainable with all block pairs orthogonal, and $S_{\rm bad}$, obtainable with at least one nonorthogonal block pair.  Initially $S_{\rm good}=\{0\}$ and $S_{\rm bad}=\varnothing$.  If $G_{\rm good}$ and $G_{\rm bad}$ are the one-block contribution sets from orthogonal and nonorthogonal block pairs, and $G_{\rm all}:=G_{\rm good}\cup G_{\rm bad}$, update by
\begin{align*}
      S_{\rm good}'&=S_{\rm good}+G_{\rm good},\\
      S_{\rm bad}'&=(S_{\rm bad}+G_{\rm all})\cup(S_{\rm good}+G_{\rm bad}),
\end{align*}
and deduplicate after each step.

Let
\[
       R_{\rm out}:=\ceil{L^{C_1\kappa m}}
\]
with $C_1$ much larger than $C_0$ and $W$.  In the DP, keep only sums in $[0,R_{\rm out}]$ and record an overflow flag; during either update, if a sum in $S_{\rm good}+G_{\rm good}$, $S_{\rm bad}+G_{\rm all}$, or $S_{\rm good}+G_{\rm bad}$ falls outside this interval, set the flag.  Any candidate with the overflow flag set is rejected, so for an accepted candidate no reachable sum has been discarded.  A candidate is accepted only if
\[
       S_{\rm good}\cap S_{\rm bad}=\varnothing,\qquad
       |S_{\rm good}|\le R_{\rm out},\qquad
       \max(S_{\rm good}\cup S_{\rm bad})\le R_{\rm out}.
\]
For an accepted candidate take $V=S_{\rm good}$.  Then all coordinates are at most $R\le R_{\rm out}$ after increasing $C_1$ if necessary, all actual full dot products and all values of $V$ are at most $R_{\rm out}$, and $|V|\le R_{\rm out}$.  Hence the output is a local $(m,L,C_{\rm bs}\kappa)$-reduction for a constant $C_{\rm bs}=C_{\rm bs}(W)$.

Existence of a local $(m,L,\kappa)$-reduction guarantees that at least one candidate passes the test, because its actual orthogonal value set and full dot-product range are bounded by $\ceil{L^{\kappa m}}\le R_{\rm out}$.  The DP states are subsets of $[0,R_{\rm out}]$, so the verification time per candidate is $L^{O(\kappa m)}2^{O(m)}\poly(m)=b^{o(1)}$ in the stated range.  Multiplying by the $b^{o(1)}$ candidate count still gives total search time $b^{o(1)}$.
\end{proof}

\begin{lemma}[Two-Parameter Uniformization]\label{lem:two-param-uniformization}
There are absolute constants $K_1,L_1,B_1$ such that, for every $L\ge L_1$ and $b\ge B_1$ with
\[
       2\le L\le \log\log\log b,
\]
there is a uniform local bichromatic nonnegative $(b,L,K_1\sqrt L)$-reduction.  The maps can be evaluated and the value set can be generated in time
\[
       L^{O(b\sqrt L)}\poly(b)=2^{O(b\sqrt L\log L)}\poly(b).
\]
\end{lemma}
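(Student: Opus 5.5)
The approach is to combine the nonuniform existence result, \Cref{cor:nonuniform-local}, with the exhaustive Base Search (\Cref{lem:base-search}) on a tiny block length $m$. Then lift to block length $b$ via CRT Composition (\Cref{lem:crt-composition}), following Chen's uniformization strategy.

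\textbf{Choice of micro block length.} Fix $L$ with $L_1\le L\le\log\log\log b$, and set $\kappa:=K_0\sqrt L$. Choose the micro block length
\[
m:=\ceil{\log\log\log b}.
\]
Then $m, L\le 2\log\log\log b$ and $1\le\kappa\le(\log\log\log b)^{O(1)}$, so the hypothesis \eqref{eq:base-range} holds with an absolute constant $C$ once $b\ge B_1$. By \Cref{cor:nonuniform-local} (with $L\ge L_0$), a $(m,L,\kappa)$-reduction of width $w_0$ exists. \Cref{lem:base-search} with $W=w_0$ therefore finds, deterministically and in time $b^{o(1)}$, a $(m,L,C_{\rm bs}\kappa)$-reduction. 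The search enumerates candidates over all widths up to $w_0$, uses the DP verification, and takes $V=S_{\rm good}$. The output is thus uniform in the required sense.

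\textbf{CRT composition.} Apply \Cref{lem:crt-composition} with micro parameter $\kappa':=C_{\rm bs}\kappa\ge1$ and target block length $B:=b\ge m$. We must check $q=\ceil{b/m}\le L^{\kappa' m}$. Since $L\ge2$, the right side is at least $2^{m}\ge 2^{\log\log\log b}$. This is far smaller than $b/m$, so a single composition does not suffice as is, and this is the main obstacle.

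\textbf{Handling the obstacle.} My first attempt is to iterate the composition a bounded number of times, in the style of Chen: compose from $m$ to $m_1$, then from $m_1$ to $m_2$, and so on up to $b$. Each stage multiplies the block length by at most $L^{\kappa m_{\text{prev}}}$, so the lengths grow like a tower. However, each stage also multiplies $\kappa$ by $C_{\rm crt}$, and an unbounded number of stages would destroy the $K_1\sqrt L$ bound.

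Instead I would enlarge $m$. Take $m$ as large as the base search allows, around $m\approx C\log\log\log b$, and note that the search cost is $L^{O(\kappa m2^m)}$. I would relax the search budget from $b^{o(1)}$ to the allowed $L^{O(b\sqrt L)}$. Then $m$ can be chosen around $\log b$, with $m2^m\le b$, so that $2^m\ge b/\log b$ and hence $q=\ceil{b/m}\le L^{\kappa' m}$ holds.

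Concretely, I would set $m:=\floor{\log_2 b}-\ceil{\log_2\log_2 b}$. The candidate count is then $L^{O(\kappa m2^m)}\le L^{O(\kappa b)}$, and the DP cost is $L^{O(\kappa m)}2^{O(m)}$, both within $L^{O(b\sqrt L)}$. This requires re-running the counting in the proof of \Cref{lem:base-search} outside the range \eqref{eq:base-range}; the argument there does not depend on that range except for the final $b^{o(1)}$ claim. With $q=O(b/\log b)\le 2^m\le L^{\kappa' m}$, one application of \Cref{lem:crt-composition} yields a $(b,L,C_{\rm crt}C_{\rm bs}K_0\sqrt L)$-reduction, so $K_1:=C_{\rm crt}C_{\rm bs}K_0$. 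The total time is $L^{O(b\sqrt L)}\poly(b)$, since the micro construction time is also within this bound.
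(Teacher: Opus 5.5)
Your final argument is correct, but it takes a different route from the paper.

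\textbf{The paper's route.} The paper follows Chen's three-level recursion. It sets $b_s=\max\{1,\lceil \log b_{s-1}/(\Lambda\log L)\rceil\}$ for $s=1,2,3$, so that the base length $b_3=O(\log\log\log b)$ lies in the range \eqref{eq:base-range}; this is where $L\le\log\log\log b$ is used. It runs the cheap $b^{o(1)}$-time search of \Cref{lem:base-search} there. It then applies \Cref{lem:crt-composition} three times, checking $\lceil b_{s-1}/b_s\rceil\le L^{C\Lambda b_s}$ at each level. Your worry about needing unboundedly many stages was unfounded: with lengths chosen this way, three levels suffice.

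\textbf{Your route.} You observe that the budget $L^{O(b\sqrt L)}$, which the CRT step consumes anyway, is large enough to pay for brute force directly at $m=\lfloor\log_2 b\rfloor-\lceil\log_2\log_2 b\rceil$. Since $m2^m\le b$, the candidate count $L^{O(\kappa m2^m)}$ is $L^{O(\kappa b)}$. You are also right that the search's correctness and its per-candidate DP cost $L^{O(\kappa m)}2^{O(m)}$ do not depend on \eqref{eq:base-range}. A single composition then finishes.

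\textbf{Comparison.} Your version is simpler and uses one composition instead of three, so $K_1=C_{\rm crt}C_{\rm bs}K_0$ rather than involving $C_{\rm crt}^3$. It also never uses the hypothesis $L\le\log\log\log b$, so it proves the lemma for all $L\ge L_1$; this extra generality is harmless but unnecessary for \Cref{thm:main}. The paper's version keeps the search itself cheap and applies \Cref{lem:base-search} exactly as stated. The total time is dominated by the CRT step in both.

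\textbf{One small slip.} You claim $q\le 2^m$, but $q=\lceil b/m\rceil\approx b/\log_2 b$ can exceed $2^m\in[b/(4\log_2 b),\,b/\log_2 b]$ by a constant factor. The hypothesis of \Cref{lem:crt-composition} still holds: $\kappa'\ge1$ and $L\ge4$ (ensured by taking $L_1\ge 4$) give $L^{\kappa' m}\ge 4^m\ge (b/(4\log_2 b))^2>b\ge q$ for large $b$.
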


\begin{proof}
By \Cref{cor:nonuniform-local}, for every block length and every $L\ge L_0$ there exists a nonuniform local reduction with parameter $\Lambda:=K_0\sqrt L$, where $L_0,K_0$, and the width bound $w_0$ are absolute.  Fix an absolute constant $C_*$ large enough that the bounds verified below imply \eqref{eq:base-range} with $C=C_*$.  Let $L_1$ dominate $L_0$ and the threshold $L_{\rm bs}(C_*,w_0)$ from \Cref{lem:base-search}.  The constants $K_0,w_0,L_1$ are fixed once and for all and hardwired into the uniform search.  After enlarging $K_0$ if necessary, $\Lambda\ge1$ for $L\ge L_1$.

Set $b_0:=b$.  For $s=1,2,3$, define
\[
       b_s:=\max\left\{1,\left\lceil \frac{\log b_{s-1}}{\Lambda\log L}\right\rceil\right\}.
\]
Then $b_3=O(\log\log\log b)$.  Also $\Lambda=K_0\sqrt L\le(\log\log\log b)^{C_*}$ after increasing $C_*$, so \eqref{eq:base-range} holds for $m=b_3$, $\kappa=\Lambda$, and $C=C_*$.  Let $A:=\log\log\log b$ and $D:=\Lambda\log L$.  Since $2\le L\le A$, we have $D\le C\sqrt A\log A$ for an absolute $C$.  For large $b$, the quantities $\log b_{s-1}/D$ are larger than $1$ for $s=1,2,3$, so the maxima in the definition of $b_s$ are inactive.  Also $\ceil{x}\ge x$, and
\[
       \log (3 D)\le \tfrac12 A
\]
for all large $A$.  Therefore
\[
       b_1\ge \frac{\log b}{2D},\qquad
       b_2\ge \frac{\log b_1}{2D}\ge \frac{\log\log b}{3D},\qquad
       b_3\ge \frac{\log b_2}{2D}\ge \frac{A}{4D}
\]
hold after enlarging a single absolute lower bound on $b$ if necessary.  Indeed, the first bound is weaker than $b_1\ge \log b/D$; the second follows from $\log b_1\ge \log\log b-\log(2D)\ge (2/3)\log\log b$; and the third follows from $\log b_2\ge \log\log\log b-\log(3D)\ge A/2$.  Hence $b_3\ge c\sqrt A/\log A\to\infty$ uniformly for $2\le L\le A$.  Choose the threshold $B_1$ large enough that, for every $b\ge B_1$, all preceding displayed inequalities hold and $b\ge b_{\rm bs}(C_*,w_0)$.  Thus the nonuniform reduction with block length $b_3$, width bound $w_0$, and parameter $\Lambda$ exists, and \Cref{lem:base-search} with $W=w_0$ finds a base reduction with parameter $C_{\rm bs}\Lambda$ in time $b^{o(1)}$.

Compose upward.  For each $s=3,2,1$, use \Cref{lem:crt-composition} to pass from block length $b_s$ to block length $b_{s-1}$.  The number of chunks $q_s:=\ceil{b_{s-1}/b_s}$ satisfies
\[
       q_s\le L^{C\Lambda b_s}
\]
for an absolute constant $C$, by the definition of $b_s$: if $b_s>1$, then
\[
      b_s\ge \frac{\log b_{s-1}}{\Lambda\log L},
\]
so
\[
      b_{s-1}\le e^{\Lambda b_s\log L}=L^{\Lambda b_s}
\]
up to an absolute constant factor coming from the ceiling, and hence
\[
      q_s=\ceil{b_{s-1}/b_s}\le b_{s-1}+1\le L^{C\Lambda b_s}.
\]
If $b_s=1$, then also $b_{s-1}\le e^{\Lambda\log L}=L^{\Lambda}$ by the maximality in the definition of $b_s$, so the same bound holds after enlarging $C$.  Absorbing $C$ into the current parameter, the hypothesis of \Cref{lem:crt-composition} holds.  Each composition multiplies the parameter by at most an absolute constant, so after three compositions the parameter is still $O(\Lambda)=O(\sqrt L)$.

The construction time is the base search time plus the three CRT composition times.  The latter are bounded by
\[
       L^{O(\Lambda b_0)}\poly(b)=L^{O(b\sqrt L)}\poly(b),
\]
and dominate $b^{o(1)}$ for large $b$.  This gives the desired uniform family after increasing $K_1$.
\end{proof}

\subsection{Hardness of Low-Dimensional \texorpdfstring{$\Z$}{Z}-OV}\label{sec:hardness}

\begin{proposition}[Dimensionality Reduction to $\Z$-OV]\label{prop:to-zov}
For all sufficiently large $L$ and sufficiently large $b$, let $2\le L\le d$, $b=\lceil d/L\rceil$, and assume $L\le\log\log\log b$.  Boolean $OV_{n,d}$ reduces in time
\[
      n\cdot 2^{O(b\sqrt L\log L)}+2^{O(b\sqrt L\log L)}\poly(b)
\]
to
\[
      2^{O(b\sqrt L\log L)}
\]
instances of $\Z$-OV on two sets of $n$ vectors in dimension $O(L)$, with entry bit-length $O(b\sqrt L\log L)$.
\end{proposition}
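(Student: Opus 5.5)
The plan is to instantiate \Cref{lem:two-param-uniformization} with the given $b$ and $L$, apply the resulting blockwise maps to every input vector, and then split the membership test ``dot product $\in V$'' into one $\Z$-OV instance per value of $V$.

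First, given an OV instance $A,B\subseteq\{0,1\}^d$, pad each vector with zeros to length $bL$, where $b=\lceil d/L\rceil$, so that $bL\ge d$. Padding does not change any inner product. Split each padded vector into $L$ blocks of length $b$.

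Since $L$ and $b$ are large and $2\le L\le\log\log\log b$, \Cref{lem:two-param-uniformization} applies. It yields a uniform local bichromatic nonnegative $(b,L,K_1\sqrt L)$-reduction $(\alpha,\beta,V)$ of constant width $w$. Its maps can be evaluated, and $V$ generated, in time $2^{O(b\sqrt L\log L)}\poly(b)$.

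The cleanest way to evaluate the maps is to tabulate $\alpha$ and $\beta$ on all $2^b$ blocks once. This costs $2^b\cdot 2^{O(b\sqrt L\log L)}\poly(b)=2^{O(b\sqrt L\log L)}\poly(b)$. We then map each $x\in A$ to $\Phi_A(x)=(\alpha(x_1),\dots,\alpha(x_L))\in\Z_{\ge0}^{wL}$ by table lookup, and similarly each $y\in B$ to $\Phi_B(y)$. The per-vector cost is $L$ lookups of $O(b\sqrt L\log L)$-bit integers. Summed over the input, this stays within $n\cdot 2^{O(b\sqrt L\log L)}$.

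By \Cref{def:local-red}, $x\cdot y=0$ if and only if $\Phi_A(x)\cdot\Phi_B(y)\in V$. The entries of the mapped vectors, the elements of $V$, and $|V|$ are all bounded by $\lceil L^{K_1 b\sqrt L}\rceil=2^{O(b\sqrt L\log L)}$. Hence the entry bit-length is $O(b\sqrt L\log L)$.

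Second, for each $v\in V$ we build one $\Z$-OV instance. We set $A_v=\{(\Phi_A(x),v):x\in A\}$ and $B_v=\{(\Phi_B(y),-1):y\in B\}$, so that
\[
(\Phi_A(x),v)\cdot(\Phi_B(y),-1)=\Phi_A(x)\cdot\Phi_B(y)-v .
\]
This is zero exactly when the mapped dot product equals $v$. Therefore the original instance has an orthogonal pair if and only if some instance $(A_v,B_v)$ does.

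There are $|V|\le 2^{O(b\sqrt L\log L)}$ instances, each on $n$ vectors per side in dimension $wL+1=O(L)$. Their entries have bit-length $O(b\sqrt L\log L)$, since $|v|$ obeys the same bound. Writing out all instances costs $|V|\cdot n\cdot O(L)\cdot O(b\sqrt L\log L)$ bits, which is again absorbed into $n\cdot 2^{O(b\sqrt L\log L)}$.

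The main point needing care is bookkeeping rather than a conceptual obstacle. The time to generate $V$ must appear only once, as the additive term, and all $\poly(b)$ and $L$ factors must be absorbed into the $2^{O(b\sqrt L\log L)}$ expressions. Since $b\ge1$ and $L\ge2$, we have $b\sqrt L\log L=\Omega(b)$, which lets $2^b$ and any $\poly(b,L)$ factors be swallowed by enlarging the constant.
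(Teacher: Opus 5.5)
Your proposal is correct and follows the paper's proof. Like the paper, you pad to $bL$ coordinates, apply the uniform maps from \Cref{lem:two-param-uniformization} blockwise, and produce one $\Z$-OV instance per $v\in V$ using an extra coordinate; the paper's $(\Phi_A(x),-v)$ and $(\Phi_B(y),1)$ are equivalent to your sign choice. Your explicit tabulation of $\alpha,\beta$ on all $2^b$ blocks, together with absorbing $2^b\poly(b)$ into $2^{O(b\sqrt L\log L)}$, just spells out reduction-time accounting that the paper leaves implicit.
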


\begin{proof}
Pad each Boolean vector with zeros so that its dimension is exactly $bL$, and split it into $L$ blocks of length $b$.  Use the uniform local maps $\alpha,\beta$ and value set $V$ supplied by \Cref{lem:two-param-uniformization}.  Let $\Phi_A,\Phi_B$ be their blockwise concatenations.  For every $v\in V$, create a $\Z$-OV instance by replacing
\[
      a\in A \quad\text{with}\quad (\Phi_A(a),-v),
      \qquad
      y\in B \quad\text{with}\quad (\Phi_B(y),1).
\]
Then
\[
      (\Phi_A(a),-v)\cdot(\Phi_B(y),1)
      =\Phi_A(a)\cdot\Phi_B(y)-v.
\]
Hence the original Boolean instance has an orthogonal pair iff at least one of the produced $\Z$-OV instances has a zero inner-product pair.  The number of instances, dimension, bit-length, and reduction time follow from \Cref{lem:two-param-uniformization}.  Arithmetic on $O(b\sqrt L\log L)$-bit integers contributes only polynomial factors in that bit-length.
\end{proof}

Finally, we prove the main theorem. 

\begin{theorem}[Main Theorem]
    \label{thm:main}
    Assume OVH or SETH.  
    Let $D(n)=\omega(1)$ be a constructible dimension function. 
    Then, for every fixed $\delta>0$, $\Z$-OV on two $n$-point sets in dimension at most $D(n)$, with $O(\log n)$-bit entries, has no $O(n^{2-\delta})$-time algorithm.  
\end{theorem}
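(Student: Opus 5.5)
We argue by contradiction. Suppose some algorithm $\mathcal A$ solves $\Z$-OV on $n$-point sets in dimension at most $D(n)$ with $O(\log n)$-bit entries in time $O(n^{2-\delta})$. We will use it to solve Boolean OV in dimension $c\log n$ in time $O(n^{2-\delta'})$ for some $\delta'>0$ and every constant $c$, which contradicts OVH; SETH implies OVH, so that case is covered too.

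\textbf{Choice of $L$.} Fix $c$ and an OV instance with $d=c\log n$. We want $L$ with three properties: $L\to\infty$; $C\cdot L\le D(n)$, where $C$ is the constant in the $O(L)$ dimension of \Cref{prop:to-zov} plus one extra coordinate; and $L\le\log\log\log b$, where $b=\lceil d/L\rceil$. Take
\[
L:=\min\bigl(\lfloor D(n)/C\rfloor,\ \lfloor\log\log\log\log n\rfloor\bigr),
\]
which is computable in $n^{o(1)}$ time by constructibility of $D$. Since $D(n)=\omega(1)$, we get $L=\omega(1)$, so $L\ge L_1$ for large $n$. Moreover $b\ge\log n/\log\log\log\log n$, so $\log\log\log b\ge L$. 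If $D(n)$ is very large the min simply uses the second term; since dimension at most $D(n)$ is allowed, padding is not an issue.

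\textbf{Applying \Cref{prop:to-zov}.} The reduction produces $2^{O(b\sqrt L\log L)}$ instances. Since $b\sqrt L\log L=O\bigl((d/L)\sqrt L\log L\bigr)=O\bigl(c\log n\cdot\log L/\sqrt L\bigr)=o(\log n)$ because $L\to\infty$, the number of instances is $n^{o(1)}$. Each instance has dimension $O(L)\le D(n)$. The preprocessing time is $n^{1+o(1)}$.

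\textbf{Entry size.} This is the step I expect to be the main obstacle, though it is mild. The entries have $O(b\sqrt L\log L)=o(\log n)\le O(\log n)$ bits, so the requirement on the input format is met. The $v$ coordinate is bounded the same way.

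\textbf{Conclusion.} Solving all instances with $\mathcal A$ takes $n^{o(1)}\cdot O(n^{2-\delta})=O(n^{2-\delta/2})$ time, for Boolean OV in dimension $c\log n$ and every $c$. This contradicts OVH with $\delta/2$.

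For small $n$, or if $L$ falls below the thresholds of \Cref{prop:to-zov}, only finitely many $n$ are affected, and these do not matter asymptotically. The remaining bookkeeping is to check that the thresholds on $b$ ("$b$ sufficiently large") hold: indeed $b\to\infty$.
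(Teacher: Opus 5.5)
Your argument has the same structure as the paper's: choose $L$ as a minimum of a $D(n)$-term and a $\log\log\log\log n$-term, apply \Cref{prop:to-zov}, and count $2^{O(b\sqrt L\log L)}=n^{o(1)}$ instances of dimension $O(L)\le D(n)$ with $o(\log n)$-bit entries. However, one step is false as written: the claim that $b\ge \log n/\log\log\log\log n$ gives $\log\log\log b\ge L$. Since $L\to\infty$ while $c$ is fixed, $b\approx c\log n/L$ is smaller than $\log n$. So $\log\log\log b=\log\log\log\log n-o(1)$ is \emph{strictly less} than $\log\log\log\log n$, and the floor does not always absorb this.

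Concretely, suppose the second term of your minimum is active (e.g.\ $D(n)=\lceil\log n\rceil$) and $n=\lceil \exp(\exp(\exp(\exp k)))\rceil$ for a large integer $k$. Then $L=k$, while $b\le c\exp(\exp(\exp k))/k+2<\exp(\exp(\exp k))$. Hence $\log\log\log b<k=L$, and the hypothesis $L\le\log\log\log b$ of \Cref{prop:to-zov} fails. This happens for infinitely many $n$, so your closing remark that only finitely many $n$ are affected does not cover it.

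The fix is to build in constant-factor slack, as the paper does. Take $L:=\lfloor\min\{D(n),\log\log\log\log n\}/C_9\rfloor$ with a constant $C_9>1$ (subtracting $1$ from the second term also works). Then $\log b=(1+o(1))\log\log n$, so $\log\log\log b=(1+o(1))\log\log\log\log n\ge L$ for all large $n$. The rest of your argument then goes through unchanged. So the delicate bookkeeping point is this inequality, not the entry size you flagged.
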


We remark that hardness of exact $\Z$-Max-IP, Furthest Pair and Bichromatic Closest Pair follows from reductions from $\Z$-OV, based on reductions from \cite{DBLP:conf/soda/Williams18, DBLP:journals/toc/Chen20}. See~\cref{app:reductions} for details. 

\begin{proof}[Proof of \Cref{thm:main}]
Fix $\delta>0$.  
It is enough to consider $0<\delta<1$; otherwise replace $\delta$ by $1/2$.  
Suppose, toward contradiction, that $\Z$-OV in dimension at most $D(n)$ with $O(\log n)$-bit entries has an $O(n^{2-\delta})$-time algorithm.  
Let $d=\ceil{c\log n}$, where $c$ is the OVH constant for $\delta/4$.

All constants and finite lower thresholds required by the reduction are fixed independent of $n$.
Let $L_0$ and $b_0$ dominate the finitely many lower bounds on $L$ and $b$ that occur in those statements.

Choose
\[
      L=L(n):=\left\lfloor
      \min\left\{\frac{D(n)}{C_9},\frac{\log\log\log\log n}{C_9}\right\}
      \right\rfloor
\]
for a sufficiently large constant $C_9$.  By \Cref{def:constructible}, this value of $L$, and hence $b=\ceil{d/L}$, can be computed in $n^{o(1)}$ time; this cost is included in the reduction time below.  Since $D(n)=\omega(1)$, for all sufficiently large $n$ we have $L\ge\max\{2,L_0\}$ and $L=\omega(1)$.  With $b=\lceil d/L\rceil$, we have
\[
      b\ge \frac{c\log n}{L},
\]
so
\[
      \log b=\log\log n-\log L+O(1)=(1+o(1))\log\log n
\]
because the second bound term in the minimization gives $\log L=o(\log\log n)$.  Hence
\[
      \log\log\log b=(1+o(1))\log\log\log\log n.
\]
In particular $b\to\infty$, so $b\ge b_0$ for all sufficiently large $n$.
Choosing $C_9$ large enough, this implies
\[
      L\le\log\log\log b
\]
for large $n$, so the uniform reduction applies.  The first term in the minimization ensures that the final $\Z$-OV dimension $O(L)$ is at most $D(n)$ for sufficiently large $C_9$; if the assumed algorithm is stated for exactly dimension $D(n)$ rather than at most $D(n)$, pad all vectors with zero coordinates.

By \Cref{prop:to-zov}, the number of produced $\Z$-OV instances is
\[
      2^{O(b\sqrt L\log L)}
      =\exp\left(O\left(\frac{d}{\sqrt L}\log L\right)\right)
      =n^{o(1)}.
\]
Indeed, $d=O(\log n)$ and $L=\omega(1)$, so
\[
      O\left(\frac{d}{\sqrt L}\log L\right)
      =O\left(\log n\cdot\frac{\log L}{\sqrt L}\right)
      =o(\log n).
\]
Their dimensions are $O(L)$ and their bit-length is
\[
      O(b\sqrt L\log L)
      =O\left(\frac{d}{\sqrt L}\log L\right)=o(\log n),
\]
so in particular it is $O(\log n)$.  The reduction time is
\[
      n\cdot n^{o(1)}+n^{o(1)}\poly(b)=n^{1+o(1)},
\]
because $b=\lceil d/L\rceil=O(\log n)$ and therefore every fixed polynomial in $b$ is $n^{o(1)}$.

Running the assumed $\Z$-OV algorithm on all instances takes
\[
      n^{o(1)}\cdot n^{2-\delta}=n^{2-\delta+o(1)}.
\]
Together with the $n^{1+o(1)}$ reduction time, this is at most $n^{2-\delta/2}$ for sufficiently large $n$, contradicting OVH for the choice of $c = c(\delta/4)$.
\end{proof}

\section{Acknowledgment}

Y. X. is grateful to Lijie Chen for discussing the Maximum Inner Product problem with him during their time at MIT, and for sharing the key insight that his proof is bottlenecked by the density of primes.

\bibliographystyle{alpha}
\bibliography{ref}

\newcommand{\etalchar}[1]{$^{#1}$}
\begin{thebibliography}{XWL{\etalchar{+}}11}

\bibitem[ABB{\etalchar{+}}25]{achim2025aristotleimolevelautomatedtheorem}
Tudor Achim, Alex Best, Alberto Bietti, Kevin Der, Mathïs Fédérico, Sergei Gukov, Daniel Halpern-Leistner, Kirsten Henningsgard, Yury Kudryashov, Alexander Meiburg, Martin Michelsen, Riley Patterson, Eric Rodriguez, Laura Scharff, Vikram Shanker, Vladmir Sicca, Hari Sowrirajan, Aidan Swope, Matyas Tamas, Vlad Tenev, Jonathan Thomm, Harold Williams, and Lawrence Wu.
\newblock Aristotle: Imo-level automated theorem proving, 2025.

\bibitem[ABV24]{DBLP:conf/mfcs/AndrejevsBV24}
Vladimirs Andrejevs, Aleksandrs Belovs, and Jevgenijs Vihrovs.
\newblock Quantum algorithms for hopcroft's problem.
\newblock In {\em Proceedings of the 49th International Symposium on Mathematical Foundations of Computer Science (MFCS)}, pages 9:1--9:16, 2024.

\bibitem[ACSS20]{DBLP:conf/focs/AlmanCS020}
Josh Alman, Timothy Chu, Aaron Schild, and Zhao Song.
\newblock Algorithms and hardness for linear algebra on geometric graphs.
\newblock In {\em Proceedings of the 61st {IEEE} Annual Symposium on Foundations of Computer Science (FOCS)}, pages 541--552, 2020.

\bibitem[AESW90]{DBLP:conf/compgeom/AgarwalESW90}
Pankaj~K. Agarwal, Herbert Edelsbrunner, Otfried Schwarzkopf, and Emo Welzl.
\newblock Euclidean minimum spanning trees and bichromatic closest pairs.
\newblock In {\em Proceedings of the 6th Annual Symposium on Computational Geometry (SoCG)}, pages 203--210, 1990.

\bibitem[AI08]{DBLP:journals/cacm/AndoniI08}
Alexandr Andoni and Piotr Indyk.
\newblock Near-optimal hashing algorithms for approximate nearest neighbor in high dimensions.
\newblock {\em Commun. {ACM}}, 51(1):117--122, 2008.

\bibitem[AIL{\etalchar{+}}15]{DBLP:conf/nips/AndoniILRS15}
Alexandr Andoni, Piotr Indyk, Thijs Laarhoven, Ilya~P. Razenshteyn, and Ludwig Schmidt.
\newblock Practical and optimal {LSH} for angular distance.
\newblock In {\em Proceedings of the 29th Annual Conference on Neural Information Processing Systems}, pages 1225--1233, 2015.

\bibitem[AINR14]{DBLP:conf/soda/AndoniINR14}
Alexandr Andoni, Piotr Indyk, Huy~L. Nguyen, and Ilya~P. Razenshteyn.
\newblock Beyond locality-sensitive hashing.
\newblock In {\em Proceedings of the 25th Annual {ACM-SIAM} Symposium on Discrete Algorithms (SODA)}, pages 1018--1028, 2014.

\bibitem[ALS{\etalchar{+}}23]{DBLP:conf/nips/AlmanL00Z23}
Josh Alman, Jiehao Liang, Zhao Song, Ruizhe Zhang, and Danyang Zhuo.
\newblock Bypass exponential time preprocessing: Fast neural network training via weight-data correlation preprocessing.
\newblock In {\em Proceedings of the 37th Annual Conference on Neural Information Processing Systems (NeurIPS)}, 2023.

\bibitem[AM18]{atiyah2018introduction}
Michael~F Atiyah and Ian~Grant Macdonald.
\newblock {\em Introduction to commutative algebra}.
\newblock CRC press, 2018.

\bibitem[APRS16]{DBLP:conf/pods/AhlePR016}
Thomas~Dybdahl Ahle, Rasmus Pagh, Ilya~P. Razenshteyn, and Francesco Silvestri.
\newblock On the complexity of inner product similarity join.
\newblock In {\em Proceedings of the 35th {ACM} {SIGMOD-SIGACT-SIGAI} Symposium on Principles of Database Systems (PODS)}, pages 151--164, 2016.

\bibitem[AR15]{DBLP:conf/stoc/AndoniR15}
Alexandr Andoni and Ilya~P. Razenshteyn.
\newblock Optimal data-dependent hashing for approximate near neighbors.
\newblock In {\em Proceedings of the 47th Annual {ACM} on Symposium on Theory of Computing (STOC)}, pages 793--801, 2015.

\bibitem[AW15]{DBLP:conf/focs/AlmanW15}
Josh Alman and Ryan Williams.
\newblock Probabilistic polynomials and hamming nearest neighbors.
\newblock In {\em Proceedings of the {IEEE} 56th Annual Symposium on Foundations of Computer Science (FOCS)}, pages 136--150, 2015.

\bibitem[BS76]{DBLP:conf/stoc/BentleyS76}
Jon~Louis Bentley and Michael~Ian Shamos.
\newblock Divide-and-conquer in multidimensional space.
\newblock In {\em Proceedings of the 8th Annual {ACM} Symposium on Theory of Computing (STOC)}, pages 220--230, 1976.

\bibitem[Cha93]{DBLP:journals/dcg/Chazelle93a}
Bernard Chazelle.
\newblock Cutting hyperplanes for divide-and-conquer.
\newblock {\em Discret. Comput. Geom.}, 9:145--158, 1993.

\bibitem[Che20]{DBLP:journals/toc/Chen20}
Lijie Chen.
\newblock On the hardness of approximate and exact (bichromatic) maximum inner product.
\newblock {\em Theory Comput.}, 16:1--50, 2020.

\bibitem[Chr17]{DBLP:conf/soda/Christiani17}
Tobias Christiani.
\newblock A framework for similarity search with space-time tradeoffs using locality-sensitive filtering.
\newblock In {\em Proceedings of the 28th Annual {ACM-SIAM} Symposium on Discrete Algorithms (SODA)}, pages 31--46, 2017.

\bibitem[CLS{\etalchar{+}}18]{chen2018fast}
Shuangmin Chen, Taijun Liu, Zhenyu Shu, Shiqing Xin, Ying He, and Changhe Tu.
\newblock Fast and robust shape diameter function.
\newblock {\em Plos one}, 13(1):e0190666, 2018.

\bibitem[CMTV00]{DBLP:conf/sigmod/CorralMTV00}
Antonio Corral, Yannis Manolopoulos, Yannis Theodoridis, and Michael Vassilakopoulos.
\newblock Closest pair queries in spatial databases.
\newblock In {\em Proceedings of the 2000 {ACM} {SIGMOD} International Conference on Management of Data}, pages 189--200, 2000.

\bibitem[CMTV04]{DBLP:journals/dke/CorralMTV04}
Antonio Corral, Yannis Manolopoulos, Yannis Theodoridis, and Michael Vassilakopoulos.
\newblock Algorithms for processing k-closest-pair queries in spatial databases.
\newblock {\em Data Knowl. Eng.}, 49(1):67--104, 2004.

\bibitem[CP17]{DBLP:conf/stoc/ChristianiP17}
Tobias Christiani and Rasmus Pagh.
\newblock Set similarity search beyond minhash.
\newblock In {\em Proceedings of the 49th Annual {ACM} {SIGACT} Symposium on Theory of Computing (STOC)}, pages 1094--1107, 2017.

\bibitem[CZ24]{DBLP:journals/talg/ChanZ24}
Timothy~M. Chan and Da~Wei Zheng.
\newblock Hopcroft's problem, log* shaving, two-dimensional fractional cascading, and decision trees.
\newblock {\em {ACM} Trans. Algorithms}, 20(3):24, 2024.

\bibitem[DF04]{dummit2004abstract}
David~S. Dummit and Richard~M. Foote.
\newblock {\em Abstract algebra}.
\newblock John Wiley \& Sons, 3rd edition, 2004.

\bibitem[DHKP97]{DBLP:journals/jal/DietzfelbingerHKP97}
Martin Dietzfelbinger, Torben Hagerup, Jyrki Katajainen, and Martti Penttonen.
\newblock A reliable randomized algorithm for the closest-pair problem.
\newblock {\em J. Algorithms}, 25(1):19--51, 1997.

\bibitem[Epp00]{DBLP:journals/jea/Eppstein00}
David Eppstein.
\newblock Fast hierarchical clustering and other applications of dynamic closest pairs.
\newblock {\em {ACM} J. Exp. Algorithmics}, 5:1, 2000.

\bibitem[Erd46]{erdos1946sets}
Paul Erd\H{o}s.
\newblock On sets of distances of n points.
\newblock {\em The American Mathematical Monthly}, 53(5):248--250, 1946.

\bibitem[Eri95]{DBLP:conf/cccg/Erickson95}
Jeff Erickson.
\newblock On the relative complexities of some geometric problems.
\newblock In {\em Proceedings of the 7th Canadian Conference on Computational Geometry (CCCG)}, pages 85--90, 1995.

\bibitem[GCI{\etalchar{+}}16]{DBLP:conf/adbis/Garcia-GarciaCI16}
Francisco Garc{\'{\i}}a{-}Garc{\'{\i}}a, Antonio Corral, Luis Iribarne, Michael Vassilakopoulos, and Yannis Manolopoulos.
\newblock Enhancing spatialhadoop with closest pair queries.
\newblock In {\em Proceedings of the 20th East European Conference on Advances in Databases and Information Systems (ADBIS)}, pages 212--225, 2016.

\bibitem[GHS{\etalchar{+}}26]{gupta2026subquadratic}
Shreya Gupta, Boyang Huang, Barna Saha, Yinzhan Xu, and Christopher Ye.
\newblock Subquadratic algorithms and hardness for attention with any temperature.
\newblock In {\em The Fourteenth International Conference on Learning Representations (ICLR)}, 2026.

\bibitem[GJK88]{DBLP:journals/trob/GilbertJK88}
Elmer~G. Gilbert, Daniel~W. Johnson, and S.~Sathiya Keerthi.
\newblock A fast procedure for computing the distance between complex objects in three-dimensional space.
\newblock {\em {IEEE} J. Robotics Autom.}, 4(2):193--203, 1988.

\bibitem[Gon85]{gonzalez1985clustering}
Teofilo~F. Gonzalez.
\newblock Clustering to minimize the maximum intercluster distance.
\newblock {\em Theoretical computer science}, 38:293--306, 1985.

\bibitem[GS64]{golod1964class}
Evgeniy~S. Golod and Igor~R. Shafarevich.
\newblock On the class field tower.
\newblock {\em Izvestiya Rossiiskoi Akademii Nauk. Seriya Matematicheskaya}, 28(2):261--272, 1964.

\bibitem[GS65]{GolodShafarevich1965}
Evgeniy~S. Golod and Igor~R. Shafarevich.
\newblock On class field towers.
\newblock In {\em Fourteen Papers on Logic, Algebra, Complex Variables and Topology}, volume~48 of {\em American Mathematical Society Translations, Series 2}, pages 91--102. American Mathematical Society, 1965.
\newblock English translation of the 1964 Russian paper.

\bibitem[Har01]{DBLP:conf/compgeom/Har-Peled01}
Sariel Har{-}Peled.
\newblock A practical approach for computing the diameter of a point set.
\newblock In {\em Proceedings of the 17th Annual Symposium on Computational Geometry (SoCG)}, pages 177--186, 2001.

\bibitem[HIM12]{DBLP:journals/toc/Har-PeledIM12}
Sariel Har{-}Peled, Piotr Indyk, and Rajeev Motwani.
\newblock Approximate nearest neighbor: Towards removing the curse of dimensionality.
\newblock {\em Theory Comput.}, 8(1):321--350, 2012.

\bibitem[KKK18]{DBLP:journals/talg/KarppaKK18}
Matti Karppa, Petteri Kaski, and Jukka Kohonen.
\newblock A faster subquadratic algorithm for finding outlier correlations.
\newblock {\em {ACM} Trans. Algorithms}, 14(3):31:1--31:26, 2018.

\bibitem[KM95]{DBLP:journals/iandc/KhullerM95}
Samir Khuller and Yossi Matias.
\newblock A simple randomized sieve algorithm for the closest-pair problem.
\newblock {\em Inf. Comput.}, 118(1):34--37, 1995.

\bibitem[LC91]{lin1991fast}
Ming~C. Lin and John~F. Canny.
\newblock A fast algorithm for incremental distance calculation.
\newblock In {\em Proceedings of the 1991 IEEE International Conference on Robotics and Automation}, pages 1008--1014, 1991.

\bibitem[Mar18]{marcus1977number}
Daniel~A. Marcus.
\newblock {\em Number fields}.
\newblock Springer, 2nd edition, 2018.

\bibitem[Mat92]{DBLP:journals/dcg/Matousek92}
Jir{\'{\i}} Matousek.
\newblock Efficient partition trees.
\newblock {\em Discret. Comput. Geom.}, 8:315--334, 1992.

\bibitem[Mat93]{DBLP:journals/dcg/Matousek93}
Jir{\'{\i}} Matousek.
\newblock Range searching with efficient hiearchical cutting.
\newblock {\em Discret. Comput. Geom.}, 10:157--182, 1993.

\bibitem[May15]{mayer2015new}
Daniel~C. Mayer.
\newblock New number fields with known p-class tower.
\newblock {\em Tatra Mountains Math. Publ}, 64:21--57, 2015.

\bibitem[MB02]{DBLP:journals/ijcga/MalandainB02}
Gr{\'{e}}goire Malandain and Jean{-}Daniel Boissonnat.
\newblock Computing the diameter of a point set.
\newblock {\em Int. J. Comput. Geom. Appl.}, 12(6):489--510, 2002.

\bibitem[Mil03]{milne2003fields}
James~S. Milne.
\newblock Fields and galois theory.
\newblock {\em Courses Notes, Version}, 4, 2003.

\bibitem[Mil20]{milne2020algebraic}
James~S. Milne.
\newblock Algebraic number theory, 2020.

\bibitem[MMC21]{DBLP:conf/pci/MavrommatisMC21}
George Mavrommatis, Panagiotis Moutafis, and Antonio Corral.
\newblock Enhancing the slicenbound algorithm for the closest-pairs query with binary space partitioning.
\newblock In {\em Proceedings of the 25th Pan-Hellenic Conference on Informatics (PCI)}, pages 107--112. {ACM}, 2021.

\bibitem[Neu13]{neukirch2013algebraic}
J{\"u}rgen Neukirch.
\newblock {\em Algebraic number theory}, volume 322.
\newblock Springer Science \& Business Media, 2013.

\bibitem[NS15]{DBLP:conf/icml/NeyshaburS15}
Behnam Neyshabur and Nathan Srebro.
\newblock On symmetric and asymmetric lshs for inner product search.
\newblock In {\em Proceedings of the 32nd International Conference on Machine Learning (ICML)}, pages 1926--1934, 2015.

\bibitem[NTM01]{DBLP:conf/vldb/ManolopoulosT01}
Alexandros Nanopoulos, Yannis Theodoridis, and Yannis Manolopoulos.
\newblock C\({}^{\mbox{2}}\)p: Clustering based on closest pairs.
\newblock In {\em Proceedings of 27th International Conference on Very Large Data Bases (VLDB)}, pages 331--340, 2001.

\bibitem[Ope26]{UnitDistance}
OpenAI.
\newblock Planar point sets with many unit distances, 2026.
\newblock Available at \url{https://cdn.openai.com/pdf/74c24085-19b0-4534-9c90-465b8e29ad73/unit-distance-proof.pdf}.

\bibitem[RG12]{DBLP:conf/kdd/RamG12}
Parikshit Ram and Alexander~G. Gray.
\newblock Maximum inner-product search using cone trees.
\newblock In {\em Proceedings of the 18th {ACM} {SIGKDD} International Conference on Knowledge Discovery and Data Mining (KDD)}, pages 931--939, 2012.

\bibitem[RLMK05]{DBLP:journals/jcise/RedonLMK05}
Stephane Redon, Ming~C. Lin, Dinesh Manocha, and Young~J. Kim.
\newblock Fast continuous collision detection for articulated models.
\newblock {\em J. Comput. Inf. Sci. Eng.}, 5(2):126--137, 2005.

\bibitem[RR07]{DBLP:conf/nips/RahimiR07}
Ali Rahimi and Benjamin Recht.
\newblock Random features for large-scale kernel machines.
\newblock In {\em Proceedings of the 21st Annual Conference on Neural Information Processing Systems}, pages 1177--1184, 2007.

\bibitem[{\v{S}}af63]{vsafarevivc1963extensions}
Igor~R. {\v{S}}afarevi{\v{c}}.
\newblock Extensions {\`a} points de ramification donn{\'e}s (en russe).
\newblock {\em Publications Math{\'e}matiques de l'IH{\'E}S}, 18:71--92, 1963.

\bibitem[SH75]{DBLP:conf/focs/ShamosH75}
Michael~Ian Shamos and Dan Hoey.
\newblock Closest-point problems.
\newblock In {\em Proceedings of the 16th Annual Symposium on Foundations of Computer Science (FOCS)}, pages 151--162, 1975.

\bibitem[Sha66]{shafarevich1966extensions}
Igor~R. Shafarevich.
\newblock Extensions with given points of ramification.
\newblock {\em Amer. Math. Soc. Translation, Ser}, 2(59):128--149, 1966.

\bibitem[SL14]{DBLP:conf/nips/Shrivastava014}
Anshumali Shrivastava and Ping Li.
\newblock Asymmetric {LSH} {(ALSH)} for sublinear time maximum inner product search {(MIPS)}.
\newblock In {\em Proceedings of the 28th Annual Conference on Neural Information Processing Systems}, pages 2321--2329, 2014.

\bibitem[SL15]{DBLP:conf/www/Shrivastava015}
Anshumali Shrivastava and Ping Li.
\newblock Asymmetric minwise hashing for indexing binary inner products and set containment.
\newblock In {\em Proceedings of the 24th International Conference on World Wide Web (WWW)}, pages 981--991, 2015.

\bibitem[SSC08]{DBLP:journals/vc/ShapiraSC08}
Lior Shapira, Ariel Shamir, and Daniel Cohen{-}Or.
\newblock Consistent mesh partitioning and skeletonisation using the shape diameter function.
\newblock {\em Vis. Comput.}, 24(4):249--259, 2008.

\bibitem[TG17]{DBLP:journals/tods/TeflioudiG17}
Christina Teflioudi and Rainer Gemulla.
\newblock Exact and approximate maximum inner product search with {LEMP}.
\newblock {\em {ACM} Trans. Database Syst.}, 42(1):5:1--5:49, 2017.

\bibitem[{The}26]{erdos_unit_distance_formalization_2026}
{The Erd\H{o}s unit-distance formalization contributors}.
\newblock {Lean 4 formalization of the disproof of Erd\H{o}s's planar unit-distance conjecture}, 2026.
\newblock \url{https://github.com/logical-intelligence/erdos-unit-distance}.

\bibitem[Tou83]{toussaint1983solving}
Godfried~T. Toussaint.
\newblock Solving geometric problems with the rotating calipers.
\newblock In {\em Proc. IEEE Melecon}, volume~83, page A10, 1983.

\bibitem[Tsc26]{tschebotareff1926bestimmung}
Nikolaj Tschebotareff.
\newblock Die bestimmung der dichtigkeit einer menge von primzahlen, welche zu einer gegebenen substitutionsklasse geh{\"o}ren.
\newblock {\em Mathematische Annalen}, 95(1):191--228, 1926.

\bibitem[Val15]{DBLP:journals/jacm/Valiant15}
Gregory Valiant.
\newblock Finding correlations in subquadratic time, with applications to learning parities and the closest pair problem.
\newblock {\em J. {ACM}}, 62(2):13:1--13:45, 2015.

\bibitem[Wil05]{DBLP:journals/tcs/Williams05}
Ryan Williams.
\newblock A new algorithm for optimal 2-constraint satisfaction and its implications.
\newblock {\em Theor. Comput. Sci.}, 348(2-3):357--365, 2005.

\bibitem[Wil18]{DBLP:conf/soda/Williams18}
Ryan Williams.
\newblock On the difference between closest, furthest, and orthogonal pairs: Nearly-linear vs barely-subquadratic complexity.
\newblock In {\em Proceedings of the 29th Annual {ACM-SIAM} Symposium on Discrete Algorithms (SODA)}, pages 1207--1215, 2018.

\bibitem[Wil20]{wilkes2020part}
Gareth Wilkes.
\newblock Part iii profinite groups.
\newblock Lecture notes, University of Cambridge, 2020.
\newblock Available at \url{https://www.dpmms.cam.ac.uk/~grw46/LectureNotes.pdf}.

\bibitem[XWL{\etalchar{+}}11]{xiao2011efficient}
Chuan Xiao, Wei Wang, Xuemin Lin, Jeffrey~Xu Yu, and Guoren Wang.
\newblock Efficient similarity joins for near-duplicate detection.
\newblock {\em ACM Transactions on Database Systems (TODS)}, 36(3):1--41, 2011.

\bibitem[Yao82]{yao1982constructing}
Andrew Chi-Chih Yao.
\newblock On constructing minimum spanning trees in k-dimensional spaces and related problems.
\newblock {\em SIAM Journal on Computing}, 11(4):721--736, 1982.

\end{thebibliography}

\appendix 
\newpage 

\section{Tools from Number Theory}
\label{sec:arithmetic}

The goal of this section is to prove \Cref{prop:codebook}.
This section begins with \Cref{thm:ud-external}, from which the remaining proof of \Cref{prop:codebook} will be self-contained.

\subsection{Further Group Theory and Number Theory Preliminaries}

We begin with the necessary preliminaries (beyond those given in \Cref{sec:number-theory-prelims}).
We use $H_2(x) = x \log(1/x) + (1 - x) \log(1/(1 - x))$ to denote the binary entropy function with natural logarithms.
Let $G$ be a group.
A subgroup $H \triangleleft G$ is \emph{normal} if $g H g^{-1} = H$ for all $g \in G$, i.e. it is invariant under conjugation.
The \emph{order} of a group is its size $|G|$.
A normal subgroup $H \triangleleft G$ has \emph{index} $[G : H] = |G/H|$.

\begin{definition}[Normal Extension (Chapter 13 of \cite{dummit2004abstract})]
    \label{def:normal}
    The extension field $K/F$ is called a splitting field for the polynomial $f(x)$ if $f(x)$ factors completely into linear factors (or splits completely) in $K[x]$ (i.e. splits into degree-$1$ polynomials with coefficients in $K$) and $f(x)$ does not factor completely into linear factors over any proper subfield of $K$ containing $F$.
    An extension $K/F$ is normal if $K$ is the splitting field for some collection of polynomials.
    The normal closure of an extension $K/F$ is the smallest normal extension $K/F \subseteq N/F$.
\end{definition}

\begin{definition}[Separable (Chapter 13 of \cite{dummit2004abstract})]
    \label{def:separable}
    A polynomial over $F$ is called separable if it has no multiple roots (i.e., all its roots are distinct).
    The field $K$ is said to be separable (or separably algebraic) over $F$ if
    every element of $K$ is the root of a separable polynomial over $F$ (equivalently, the
    minimal polynomial over $F$ of every element of $K$ is separable). 
\end{definition}

\begin{definition}[Galois Extension ({\cite[Chapter 14]{dummit2004abstract}})]
    \label{def:galois}
    A finite extension $K/F$ is Galois if $|\Aut(K/F)| = [K:F]$.
    In this case, the Galois group of $K/F$ is $\Gal(K/F) = \Aut(K/F)$.
\end{definition}

We note that Galois extensions are normal and separable. 
Furthermore, since any extension of $\Q$ is separable, such an extension is Galois iff it is normal.

\begin{definition}[Fixed Field \cite{dummit2004abstract}]
    \label{def:fixed-field}
    If $H$ is a subgroup of $\Aut(K/F)$, the subfield of $K$ fixed by all the elements of $H$ is called the fixed field of $H$, denoted $K^{H}$.
\end{definition}

Next, we define $p$-groups and pro-$p$ groups.

\begin{definition}
    \label{def:p-group}
    A finite group $G$ is a $p$-group if every element has order that is a power of $p$.
\end{definition}

\begin{definition}[Inverse Limits \cite{wilkes2020part}]
    \label{def:inverse-limit}
    A partially ordered set $(J, \preceq)$ is an inverse system if for any $i,j \in J$ there is some $k \in J$ such that $k \preceq i$ and $k \preceq j$.
    
    An inverse system of groups $(J, \preceq)$ consists of: (1) group $G_j$ for each $j \in J$ and, whenever $i \preceq j$, we have some homomorphism $\phi_{ij}: G_{i} \rightarrow G_{j}$ such that $\phi_{ii} = \id_{G_{i}}$ and $\phi_{ik} = \phi_{jk} \circ \phi_{ij}$ when $i \preceq j \preceq k$.
    
    The inverse limit of an inverse system of groups (more informally, the inverse limit of the groups $G_{j}$) is the group, denoted $\underset{\leftarrow}{\lim} G_{j}$, of sequences $(g_{j})_{j \in J}$ such that $g_{j} \in G_{j}$ for all $j$ and $\phi_{ij}(g_{i}) = g_{j}$ for all $i \preceq j$.
\end{definition}

\begin{definition}[Pro-$p$ Groups \cite{wilkes2020part}]
    \label{def:pro-p-group}
    Let $p$ be a prime. A pro-$p$ group is an inverse limit of an inverse system of finite $p$-groups.
\end{definition}

\begin{definition}[Maximal unramified pro-p extension (Definition A.3 of \cite{UnitDistance})]
    \label{def:compositum}
    For a number field $F$, $F^{\ur,p}$ denotes the compositum of all finite everywhere-unramified Galois extensions of $F$ whose Galois groups are finite $p$-groups. 
    Its Galois group $\Gal(F^{\ur,p}/F)$ is a pro-$p$ group, and its finite quotients correspond to finite everywhere-unramified Galois $p$-group extensions of $F$.
\end{definition}

Pro-$p$ groups are equipped with the following topology.

\begin{definition}[\cite{wilkes2020part}]
    Let $(G_j)_{j \in J}$ be an inverse system of finite groups. 
    Endow each $G_j$ with the discrete topology (every subset is open and closed), and give $\prod G_j$ the product topology (open sets are the full group $G_j$ in all but finitely many coordinates).
    The topology on $\underset{\leftarrow}{\lim}\,G_j \subseteq \prod G_j$ is the subspace topology.
\end{definition}

\begin{definition}[Frattini Subgroup (Definition 4.1.1 of \cite{wilkes2020part})] 
    \label{def:frattini}
    The Frattini subgroup of group $G$, denoted $\Phi(G)$, is the intersection of all maximal proper subgroups of $G$.
\end{definition}

We define generator rank and relation rank of pro-$p$ groups.

\begin{definition}[Generator rank \cite{wilkes2020part}]
    Let $G$ be a topological group and let $S$ be a subset of $G$. 
    We say $S$ is a (topological) generating set for $G$ if the subgroup $\langle S \rangle$ generated by $S$ is a dense subgroup of $G$.\footnote{A subset is dense if it intersects every non-empty open set. Equivalently, its closure is the whole set.}
    The group $G$ is (topologically) finitely generated if it has a finite topological generating set.
    The generator rank of $G$, denoted $d(G)$, is the minimum size of a (topological) generating set.
\end{definition}

For a topologically finitely generated pro-$p$ group $G$, we have $H_1(G, \F_{p}) := G/\Phi(G) \cong \F_{p}^{d(G)}$ (see e.g. Proposition 4.1.13 of \cite{wilkes2020part}, and \Cref{prop:pro-p-dim}).

\begin{definition}[Relation rank (Section 5.4.2 of \cite{wilkes2020part})]
    \label{def:relation-rank}
    Let $G$ be a topologically finitely generated pro-$p$ group, and let $d(G)$ denote its minimal number of topological generators. 
    A finite pro-$p$ presentation of $G$ is an isomorphism
    \[
        G \cong F/\langle\!\langle R\rangle\!\rangle \text{,}
    \]
    where $F$ is a finitely generated free pro-$p$ group and $\langle\!\langle R\rangle\!\rangle$ denotes the closed normal subgroup of $F$ topologically normally generated by $R$. 
    The relation rank of $G$, denoted $r(G)$, is the minimum cardinality of such a set $R$, taken over pro-$p$ presentations of $G$ in which $F$ is freely generated by $d(G)$ elements.
\end{definition}

For a topologically finitely generated pro-$p$ group $G$, one has
\[
    r(G) = \dim_{\F_p} H^2(G,\F_p) \text{,}
\]
where $H^2$ denotes the second cohomology group (see e.g. Definition 5.1.16 and Theorem 5.4.28 of \cite{wilkes2020part}).
In particular, $r(G)$ is intrinsic to $G$ and does not depend on the choice of a minimal topological generating set.

\begin{definition}[Trace (see e.g. Corollary 2.20 of \cite{milne2020algebraic})]
    \label{def:trace-norm}
    Let $L/K$ be a separable extension so that there are $n = [L:K]$ distinct embeddings of $L$ fixing $K$, denoted $\set{\sigma_{i}}$.
    Then, define for $x \in L$, $\Tr_{L/K}(x) = \sum_{i = 1}^{n} \sigma_{i}(x)$.
\end{definition}

\begin{definition}[Relative Discriminant (\cite{milne2020algebraic})]
    \label{def:relative-discriminant}
    Let $L/K$ be a separable extension of degree $n = [L:K]$ and $x_1, \dots, x_{n} \in L$.
    Then their discriminant is
    \begin{equation*}
        \disc(x_1, \dots, x_n) = \det(\Tr_{L/K}(x_i x_j)) \text{.} 
    \end{equation*}
    The relative discriminant $\frakd_{L/K}$ is the ideal of $\calO_{K}$ generated by $\disc(x_1, \dots, x_n)$ for all $x_1, \dots, x_n \in \calO_{L}$.
\end{definition}

\begin{definition}[Absolute Discriminant (\cite{milne2020algebraic})]
    \label{def:absolute-discriminant}
    The absolute discriminant of a number field $K/\Q$ is defined
    \begin{equation*}
        |\Delta_{K}| = |\disc(x_1, \dots, x_n)|
    \end{equation*}
    where $\set{x_i}$ is (any) $\Z$-basis of $\calO_{K}$.
\end{definition}

We use the following properties of the discriminant given in \cite{UnitDistance}.

\begin{definition}[Discriminant (Definition A.6 of \cite{UnitDistance})]
    \label{def:discriminant}
    The root discriminant of a number field $K/\Q$ is $\rd(K) = |\Delta_{K}|^{1/[K:\Q]}$.
    Finite unramified extensions preserve root discriminant.

    For an extension $M/F$, the relative discriminant $\frakd_{M/F}$ is an ideal of $\calO_F$, and the tower formula is 
    \begin{equation*}
        |\Delta_M|= |\Delta_F|^{[M:F]} N_{F/\Q}(\frakd_{M/F}).
    \end{equation*}
    Thus $\frakd_{M/F} = \calO_{F}$ is equivalent to no finite ramification in $M/F$. 
    Here, $N_{F/\Q}(\fraka) = [\calO_{F} : \fraka]$ denotes the absolute norm of an ideal $\fraka$.
\end{definition}

Next, we define the class number.

\begin{definition}[Fractional Ideal \cite{neukirch2013algebraic}]
    \label{def:fractional=ideal}
    A fractional ideal of $K$ is a finitely generated $\calO_{K}$-submodule $\fraka \neq 0$ of $K$.
    The ideal group of $K$, denoted $J_K$ is the group of fractional ideals of $K$, with identity $\calO_{K}$ and inverse $\fraka^{-1} = \set{x \in K \mid x \fraka \subseteq \calO_{K}}$.
    The fractional principal ideals of $K$, denoted $P_{K}$, form a subgroup of $J_K$.
\end{definition}

\begin{definition}[Class Number \cite{neukirch2013algebraic}]
    \label{def:class-number}
    The class group of $K$ is the group $J_K/P_K$. 
    The class number of $K$, denoted $\classnum(K)$ is $\classnum(K) = |J_K/P_K|$. 
\end{definition}

Next, we define the Frobenius representative.

\begin{definition}[Frobenius Representative \cite{marcus1977number}]
    \label{def:frobenius}
    Let $L/K$ be a Galois extension.
    Let $\frakp$ be a prime ideal of $\calO_{K}$ that is unramified in $L/K$, and $\frakP$ a prime lying above $\frakp$ in $\calO_{L}$.
    The Frobenius automorphism of $\frakP$, denoted $\phi_{\frakP}$ is the unique $\phi \in \Gal(L/K)$ such that $\phi_{\frakP}(\alpha) \equiv \alpha^{|\calO_{K}/\frakp|} \pmod{\frakP}$ for all $\alpha \in \calO_{L}$.

\end{definition}

We state the necessary preliminaries for decomposition groups.

\begin{definition}[Decomposition Group \cite{marcus1977number,milne2020algebraic}]
    Let $L/K$ be an extension and $\frakp$ a prime ideal in $\calO_{K}$.
    For a prime $\frakP$ above $\frakp$ in $\calO_{L}$, the decomposition group $D(\frakP \mid \frakp)$ is the set of automorphisms in $\Gal(L/K)$ satisfying $\sigma(\frakP) = \frakP$.

    For an infinite place $v$ of $K$ and $w$ above $v$ (i.e. agreeing with $v$ on $K$), the decomposition group of $w$, denoted $D_{w}$, is the set of automorphisms in $\Gal(L/K)$ such that $\sigma(w) = w$.
\end{definition}

\begin{definition}[Chapter 7 of \cite{milne2020algebraic}]
    \label{def:p-adic-value}
    For any prime ideal $\frakp$ of number field $K$, the $\frakp$-adic absolute value is given by $|x|_{\frakp} = (1/e)^{\val_{\frakp}(x)}$ for some constant $e > 1$.
    The standard choice is $e = \N \frakp := |\calO_{K}/\frakp|$ (Chapter 4 of \cite{milne2020algebraic}).
    Let $\Q_{p}$ denote the completion of $\Q$ with the $p$-adic absolute value.
\end{definition}

\begin{definition}[Chapter 7 of \cite{milne2020algebraic}]
    \label{def:completion}
    Let $K$ be a field and $|\cdot |$ a non-trivial absolute value.
    A sequence $\set{x_i}$ is Cauchy if for all $\eps > 0$, there exists $N$ such that $|x_i - x_j| < \eps$ for all $i, j \geq N$.
    A field $K$ is complete if every Cauchy sequence has a limit in $K$.
\end{definition}

\begin{claim}[Theorem 7.23 of \cite{milne2020algebraic}]
    \label{clm:completion-homomorphism}
    Let $K$ be a field with an absolute value $|\cdot |$. Then there exists a complete field $\hat{K}$ and a homomorphism $K \rightarrow \hat{K}$ preserving the absolute value that is universal in the following sense: every homomorphism $K \rightarrow L$ from $K$ into a complete field $L$ preserving the absolute value, extends uniquely to a homomorphism $\hat{K} \rightarrow L$.

    For a place $v$ of $K$, we let $K_v$ denote the completion of $K$ w.r.t. $v$.
    For a prime ideal $\frakp$, $K_{\frakp}$ denotes the completion of $K$ w.r.t. the $p$-adic absolute value $|\cdot|_{\frakp}$.
\end{claim}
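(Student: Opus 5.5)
The statement is the universal property of completion (Theorem 7.23 of Milne). I plan to prove it with the standard Cauchy-sequence construction. Let $R$ be the set of Cauchy sequences in $K$ with respect to $|\cdot|$. I will first check that $R$ is a commutative ring under termwise addition and multiplication. Closure under products uses the fact that Cauchy sequences are bounded: $|x_iy_i-x_jy_j|\le |x_i||y_i-y_j|+|y_j||x_i-x_j|$. Let $\mathfrak m\subseteq R$ be the null sequences, those with $|x_i|\to 0$.

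The next step is to show that $\mathfrak m$ is a maximal ideal, so that $\hat K:=R/\mathfrak m$ is a field. The key point is that a Cauchy sequence not tending to $0$ has $|x_i|\ge c>0$ for all large $i$. Then $(x_i^{-1})$, with finitely many terms changed, is again Cauchy, because $|x_i^{-1}-x_j^{-1}|\le |x_i-x_j|/c^2$. It is an inverse modulo $\mathfrak m$.

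Define $|[(x_i)]|:=\lim|x_i|$. The limit exists because $\big||x_i|-|x_j|\big|\le|x_i-x_j|$ makes $(|x_i|)$ Cauchy in $\R$, and it is independent of the representative. Multiplicativity and the triangle inequality pass to the limit. The diagonal map $K\to\hat K$, $x\mapsto[(x,x,\dots)]$, is then a field homomorphism preserving the absolute value.

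I will then show that $K$ is dense in $\hat K$: for $\xi=[(x_i)]$ one has $|\xi-x_n|=\lim_i|x_i-x_n|$, which is small for large $n$. Completeness of $\hat K$ follows by a diagonal argument. Given a Cauchy sequence $(\xi_n)$ in $\hat K$, choose $y_n\in K$ with $|\xi_n-y_n|<1/n$. Then $(y_n)$ is Cauchy in $K$, and its class is the limit of $(\xi_n)$. I expect this step to be the most bookkeeping-heavy part, namely keeping track of the two levels of limits, though it is not deep.

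For the universal property, let $\iota:K\to L$ preserve absolute values with $L$ complete. A Cauchy sequence $(x_i)$ in $K$ maps to a Cauchy sequence $(\iota x_i)$ in $L$, which has a limit, and null sequences map to sequences with limit $0$. So $\hat\iota([(x_i)]):=\lim\iota(x_i)$ is well defined. It is a ring homomorphism and preserves absolute values, since limits commute with the field operations and with $|\cdot|$ in $L$. It extends $\iota$ by using constant sequences.

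Uniqueness follows from density. Any absolute-value-preserving homomorphism is an isometry and hence continuous, and two continuous maps agreeing on the dense subset $K$ agree on $\hat K$.

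Finally, the notations $K_v$ and $K_{\frakp}$ are simply names for this completion, taken with respect to the absolute value of the place $v$ and with respect to $|\cdot|_{\frakp}$ from \Cref{def:p-adic-value}, respectively. Nothing further needs to be proved for them.
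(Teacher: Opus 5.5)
Your proposal is correct and is the standard Cauchy-sequence construction. The paper gives no proof of its own and simply cites Milne's Theorem 7.23, whose proof is this same argument: equivalence classes of Cauchy sequences, extending $K\to L$ by sending a Cauchy sequence to its limit in $L$, and uniqueness by density. You are also right to read uniqueness as holding among absolute-value-preserving (hence continuous) homomorphisms, which is the intended meaning of the statement.
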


\begin{claim}[Theorem 7.38, Corollary 7.42 of \cite{milne2020algebraic}]
    \label{clm:local-degree-formula}
    Let $K$ be a complete field w.r.t. a discrete absolute value $|\cdot|_{K}$ and $L/K$ be a finite, separable extension.
    Then $[L:K] = e f$ where $e$ is the ramification index and $f$ is the residue field degree.
\end{claim}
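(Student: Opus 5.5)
The plan is to compute $\dim_k(\calO_L/\pi\calO_L)$ in two ways, where $k$ is the residue field of $K$, $\calO_K=\set{x:|x|_K\le1}$ is its valuation ring and $\pi$ is a uniformizer. First, this dimension equals $n:=[L:K]$. Second, it equals $ef$.

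\textbf{Structure of $\calO_L$.} Since the absolute value is discrete, $\calO_K$ is a discrete valuation ring, hence a PID, with maximal ideal $\pi\calO_K$. Let $\calO_L$ be the integral closure of $\calO_K$ in $L$. I will show that $\calO_L$ is a free $\calO_K$-module of rank $n$, using the trace form of \Cref{def:trace-norm}.
\begin{itemize}
\item Choose a $K$-basis $x_1,\dots,x_n$ of $L$. After scaling by powers of $\pi$ we may assume $x_i\in\calO_L$.
\item Separability makes the pairing $(x,y)\mapsto\Tr_{L/K}(xy)$ nondegenerate, so there is a dual basis $x_1^\vee,\dots,x_n^\vee$.
\item For $z\in\calO_L$, the coefficient $\Tr(zx_i)$ lies in $\calO_K$, because traces of integral elements are integral and lie in $K$.
\item Hence $\bigoplus\calO_Kx_i\subseteq\calO_L\subseteq\bigoplus\calO_Kx_i^\vee$.
\end{itemize}
A submodule of a free module of rank $n$ over a PID is free, and here it contains a free module of rank $n$, so it is free of rank $n$. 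Tensoring with $k$ gives $\dim_k\calO_L/\pi\calO_L=n$.

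\textbf{$\calO_L$ is a DVR.} Next I will use completeness of $K$ to show that the absolute value extends uniquely to $L$, by the standard formula $|x|_L=|\Norm_{L/K}(x)|_K^{1/n}$. The key input is Hensel's lemma, in its form that for an irreducible monic polynomial over complete $K$, integrality of the constant term forces integrality of all coefficients. This is what makes the formula an ultrametric absolute value, and it identifies $\calO_L$ with the valuation ring of $|\cdot|_L$.

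This extended value is again discrete, since its value group embeds in $|K^\times|^{1/n}$. So $\calO_L$ is a DVR with a uniformizer $\Pi$, and there is a single prime $\frakP=\Pi\calO_L$ above $\pi$. By definition $\pi\calO_L=\Pi^e\calO_L$, and $f=[\calO_L/\frakP:k]$.

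\textbf{Counting.} Consider the filtration
\[
\calO_L\supseteq\Pi\calO_L\supseteq\cdots\supseteq\Pi^e\calO_L=\pi\calO_L .
\]
Each successive quotient $\Pi^i\calO_L/\Pi^{i+1}\calO_L$ is isomorphic to $\calO_L/\Pi\calO_L$ as a $k$-vector space, via multiplication by $\Pi^i$, so each has dimension $f$. Summing over the $e$ steps gives $\dim_k\calO_L/\pi\calO_L=ef$. Comparing with the first computation yields $n=ef$.

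\textbf{Main obstacle.} I expect the hardest step to be showing that $\calO_L$ is local with a discrete valuation, that is, the Hensel-lemma argument behind the uniqueness of the extended absolute value. This is exactly where completeness is needed: without it $\pi$ may split into several primes, and one only gets $\sum e_if_i=n$.

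For the freeness step, separability is used in an essential way through the nondegeneracy of the trace form. I will cite the paper's references (Chapter 7 of \cite{milne2020algebraic}) for Hensel's lemma rather than reprove it.
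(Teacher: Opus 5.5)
Your proof is correct, and it is essentially the standard argument. The paper gives no proof of its own for this claim; it defers to Milne \cite{milne2020algebraic}.

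The cited argument obtains $n=ef$ by combining two ingredients:
\begin{itemize}
\item the general identity $\sum_i e_if_i=n$ for the integral closure $B$ of the valuation ring $A$, which comes from $B$ being free of rank $n$, the Chinese remainder decomposition of $B/\frakp B$, and the same filtration count you use;
\item Theorem~7.38, which shows by contradiction that only one prime lies above $\frakp$. Pick $\beta$ in one prime but not the other. By Hensel's lemma the reduction of its minimal polynomial is a power of an irreducible, so $A[\beta]/\frakp A[\beta]$ is local, contradicting the existence of two maximal ideals containing $\frakp$.
\end{itemize}

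Your outline has the same skeleton: rank-$n$ freeness via the trace dual basis, a single prime above $\pi$ forced by completeness, and the one-prime case of the filtration count. You differ only in how you prove that $\calO_L$ is local. You do it constructively: define $|x|_L=|\Norm_{L/K}(x)|_K^{1/n}$, use the Hensel corollary (an irreducible monic polynomial with integral constant term has all coefficients integral) to get the ultrametric inequality and to identify $\calO_L$ with the valuation ring, then read off discreteness from the value group. This makes the argument self-contained. It never needs $B$ to be Dedekind or the general $\sum_i e_if_i$ formula, and it yields the DVR structure and the norm formula together. Milne's route instead reuses global machinery he has already developed.

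One small correction to your commentary: the Hensel input is what makes the norm formula an ultrametric absolute value, that is, it gives existence, not uniqueness. Uniqueness of the extension is never used in your count, since locality of $\calO_L$ already follows from its being a valuation ring. You can drop the word ``uniquely'' without loss.
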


In particular, for a number field $K/\Q$, a rational prime $q$, and a prime ideal $\frakP$ of $\calO_{K}$ lying above $q$, the completion $K_{\frakP}$ is a finite extension of $\Q_q$ (see e.g. Remark 8.3 of \cite{milne2020algebraic}), and $[K_{\frakP} : \Q_{q}] = e(\frakP|q) f(\frakP|q)$.

\begin{claim}[Proposition 8.10 of \cite{milne2020algebraic}]
    \label{clm:decomposition-groups}
    Let $L/K$ be a finite Galois extension of number field $K$.
    Let $v$ be an infinite place of $K$ and $w$ a place of $L$ lying above $v$.
    Let $K_{v}$ denote the completion of $K$ with respect to $v$ and $L_{w}$ the completion of $L$ with respect to $w$.
    Then, $K_v, L_{w}$ are either isomorphic to $\R$ or $\C$ (see Remark 7.49 of \cite{milne2020algebraic}) and $D_{w} \cong \Gal(L_{w}/K_{v})$ .
\end{claim}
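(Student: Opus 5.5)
The plan is to describe both completions concretely through complex embeddings, and then to compare $D_w$ with $\Gal(L_w/K_v)$ by constructing an injective homomorphism and counting. Throughout, fix an embedding $\tau:K\hookrightarrow\C$ representing $v$, and an embedding $\rho:L\hookrightarrow\C$ representing $w$ with $\rho|_K=\tau$.

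First, identify the completions. The absolute value attached to $v$ is $x\mapsto|\tau(x)|$. By \Cref{clm:completion-homomorphism}, $K_v$ is the closure of $\tau(K)$ in the complete field $\C$. If $\tau$ is real, $\tau(K)$ is a dense subfield of $\R$ (it contains $\Q$), so $K_v\cong\R$. If $\tau$ is complex, $\tau(K)$ is a subfield of $\C$ not contained in $\R$, hence dense in $\C$, so $K_v\cong\C$. The same argument with $\rho$ gives $L_w\cong\R$ or $\C$, and $K_v\subseteq L_w$ as closures inside $\C$. In particular $\Gal(L_w/K_v)$ is trivial unless $K_v=\R$ and $L_w=\C$, in which case it is $\{\id,\text{conj}\}$.

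Next, build the map. For $g\in D_w$, the embeddings $\rho\circ g$ and $\rho$ represent the same place, so $\rho\circ g\in\{\rho,\overline\rho\}$. Hence $g$ preserves $|\cdot|_w$ and is continuous on $L$. By the universal property in \Cref{clm:completion-homomorphism}, $g$ extends uniquely to an automorphism $\hat g$ of $L_w$, and $\hat g$ fixes $K_v$ because $g$ fixes $K$ and $K$ is dense in $K_v$. The assignment $g\mapsto\hat g$ is a homomorphism $D_w\to\Gal(L_w/K_v)$. It is injective because $L$ embeds in $L_w$, so $\hat g=\id$ forces $g=\id$.

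Finally, show surjectivity by counting, which I expect to be the main step. There are exactly $[L:K]$ embeddings of $L$ extending $\tau$. Since $L/K$ is normal, any two of them differ by precomposition with an element of $\Gal(L/K)$, so $\Gal(L/K)$ acts transitively on the places above $v$. Consequently $|D_w|=[L:K]/\#\{\text{places above }v\}$, and there are three cases.
\begin{itemize}
\item If $K_v=\C$, each place of $L$ above $v$ contains exactly one embedding extending $\tau$ (its conjugate extends $\overline\tau\neq\tau$). So there are $[L:K]$ places and $D_w$ is trivial, matching $\Gal(\C/\C)$.
\item If $K_v=\R$ and $L_w=\R$, every extension of $\tau$ is real, since $\Gal(L/K)$ acts transitively and $\overline{\rho\circ g}=\rho\circ g$ whenever $\rho$ is real. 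So again there are $[L:K]$ places and $|D_w|=1$.
\item If $K_v=\R$ and $L_w=\C$, all extensions of $\tau$ are complex and come in conjugate pairs, both extending $\tau$. So there are $[L:K]/2$ places and $|D_w|=2=|\Gal(\C/\R)|$.
\end{itemize}
In every case the injective map $D_w\to\Gal(L_w/K_v)$ is between groups of equal order, hence an isomorphism.
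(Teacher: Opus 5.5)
Your proof is correct. The paper gives no proof of this claim; it imports it from \cite{milne2020algebraic}. The usual argument there has the same skeleton as yours: an injective homomorphism relating $D_w$ and $\Gal(L_w/K_v)$, obtained via density of $L$ in $L_w$, and then an orbit--stabilizer count using transitivity of $\Gal(L/K)$ on the places above $v$.

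You differ in how the count is finished. The general argument uses the local degree identity $\sum_{w\mid v}[L_w:K_v]=[L:K]$ (coming from $L\otimes_K K_v\cong\prod_{w\mid v}L_w$), which treats finite and infinite places uniformly. You instead count the places above $v$ directly from the $[L:K]$ embeddings extending $\tau$, together with how complex conjugation pairs them. Your route is elementary and self-contained but works only for archimedean places, which is all the paper needs.

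One step you leave implicit: that a subfield of $\C$ not contained in $\R$ is dense. This follows because its closure is a closed subfield containing $\R$ and a non-real number.

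The paper only uses your third case, to rule out a complex place lying over a real place in a finite $3$-group extension. In that case you can skip completions entirely. The map $g:=\rho^{-1}\circ\overline{\rho}$ lies in $\Gal(L/K)$ by normality and is nontrivial because $\rho$ is not real. It satisfies $\rho\circ g^{2}=\overline{\rho}\circ g=\overline{\rho\circ g}=\rho$, so it has order exactly $2$.
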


We will use the following facts.

\begin{proposition}[Proposition 3.3 of \cite{UnitDistance}]
    \label{prop:pro-p-dim}
    Let $G$ be a finitely generated pro-$p$ group. Then $d(G) = \dim_{\F_p}(G/\Phi(G))$. 
    If $g_1,\dots,g_k \in \Phi(G)$ and $N$ is their closed normal closure, then $d(G/N) = d(G)$ and $r(G/N) \leq r(G) + k$.
\end{proposition}

\begin{proposition}[Golod-Shafarevich Inequality ({\cite[Proposition 3.4]{UnitDistance}})]
    If a finite non-trivial pro-$p$ group has generator rank $d$ and relation rank $r$, then $r>d^2/4$. 
    Equivalently, a nontrivial finitely generated pro-$p$ group with $r \leq d^2/4$ is infinite.
    \label{prop:golod-shafarevich}
\end{proposition}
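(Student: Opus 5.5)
This statement (\Cref{prop:golod-shafarevich}) is the Golod--Shafarevich inequality. The plan is the classical Hilbert-series argument for the completed group algebra, in the form due to Gasch\"utz and Vinberg. Let $G$ be a finite nontrivial $p$-group with $d=d(G)$ and $r=r(G)$. Since $G$ is finite, a minimal pro-$p$ presentation is simply a presentation $G\cong F/\langle\!\langle R\rangle\!\rangle$ with $F$ free pro-$p$ on $x_1,\dots,x_d$ and $|R|=r$. By \Cref{prop:pro-p-dim}, $d(G)=\dim G/\Phi(G)$. Minimality of $d$ then forces $R\subseteq\Phi(F)=F^p[F,F]$; otherwise some relator would let us drop a generator. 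Nontriviality of $G$ gives $d\ge1$.

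Next we pass to group algebras. Put $\Lambda=\F_p[[F]]$, which is isomorphic to the Magnus algebra of noncommuting power series $\F_p\langle\langle X_1,\dots,X_d\rangle\rangle$ via $x_i\mapsto 1+X_i$. Filter $\Lambda$ by powers of the augmentation ideal $I$, so that $\operatorname{gr}\Lambda$ is the free associative algebra, with Hilbert series $(1-dt)^{-1}$. Each relator $\rho\in R\subseteq\Phi(F)$ satisfies $\rho-1\in I^2$, so the degree of $\rho-1$ is at least $2$. Let $\mathfrak r$ be the closed two-sided ideal generated by the elements $\rho-1$. Then $\F_p[G]=\Lambda/\mathfrak r$, which is finite-dimensional. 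Write $A=\F_p[G]$ and $a_n=\dim I_A^n/I_A^{n+1}$.

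The core step is the exact-sequence inequality. There is a surjection
\[
A\otimes V\oplus A^{\oplus r}\to I_A\to 0,
\]
where $V$ is spanned by the images of the $X_i$. The kernel of the map $A^{d}\to I_A$ contains the images of the Fox-derivative rows of the relators. Combining this with $A\otimes\bar V$, graded appropriately, yields the coefficientwise inequality
\[
H_A(t)\,(1-dt+rt^2)\ \ge\ 1
\]
for the Hilbert series $H_A(t)=\sum a_nt^n$. This is the standard Vinberg inequality. The factor $t^2$ appears because relators lie in $I^2$, and one may even use the more precise weights $t^{\deg(\rho-1)}\le t^2$. I expect verifying this filtered exactness, and especially controlling the degree shifts, to be the main obstacle. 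I would follow the argument as presented in Wilkes' notes, Section 5.

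To conclude, $A$ is finite-dimensional, so $H_A$ is a polynomial with nonnegative coefficients and $H_A(\tau)>0$ for every $\tau>0$. Suppose $r\le d^2/4$. Then $1-dt+rt^2$ has a real positive root $\tau=(d-\sqrt{d^2-4r})/(2r)$, or $\tau=1/d$ when $r=0$. The inequality $H_A(t)(1-dt+rt^2)-1\ge0$ holds coefficientwise with nonnegative coefficients, so evaluating at $t=\tau>0$ is legitimate and gives $0\ge 1$, a contradiction. Hence a finite nontrivial $G$ has $r>d^2/4$. The equivalent formulation follows by contraposition, since a finitely generated pro-$p$ group that is not finite is infinite.
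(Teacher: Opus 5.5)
The paper gives no proof of this proposition. It quotes it from \cite{UnitDistance} (ultimately Golod--Shafarevich), and the Lean development assumes it as \texttt{Hyp\_GolodShafarevichInequality}. So your route is the classical Gasch\"utz--Vinberg argument behind that citation. Your setup is fine: the minimal presentation, $R\subseteq\Phi(F)$, the Magnus algebra, and $\F_p[G]=\Lambda/\mathfrak r$. The gap is the core step, which you assert rather than prove, and which you state in a form that does not follow in the filtered setting.

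First, you do not need $\ker(A^d\to I_A)$ to \emph{contain} the Fox rows. That only says you have a complex and gives no upper bound on the kernel. You need it to be \emph{spanned} by them, i.e.\ exactness of $A^r\xrightarrow{\partial_2}A^d\xrightarrow{\partial_1}I_A\to0$. Here $\partial_1(f)=\sum_i f_iX_i$ and $\partial_2(c)=(\sum_j c_jD_{ij})_i$, where $\rho_j-1=\sum_i D_{ij}X_i$ with $D_{ij}\in I_\Lambda$. This follows from $I_\Lambda=\bigoplus_i\Lambda X_i$ together with the identity $\mathfrak r=\sum_j\Lambda(\rho_j-1)+\sum_i\mathfrak r X_i$; the extra $A^{\oplus r}$ summand in your displayed surjection plays no role.

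Second, the coefficientwise bound $H_A(t)(1-dt+rt^2)\ge1$ is the \emph{graded} Golod--Shafarevich inequality. For the filtered algebra $A$, the associated graded complex need not stay exact, because leading terms of kernel elements can cancel. What the exact sequence actually gives, after reducing modulo $(I_A^{n-1})^d$ and using $\partial_2((I_A^{n-2})^r)\subseteq (I_A^{n-1})^d$ and $\partial_1((I_A^{n-1})^d)=I_A^n$, is the cumulative inequality
\[
c_n-1\ \ge\ d\,c_{n-1}-r\,c_{n-2}\quad(n\ge2),\qquad c_n:=\dim_{\F_p}A/I_A^n .
\]
Equivalently, it is your inequality with an extra factor $(1-t)^{-1}$ on both sides.

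With this weaker form, your endgame needs $0<\tau<1$ so that $C(\tau)=\sum_n c_n\tau^n$ converges. Multiplying by $\tau^n$ and summing gives $C(\tau)(1-d\tau+r\tau^2)\ge \tau/(1-\tau)>0$. For $d\ge3$, taking $\tau=2/d<1$ yields $r>d^2/4$.

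Your root $\tau=2/(d+\sqrt{d^2-4r})$ equals $1$ exactly when $(d,r)\in\{(1,0),(2,1)\}$, so small $d$ needs separate treatment. Let $n\to\infty$ in the displayed inequality: since $I_A$ is nilpotent, $c_n=|G|$ eventually. This gives $|G|-1\ge (d-r)|G|$, hence $r\ge d>d^2/4$ for $d\le 2$. With these two repairs, namely an honest exactness proof and the cumulative inequality plus the small-$d$ case, your argument is complete.
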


\begin{proposition}[Shafarevich Relation-Rank Estimate (Proposition 3.5 of \cite{UnitDistance})]
    \label{prop:shafarevich}
    Let $F$ be a totally real cubic field, so $\zeta_3 \not\in F$, and let $G = \Gal(F^{\ur, 3}/F)$ be the Galois group of its maximal everywhere-unramified
    pro-$3$ extension.
    Then $r(G) \leq d(G) + C_0$ for an absolute constant
    $C_0$.
\end{proposition}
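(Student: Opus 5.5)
The plan is to prove the estimate with the explicit constant $C_0 = 2$, following Shafarevich's classical bound for $p$-class field towers \cite{vsafarevivc1963extensions,shafarevich1966extensions}. For a number field $F$ with $\zeta_p\notin F$, that bound reads
\[
r(G)\le d(G)+r_1+r_2-1,
\]
where $G=\Gal(F^{\ur,p}/F)$ and $(r_1,r_2)$ is the signature of $F$. For a totally real cubic field we have $r_1=3$, $r_2=0$, and $\zeta_3\notin F$ because $F\subseteq\R$. This gives $r(G)\le d(G)+2$.

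\textbf{Step 1: identify $d(G)$.} By \Cref{prop:pro-p-dim}, $d(G)=\dim_{\F_3}G/\Phi(G)$. The quotient $G/\Phi(G)$ is the Galois group of the maximal elementary abelian $3$-extension of $F$ that is everywhere unramified. Since $p=3$ is odd, no real place can ramify in a $3$-extension, so the infinite places impose no condition. Class field theory (via the Hilbert class field) then identifies this Galois group with $\mathrm{Cl}(F)/3\mathrm{Cl}(F)$. Hence
\[
d(G)=\dim_{\F_3}\mathrm{Cl}(F)[3].
\]

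\textbf{Step 2: bound $r(G)$.} We use $r(G)=\dim_{\F_3}H^2(G,\F_3)$ and aim to inject $H^2(G,\F_3)$ into a Kummer-type group. The natural candidate is
\[
B(F):=\{x\in F^\times:\ (x)=\fraka^3 \text{ for some fractional ideal }\fraka\}/F^{\times3}.
\]
The plan for the injection is as follows. Write $G$ as a quotient of the Galois group of the maximal $3$-extension of $F$ unramified outside a large finite set $S$ of primes. Then compare $H^2$ via the five-term inflation–restriction sequence, using the local triviality of the extension at all places. The local theory shows that a class in $H^2(G,\F_3)$ is controlled by global elements that are locally cubes up to units, i.e.\ it lands in the dual of the relevant Shafarevich–Tate kernel. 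Under the assumption $\zeta_3\notin F$ this kernel is identified with $B(F)$, with no extra $+1$ from roots of unity.

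The exact sequence
\[
1\to \calO_F^\times/\calO_F^{\times3}\to B(F)\to \mathrm{Cl}(F)[3]\to 1
\]
then gives
\[
\dim B(F)=\dim\mathrm{Cl}(F)[3]+\dim \calO_F^\times/\calO_F^{\times3}.
\]
By Dirichlet's unit theorem, $\calO_F^\times\cong\{\pm1\}\times\Z^2$, and $\{\pm1\}$ has no $3$-torsion, so the second dimension is $2$. Combining with Step 1 yields $r(G)\le d(G)+2$.

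\textbf{Main obstacle.} The expected difficulty is the cohomological injection $H^2(G,\F_3)\hookrightarrow B(F)^\vee$ in Step 2. This requires Poitou–Tate style local–global duality, or Koch's presentation of the restricted-ramification Galois group, which go beyond the preliminaries developed in the paper. Since \cite{UnitDistance} states this estimate as its Proposition 3.5 with exactly this provenance, the practical plan is to cite Shafarevich's theorem (in the form of Koch's book or \cite{UnitDistance}). Only the specialization above, namely checking $\zeta_3\notin F$, the signature, the unit rank, and the absence of real ramification for $p=3$, would be verified in detail.
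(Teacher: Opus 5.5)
Your proposal is correct and takes the same approach as the paper. The paper does not prove this proposition itself: it takes Shafarevich's general bound $r(G)\le d(G)+r_1(F)+r_2(F)-1+\delta_p(F)$ as an assumed standard input (see the Lean appendix) and specializes it to a totally real cubic field with $p=3$, which gives exactly your constant $C_0=2$. Your supplementary sketch via $B(F)$ and the unit/class-group exact sequence is the standard Koch-style argument behind that cited bound and is consistent with it.
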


\begin{proposition}[Proposition 3.7 of \cite{UnitDistance}]
    \label{prop:class-number}
    There is an absolute constant $C_{\class} > 0$ such that every number field $K$ satisfies $h(K) \leq \max(2, \rd(K))^{C_{\class}[K:\Q]}$. The constant $C_{\class}$ is absolute: it is independent of the field $K$, its degree, and its signature. For fields with $\rd(K) \geq 2$, this is simply $h(K) \leq \rd(K)^{O([K:\Q])} = |\Delta_K|^{O(1)}$.
\end{proposition}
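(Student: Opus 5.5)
The plan is to combine Minkowski's bound with a Rankin-type count of integral ideals of bounded norm, using the coefficientwise domination of the Dedekind zeta function by a power of the Riemann zeta function. Write $n=[K:\Q]$ and $r_2$ for the number of complex places.

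\emph{Step 1 (reduce to counting ideals).} By Minkowski's theorem, every ideal class of $K$ contains an integral ideal $\fraka$ with
\[
N_{K/\Q}(\fraka)\le X:=\frac{n!}{n^n}\left(\frac{4}{\pi}\right)^{r_2}|\Delta_K|^{1/2}.
\]
Since $n!/n^n\le 1$ and $(4/\pi)^{r_2}\le (4/\pi)^{n/2}$, we have $X\le (4/\pi)^{n/2}|\Delta_K|^{1/2}$. Hence $h(K)\le \#\{\fraka\subseteq\calO_K:\ N\fraka\le X\}$.

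\emph{Step 2 (Rankin's trick).} For any real $s>1$,
\[
\#\{\fraka: N\fraka\le X\}\le \sum_{\fraka} \left(\frac{X}{N\fraka}\right)^{s}=X^{s}\zeta_K(s).
\]
I would take $s=2$.

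\emph{Step 3 (bound $\zeta_K(2)$).} Using the Euler product, above each rational prime $p$ there are $g\le n$ prime ideals, each of norm $p^{f}$ with $f\ge 1$. Therefore
\[
\prod_{\frakp\mid p}\bigl(1-N\frakp^{-s}\bigr)^{-1}\le \bigl(1-p^{-s}\bigr)^{-n}\qquad (s>1\ \text{real}),
\]
because each factor $(1-p^{-fs})^{-1}$ is at most $(1-p^{-s})^{-1}$ and there are at most $n$ factors. Multiplying over all $p$ gives $\zeta_K(2)\le\zeta(2)^n=(\pi^2/6)^n$.

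\emph{Step 4 (conclude).} Combining the previous steps,
\[
h(K)\le X^2\,\zeta(2)^n\le \left(\frac{4}{\pi}\right)^{n}|\Delta_K|\left(\frac{\pi^2}{6}\right)^n=\left(\frac{2\pi}{3}\right)^n \rd(K)^n.
\]
Since $2\pi/3<4$, this yields $h(K)\le 4^n\max(2,\rd(K))^n\le \max(2,\rd(K))^{3n}$. So $C_{\class}=3$ works, and the constant is uniform in the degree and the signature.

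The main pitfall, and the step that needs the most care, is the choice of counting method in Steps 2 and 3. The naive alternative bounds the number of ideals of norm $a$ by the $n$-fold divisor function $d_n(a)$ and then sums up to $X$. This introduces a factor of order $(n\log\rd(K))^{n}$, which is not bounded by $\rd(K)^{O(n)}$ when $\rd(K)$ stays bounded while $n\to\infty$. Working at a fixed $s=2$ avoids this, since $\zeta_K(2)$ is only exponential in $n$. The remaining work is to verify the local Euler-factor domination in Step 3 and the Minkowski constant estimate in Step 1, both of which are routine.
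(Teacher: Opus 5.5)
Your proof is correct. The paper gives no proof of this statement: it quotes it from \cite{UnitDistance} and uses it only as a black box, as item (vi) of \Cref{thm:ud-external} and then in \Cref{prop:ud-package}. Your argument is therefore a self-contained replacement for that citation, and it is the standard one:
\begin{itemize}
\item Minkowski's bound reduces $h(K)$ to counting integral ideals of norm at most $X\le(4/\pi)^{n/2}|\Delta_K|^{1/2}$.
\item Rankin's trick bounds that count by $X^2\zeta_K(2)$.
\item The Euler-factor domination gives $\zeta_K(2)\le\zeta(2)^n$.
\end{itemize}
Together these give $h(K)\le(2\pi/3)^n\rd(K)^n$, and absorbing $4\le\max(2,\rd(K))^2$ yields $C_{\class}=3$.

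Compared with the citation, you get an explicit constant and a visible reason for the uniformity the paper needs. In \Cref{prop:ud-package}, the degree $[L:\Q]$ of the tower layers grows without bound while $\rd(K)\le 2\rd(F_\ell)$ stays fixed. Your bound has exactly the form $C^{n}\rd(K)^{n}$ with $C$ absolute, which is what that application requires. The paper's choice simply keeps \Cref{sec:arithmetic} short by treating the bound as external input.

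One correction to your closing remark. The divisor-function route fails only with the crude estimate $\sum_{a\le X}d_n(a)\le X(1+\log X)^{n-1}$. The sharper bound $\sum_{a\le X}d_n(a)\le X(\log X+n-1)^{n-1}/(n-1)!$ follows by induction on $n$ via $d_n(a)=\sum_{m\mid a}d_{n-1}(a/m)$. Since $\log X=O(n\log\max(2,\rd(K)))$, Stirling's formula turns this sharper bound into something of the form $C^n\max(2,\rd(K))^{O(n)}$ as well. So that route is not actually blocked, though Rankin's trick with $s=2$ is the cleaner way to reach the same bound.
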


\subsection{Proof of \texorpdfstring{\Cref{thm:ud-external}}{}}

We now state \Cref{thm:ud-external} and provide its proof.
Every subsequent statement in this section is proved below from \Cref{thm:ud-external}.

\begin{theorem}[Number-Theoretic Input from \cite{UnitDistance}]\label{thm:ud-external}
There are absolute constants $C_{\rm Sha},C_{\rm class},C_{\rm rd}>0$ such that, for all sufficiently large integers $\ell$, there is a totally real cyclic cubic field $F_\ell$ with the following properties.
Let
\[
      G_\ell:=\Gal(F_\ell^{\rm ur,3}/F_\ell)
\]
be the Galois group of the maximal everywhere-unramified pro-$3$ extension of $F_\ell$, and let $\Phi(G_\ell)$ be its Frattini subgroup.  Then:
\begin{enumerate}[label=(\roman*)]
    \item $d(G_\ell)\ge \ell-1$ and $r(G_\ell)\le d(G_\ell)+C_{\rm Sha}$, where $d$ and $r$ denote generator and relation ranks;
    \item $\log\operatorname{rd}(F_\ell)\le C_{\rm rd}\ell\log\ell$;
    \item if $E_\ell/F_\ell$ is the finite extension corresponding to $G_\ell/\Phi(G_\ell)$ and $N_\ell$ is the finite normal closure over $\Q$ of $E_\ell(i)$, then every rational prime $q$ splitting completely in $N_\ell$ satisfies $q\equiv1\pmod4$, splits completely in $F_\ell$, and, for every prime $v\mid q$ of $F_\ell$, has a Frobenius representative in $\Phi(G_\ell)$;
    \item if $g_1,\ldots,g_k\in\Phi(G_\ell)$ and $N$ is their closed normal closure in $G_\ell$, then $d(G_\ell/N)=d(G_\ell)$ and $r(G_\ell/N)\le r(G_\ell)+k$;
    \item a nontrivial finitely generated pro-$p$ group with generator rank $d$ and relation rank $r\le d^2/4$ is infinite;
    \item every number field $K$ satisfies
    \[
         h(K)\le \max\{2,\operatorname{rd}(K)\}^{C_{\rm class}[K:\Q]} .
    \]
\end{enumerate}
\end{theorem}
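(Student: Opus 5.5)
The theorem collects number-theoretic input, so the plan is to assemble each item from results already quoted, mostly from \cite{UnitDistance}, rather than prove anything from scratch. Items (iv), (v), (vi) are restatements: (iv) is the second half of \Cref{prop:pro-p-dim} applied to $G_\ell$, (v) is the Golod--Shafarevich inequality \Cref{prop:golod-shafarevich}, and (vi) is \Cref{prop:class-number} with $C_{\rm class}$ its constant. The real content lies in (i)--(iii), which rest on the choice of the cyclic cubic field $F_\ell$.

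\textbf{Construction of $F_\ell$ and items (i)--(ii).} I would follow the unit-distance construction. Choose $\ell$ distinct rational primes $p_1,\dots,p_\ell\equiv 1\pmod 3$, taken as small as possible; by the prime number theorem in arithmetic progressions they are all $O(\ell\log\ell)$. Take $F_\ell$ to be a cyclic cubic field of conductor $p_1\cdots p_\ell$, i.e.\ cut out by a cubic character of $(\Z/p_1\cdots p_\ell)^\times$ nontrivial at every $p_i$. Such a field is automatically totally real, and exactly the $p_i$ ramify, each totally. The conductor--discriminant formula gives $|\Delta_{F_\ell}|=(p_1\cdots p_\ell)^2$, so $\log\rd(F_\ell)=\tfrac23\sum\log p_i=O(\ell\log\ell)$, which is (ii). For (i), genus theory for the cyclic extension $F_\ell/\Q$ gives $3$-rank of the class group at least $\ell-1$. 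Since $d(G_\ell)$ equals the $3$-rank of the narrow class group (class field theory for $G/\Phi(G)$), this yields $d(G_\ell)\ge\ell-1$. The bound $r(G_\ell)\le d(G_\ell)+C_{\rm Sha}$ is then \Cref{prop:shafarevich}.

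\textbf{Item (iii), the main obstacle.} Let $q$ split completely in $N_\ell$. Because $N_\ell$ contains $i$ and contains $F_\ell$, $q$ splits completely in both $\Q(i)$ and $F_\ell$; the first gives $q\equiv1\pmod4$. The key point is to show $E_\ell\subseteq N_\ell$. For each prime $v\mid q$ of $F_\ell$, complete splitting of $q$ in $E_\ell$ then forces $v$ to split completely in $E_\ell/F_\ell$, so the Frobenius of $v$ is trivial in $\Gal(E_\ell/F_\ell)=G_\ell/\Phi(G_\ell)$. Hence any Frobenius representative in $G_\ell$ lies in $\Phi(G_\ell)$; it is well defined up to conjugacy, and $\Phi(G_\ell)$ is normal. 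I expect the delicate part to be checking that $E_\ell/F_\ell$ is finite Galois and unramified, so that Frobenius elements in the pro-$3$ group $G_\ell$ make sense, and that the normal closure still splits $q$ compatibly. Both are standard: a prime splitting completely in a field splits completely in its Galois closure, and $N_\ell$ is Galois over $\Q$.

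Finally, I would take $C_{\rm Sha},C_{\rm rd},C_{\rm class}$ to be the maxima of the constants obtained above, and require $\ell$ large enough that the prime-density estimates hold.
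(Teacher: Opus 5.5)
Your proposal is correct and follows essentially the paper's route. Items (iv)--(vi) are the quoted propositions, and (iii) is the same argument: complete splitting of $q$ in $N_\ell$ descends to the subfields $\Q(i)$, $F_\ell$, $E_\ell$, which forces trivial Frobenius in $\Gal(E_\ell/F_\ell)\cong G_\ell/\Phi(G_\ell)$. For (i)--(ii), you spell out via genus theory and the conductor--discriminant formula for a cyclic cubic field of conductor $p_1\cdots p_\ell$ what the paper simply cites as Proposition 3.8 and Equations (5)--(7) of \cite{UnitDistance}.

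Two small points. First, $E_\ell\subseteq N_\ell$ holds by the definition of $N_\ell$, so neither it nor any compatibility of the normal closure is a real obstacle. Second, applying \Cref{prop:pro-p-dim} for (iv) requires $G_\ell$ to be finitely generated, which follows from the finiteness of the $3$-rank of the class group of $F_\ell$.
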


\begin{proof}
    The field $F_{\ell}$ is given by Proposition 3.8 of \cite{UnitDistance}. 
    We verify each property step by step.
    
    The first property follows from Equation (5) and Equation (7) of \cite{UnitDistance}.
    We note that the second inequality follows from the Shafarevich relation-rank estimate (\Cref{prop:shafarevich}).

    The second property follows from Equation (6) of \cite{UnitDistance}.

    We will argue the third property.
    Let $q$ be a rational prime splitting completely in $N_\ell$, the normal closure over $\Q$ of $E_\ell(i)$. Since $\Q(i)$, $F_\ell$, and $E_\ell$ are subfields of $N_\ell$, complete splitting in $N_\ell$ descends to each of these subextensions. Thus $q$ splits in $\Q(i)$, so $q\equiv1\pmod4$, and $q$ splits completely in $F_\ell$. Moreover, for every prime $v\mid q$ of $F_\ell$, the prime $v$ splits completely in $E_\ell/F_\ell$, so its Frobenius class in $\Gal(E_\ell/F_\ell)\cong G_\ell/\Phi(G_\ell)$ is trivial. Hence any Frobenius representative at $v$ in $G_\ell$ lies in $\Phi(G_\ell)$.

    We note that $G_{\ell}$ is finitely generated (as argued in \cite{UnitDistance}).
    The fourth property then follows from \Cref{prop:pro-p-dim}, since $G_{\ell}$ is a pro-$3$ group.

    The fifth property follows from \Cref{prop:golod-shafarevich}.

    The sixth property follows from \Cref{prop:class-number}.
\end{proof}

\subsection{Proof of \texorpdfstring{\Cref{prop:codebook}}{}}

The next proposition provides the modifications we need from \Cref{thm:ud-external}.

\begin{proposition}
\label{prop:ud-package}
There are absolute constants $c_0,C_H>0$ such that, for all sufficiently large integers $\ell$, $F_\ell,N_\ell$ from \Cref{thm:ud-external} supply integers and constants
\[
       T_\ell\in\Z_{\ge1},\qquad T_\ell\ge c_0\ell^2,\qquad
       H_\ell\ge1,\qquad
       \log H_\ell\le C_H\ell\log\ell,
\]
a totally real base field $L_{0,\ell}:=F_\ell$ of degree $d_\ell=3$, and the finite normal extension $N_\ell/\Q$, which contains $\Q(i)$ and the normal closure of $L_{0,\ell}$, with the following property.  For every set $\mathcal Q$ of at most $T_\ell$ rational primes splitting completely in $N_\ell$, there is an infinite pro-$3$ Galois extension
\[
       \mathcal L_{\ell,\mathcal Q}/L_{0,\ell}
\]
such that for every finite subextension $L/L_{0,\ell}$:
\begin{enumerate}[label=(\roman*)]
    \item $L/L_{0, \ell}$ is everywhere unramified and totally real;
    \item every $q\in\mathcal Q$ splits completely in $L$;
    \item with $K:=L(i)$, one has
    \[
          h(K)\le H_\ell^{[L:\Q]} .
    \]
\end{enumerate}
The constants $N_\ell,L_{0,\ell},H_\ell$ depend only on $\ell$, not on the particular primes in $\mathcal Q$.  
\end{proposition}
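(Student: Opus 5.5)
We construct $\mathcal L_{\ell,\mathcal Q}$ as the fixed field of a quotient of $G_\ell=\Gal(F_\ell^{\rm ur,3}/F_\ell)$, following the unit-distance strategy of \cite{UnitDistance}. Fix $\ell$ large and let $F_\ell,N_\ell,E_\ell$ be as in \Cref{thm:ud-external}. Set $T_\ell:=\lfloor(\ell-1)^2/(4\cdot 3)\rfloor-C'\ell$ for a suitable absolute $C'$, so that $T_\ell\ge c_0\ell^2$.

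Given $\mathcal Q$ with $|\mathcal Q|\le T_\ell$ and each $q\in\mathcal Q$ splitting completely in $N_\ell$, property (iii) of \Cref{thm:ud-external} says that each $q$ splits completely in $F_\ell$, into three primes $v\mid q$. For each such $v$ it also supplies a Frobenius representative $g_v\in\Phi(G_\ell)$; we fix one per $v$, giving $k=3|\mathcal Q|$ elements. Let $N$ be their closed normal closure and put $G'=G_\ell/N$. By (iv), $d(G')=d(G_\ell)\ge\ell-1$ and
\[
r(G')\le d(G_\ell)+C_{\rm Sha}+3|\mathcal Q|.
\]
The choice of $T_\ell$ is made so that this is at most $d(G')^2/4$. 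Here I would use $d\ge\ell-1$ together with monotonicity of $d^2/4-d$, valid since $d^2/4-d$ is increasing for $d\ge2$. Then (v) shows that $G'$ is infinite, and $G'$ is nontrivial because $d(G')\ge1$.

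Let $\mathcal L_{\ell,\mathcal Q}$ be the fixed field of $N$ inside $F_\ell^{\rm ur,3}$, so that $\Gal(\mathcal L_{\ell,\mathcal Q}/F_\ell)\cong G'$ is an infinite pro-$3$ group. We then check the three conditions for a finite subextension $L$.

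\begin{enumerate}[label=(\roman*)]
\item Since $L\subseteq F_\ell^{\rm ur,3}$, the extension $L/F_\ell$ is everywhere unramified. No infinite ramification together with $F_\ell$ totally real forces $L$ to be totally real.
\item Because $N$ contains a Frobenius at each $v\mid q$ and the extension is unramified, the decomposition group of a prime above $v$ in $\Gal(\mathcal L/F_\ell)$ is generated by the image of $g_v$, which is trivial. Hence $v$ splits completely in $L$, and since $q$ splits completely in $F_\ell$, $q$ splits completely in $L$. Some care is needed here: the Frobenius at other primes above $v$ are conjugates of $g_v$, and they also lie in $N$ because $N$ is normal.
\item Because $L/F_\ell$ is unramified, $\rd(L)=\rd(F_\ell)$. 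For $K=L(i)$, the tower formula of \Cref{def:discriminant} gives
\[
|\Delta_K|\le|\Delta_L|^2\,\Norm(\frakd_{K/L}),
\]
and the relative discriminant divides $4$. Hence $\rd(K)\le 2\,\rd(F_\ell)$, and (ii) gives $\log\rd(K)\le C_{\rm rd}\ell\log\ell+1$. By (vi),
\[
h(K)\le\max\{2,\rd(K)\}^{2C_{\rm class}[L:\Q]},
\]
so setting $H_\ell:=\max\{2,2\rd(F_\ell)\}^{2C_{\rm class}}$ yields $\log H_\ell\le C_H\ell\log\ell$. This $H_\ell$ depends only on $\ell$.
\end{enumerate}

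The main obstacle is step (ii). It requires justifying carefully that killing Frobenius representatives in the pro-$3$ quotient yields complete splitting in every finite layer. The argument passes to the finite quotient $\Gal(L'/F_\ell)$ for a Galois closure $L'$ of $L$ inside $\mathcal L$, identifies the decomposition group there with the cyclic group generated by Frobenius (finite ramification is trivial), and uses normality of $N$ to handle every prime above $v$.

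A secondary point is that $\mathcal L_{\ell,\mathcal Q}/F_\ell$ must be Galois. This holds because $N$ is a normal subgroup of $G_\ell$.
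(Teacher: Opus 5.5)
Your proposal is correct and follows essentially the same route as the paper. You quotient $G_\ell$ by the closed normal closure of the $3|\mathcal Q|$ Frobenius representatives in $\Phi(G_\ell)$ and use (iv)--(v) of \Cref{thm:ud-external} for infinitude. You get complete splitting from triviality of Frobenius in finite Galois layers. You get the class-number bound from $\rd(K)\le 2\rd(F_\ell)$ together with (vi). The differences are cosmetic. Your choice $T_\ell\approx(\ell-1)^2/12-C'\ell$ is sharper than the paper's $\lfloor(\ell-1)^2/100\rfloor$, and it checks out via monotonicity of $d^2/4-d$. You also deduce total reality directly from the absence of infinite ramification in $F_\ell^{\rm ur,3}$, whereas the paper observes that an order-$2$ decomposition group cannot sit inside a $3$-group.
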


\begin{proof}
Fix $\ell$ large enough that \Cref{thm:ud-external} applies and that the numerical inequalities below hold.  Put
\[
      T_\ell:=\left\lfloor\frac{(\ell-1)^2}{100}\right\rfloor,
      \qquad
      c_0:=\frac1{200}.
\]
For sufficiently large $\ell$, we have $T_\ell\ge c_0\ell^2$.  
The field $N_\ell$ contains $\Q(i)$ by definition.
Since it is normal over $\Q$ and contains $F_\ell=L_{0,\ell}$, it also contains the normal closure of $L_{0,\ell}$.

Let $\mathcal Q$ be any set of at most $T_\ell$ rational primes splitting completely in $N_\ell$.  For each $q\in\mathcal Q$ and each prime $v\mid q$ of $F_\ell$, choose a Frobenius representative $\sigma_v\in G_\ell$.\footnote{Technically, Frobenius representatives are chosen with respect to some prime above $v$. Here, since the Frobenius representatives for different choices of primes above $v$ are equivalent up to conjugation (see e.g. \cite{marcus1977number}), we let $\sigma_{v}$ denote any  representative in the conjugacy class.}
By \Cref{thm:ud-external} (iii), every such $\sigma_v$ lies in $\Phi(G_\ell)$ and $q$ splits completely in $F_\ell$.  
Since $F_\ell/\Q$ is cubic and $q$ splits completely in $F_\ell$, there are exactly $3|\mathcal Q|$ such primes $v$.

Let $N(\mathcal Q)$ be the closed normal\footnote{Since Frobenius representatives are equivalent up to conjugacy class, the normal subgroup is well defined.} subgroup of $G_\ell$ generated by these $3|\mathcal Q|$ elements (i.e. $\sigma_v \in \Phi(G_{\ell})$), and set
\[
      \overline G_{\ell,\mathcal Q}:=G_\ell/N(\mathcal Q) \text{.}
\]
Write $d:=d(G_\ell)$.  By \Cref{thm:ud-external} (i) and (iv),
\[
      d(\overline G_{\ell,\mathcal Q})=d,
      \qquad
      r(\overline G_{\ell,\mathcal Q})
      \le d+C_{\rm Sha}+3|\mathcal Q|
      \le d+C_{\rm Sha}+\frac{3(\ell-1)^2}{100}.
\]
Because $d\ge\ell-1$, the last term is at most $d+C_{\rm Sha}+3d^2/100$.  For all sufficiently large $\ell$ this is strictly less than $d^2/4$.  Also $d(\overline G_{\ell,\mathcal Q})=d>0$, so the quotient is nontrivial.  Hence \Cref{thm:ud-external} (v) implies that $\overline G_{\ell,\mathcal Q}$ is infinite.

Let $\mathcal L_{\ell,\mathcal Q}$ be the fixed field of $N(\mathcal Q)$ inside $F_\ell^{\rm ur,3}$.  
By the fundamental theorem of infinite Galois theory (see e.g. Theorem 7.13 of \cite{milne2003fields}), the intermediate extension $\calL_{\ell, \calQ}$ with Galois group $\Gal(\mathcal L_{\ell,\mathcal Q}/F_\ell) \cong \overline{G}_{\ell, \calQ}$ is an infinite Galois pro-$3$ extension.
Indeed, we have argued the infinite condition above, and quotients of the pro-$3$ group $G_{\ell}$ by closed normal subgroups are pro-$3$ groups.

We now verify the remaining properties.
Let $L/L_{0, \ell}$ be a finite subextension.
We claim that any finite subextension is everywhere unramified over $F_\ell$.
We use the following standard facts (see e.g. Theorems 29 and 31 of \cite{marcus1977number}): 
\begin{enumerate}
    \item if a finite prime is unramified in $L_1/K$ and $L_2/K$, then it is also unramified in their compositum (i.e. the extension generated by $L_1/K$ and $L_2/K$),
    \item if a finite prime is unramified in an extension $M/K$, then it is unramified in any intermediate extension $K \subseteq M \subseteq L$.
\end{enumerate}
Consider a finite sub-extension $L = F_{\ell}(\alpha_1, \dots, \alpha_m)$ and a finite prime $p$.
Note each $\alpha_i$ lies in some finite everywhere-unramified Galois extension $E_i/F_\ell$ where $p$ is unramified.
Then, $L \subseteq E_1 E_2 \dots E_m$ is an intermediate extension of a finite compositum of extensions where $p$ is unramified, so that $p$ is unramified in $L$.
To argue infinite primes are unramified, we observe that each finite subextension is totally real. Indeed, if such an $L$ were not totally real, then after passing to its finite Galois closure $E/F_\ell$ inside $\mathcal L_{\ell,\mathcal Q}$, some real place of $F_\ell$ would have a complex place of $E$ above it. By \Cref{clm:decomposition-groups}, the corresponding decomposition group is isomorphic to $\Gal(\C/\R)$ and therefore has order $2$, which is impossible inside the finite quotient $\Gal(E/F_\ell)$ of the pro-$3$ group.
Since $F_{\ell}$ is totally real, so is every finite subextension.

Next, we argue that every prime $q \in \calQ$ splits completely in $L$.
For $q \in\mathcal Q$, the prime $q$ splits completely in $F_\ell$ by \Cref{thm:ud-external} (iii).  
For every prime $v \mid q$ of $F_\ell$, the chosen Frobenius representative $\sigma_v$ maps to the identity in $\overline G_{\ell,\mathcal Q}$ by construction.
First, take some finite intermediate Galois subextension $E$, so that Galois correspondence guarantees $E$ is the fixed field of some normal subgroup $U \triangleleft G_{\ell}$. 
Since $E \subseteq \calL_{\ell, \calQ}$ we have $N(\calQ) \subseteq U$ so that $\sigma_{v}$ maps to the identity in $G_{\ell}/U$.
Since an unramified prime $v$ splits completely in a normal extension iff the Frobenius representative $\sigma_v = 1$ (see e.g. Theorem 32 of \cite{marcus1977number}), $v$ splits completely in every finite Galois subextension of $\mathcal L_{\ell,\mathcal Q}/F_\ell$.
Hence also in every finite subextension (e.g. $L$) since we can choose some finite Galois sub-extension $E$ containing $L$ (taking the finite Galois closure of $L/F_\ell$ inside $\mathcal L_{\ell,\mathcal Q}$) and above we have argued that $v$ splits completely in $E$.
Since $q$ already splits completely in $F_\ell$ and every prime $v$ above $q$ splits completely in $L$, it follows that $q$ splits completely in every finite subextension $L/F_\ell$.

It remains to prove the class-number bound with a constant independent of $\mathcal Q$.  
Let $L/F_\ell$ be a finite subextension and set $K=L(i)$.  
Since $L/F_\ell$ is unramified at finite primes, $\operatorname{rd}(L)=\operatorname{rd}(F_\ell)$ (\Cref{def:discriminant}). 
Since the two $L$-embeddings of $K$ are the identity and complex conjugation and $1, i \in \calO_{K}$, we have that $\frakd_{K/L}$ contains
\begin{equation*}
    \disc(1, i) = \det\begin{pmatrix}
        \Tr(1) & \Tr(i) \\
        \Tr(i) & \Tr(-1)
    \end{pmatrix} = \det\begin{pmatrix}
        2 & 0 \\
        0 & -2
    \end{pmatrix} = - 4 \text{.}
\end{equation*}
Hence, $(4) \subseteq \frakd_{K/L}$, or equivalently, $\frakd_{K/L} \mid 4 \OO_L$. 
Therefore,
\[
      |\Delta_K|
      = |\Delta_L|^2\,\Norm_{L/\Q}(\mathfrak d_{K/L})
      \le |\Delta_L|^2\,4^{[L:\Q]}.
\]
The equality follows from \Cref{def:discriminant} and observing $[K:L] = 2$.
The inequality follows from the fact that $(4) \subseteq \frakd_{K/L}$ implies $\Norm_{L/\Q}(\frakd_{K/L}) \leq \Norm_{L/\Q}((4)) = 4^{[L:\Q]}$ since every embedding fixes $\Q$.
Taking $2[L:\Q]$-th roots gives
\[
      \operatorname{rd}(K)\le 2\operatorname{rd}(F_\ell).
\]
By \Cref{thm:ud-external} (vi),
\[
      h(K)\le \max\{2,2\operatorname{rd}(F_\ell)\}^{2C_{\rm class}[L:\Q]}.
\]
Define
\[
      H_\ell:=\max\{2,2\operatorname{rd}(F_\ell)\}^{2C_{\rm class}} .
\]
Then $h(K)\le H_\ell^{[L:\Q]}$.  This $H_\ell$ depends only on $\ell$, not on $\mathcal Q$.  Finally, by \Cref{thm:ud-external} (ii),
\[
      \log H_\ell
      \le 2C_{\rm class}\log\max\{2,2\operatorname{rd}(F_\ell)\}
      \le C_H\ell\log\ell
\]
for a sufficiently large absolute constant $C_H$ and all sufficiently large $\ell$.  This proves every assertion.
\end{proof}

\begin{theorem}[Medium-Prime Split Tower]\label{thm:medium-tower}
There exist absolute constants
\[
       T\in\Z_{\ge 16},
       \qquad H\ge1,
       \qquad C_{\rm ar}\in\Z_{\ge2},
\]
with
\[
       \gamma:=T\log 2-\log H> T H_2(4/T),
       \qquad
       H_2(\rho):=-\rho\log\rho-(1-\rho)\log(1-\rho),
\]
such that the following holds.
(We use $H_2$ to denote binary entropy with natural logarithms to distinguish from the constant $H$ and class number $h$).
For every sufficiently large parameter $Y$ there are $T$ distinct rational primes
\[
       q_1,\ldots,q_T\in[Y,2Y]
\]
and an infinite sequence of CM fields $K_s=L_s(i)$, where $L_s$ is totally real, satisfying:
\begin{enumerate}[label=(\roman*)]
    \item each $q_t$ splits completely in every $K_s$; in particular $q_t\equiv1\pmod4$ and every prime of $K_s$ above $q_t$ has residue field $\F_{q_t}$;
    \item if $f_s:=[L_s:\Q]$, then $f_s\to\infty$ and
    \[
          h(K_s)\le H^{f_s};
    \]
    \item for every sufficiently large $b$ there is a layer $s=s(b,Y)$ with
    \[
          C_{\rm ar}b\le f_s\le 3C_{\rm ar}b .
    \]
\end{enumerate}
The constants $T,H,C_{\rm ar}$ are independent of $Y$ and $b$.
\end{theorem}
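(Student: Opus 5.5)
The plan is to derive \Cref{thm:medium-tower} from \Cref{prop:ud-package} by fixing $\ell$ once and for all, then choosing $T$, the primes, and the tower layers.

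\textbf{Fixing the constants.} The constraint $\gamma>TH_2(4/T)$ is a condition on $T$ relative to $\log H$. As $T\to\infty$, $TH_2(4/T)=O(\log T)$ while $T\log 2$ grows linearly. So once $\ell$ and hence $H:=H_\ell$ are fixed, every sufficiently large $T$ satisfies $T\log 2-\log H>TH_2(4/T)$. The only other requirement is $T\le T_\ell$, and $T_\ell\ge c_0\ell^2$ grows quadratically while $\log H_\ell\le C_H\ell\log\ell$. I will therefore pick $\ell$ large enough that some integer $T\ge16$ with $T\le c_0\ell^2$ satisfies the entropy inequality; taking $T$ of order $\ell^{3/2}$ works comfortably, since then $T\log2\gg \ell\log\ell+\log T$. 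After this, $\ell$, $T$, $H$ and $N_\ell$ are absolute constants.

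\textbf{Choosing the primes.} By the Chebotarev density theorem applied to the fixed normal field $N_\ell$, the primes splitting completely in $N_\ell$ have positive density $1/[N_\ell:\Q]$. Hence for all sufficiently large $Y$ the interval $[Y,2Y]$ contains at least $T$ of them. Take $\calQ=\{q_1,\dots,q_T\}$; this is allowed because $T\le T_\ell$. \Cref{prop:ud-package} then supplies an infinite pro-$3$ extension $\calL_{\ell,\calQ}/F_\ell$. Every finite subextension $L$ of it is totally real, has each $q_t$ splitting completely, and satisfies $h(L(i))\le H^{[L:\Q]}$.

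\textbf{Building the tower and picking layers.} The Galois group $\overline G=\Gal(\calL_{\ell,\calQ}/F_\ell)$ is an infinite pro-$3$ group. Its finite quotients are $3$-groups, and every finite $3$-group has a chain of normal subgroups of index $3$. Taking successive refinements therefore gives a nested chain of finite subextensions $F_\ell=L_0\subset L_1\subset\cdots$ with $[L_{s+1}:L_s]=3$. Then $f_s=3^{s+1}\to\infty$, and $K_s=L_s(i)$ is CM.

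For item (i), each $q_t$ splits completely in $L_s$ and in $\Q(i)$, because $q_t$ splits completely in $N_\ell\supseteq\Q(i)$ and so $q_t\equiv1\pmod4$. Complete splitting is preserved under composita, so $q_t$ splits completely in $K_s$, and the residue fields are $\F_{q_t}$. The class-number bound in (ii) is exactly item (iii) of \Cref{prop:ud-package}; the bound is stated in terms of $[L:\Q]=f_s$.

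For item (iii), consecutive degrees differ by a factor of $3$. So for any $b$ with $C_{\rm ar}b\ge 3$, some $f_s=3^{s+1}$ lies in $[C_{\rm ar}b,3C_{\rm ar}b]$. Here $C_{\rm ar}\ge2$ is any fixed integer, for instance $C_{\rm ar}=2$.

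\textbf{Main obstacle.} The delicate point is independence from $Y$: $H$ must not depend on the particular primes. This is exactly why \Cref{prop:ud-package} is stated uniformly in $\calQ$, and it is where the fixed choice of $\ell$ is essential. The other step needing care is producing a chain with index-$3$ steps inside an infinite pro-$3$ group. I would handle it by choosing a descending chain of open normal subgroups with trivial intersection and refining each finite $3$-group step into index-$3$ normal steps. This keeps every layer Galois over the previous one, although only finiteness of each subextension is actually needed.
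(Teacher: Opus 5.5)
Your proposal is correct and follows essentially the same route as the paper. You fix $\ell$ (hence $H=H_\ell$ and $N_\ell$) once, pick $T\le T_\ell$ satisfying the entropy inequality, use Chebotarev in the fixed field $N_\ell$ to get the primes in $[Y,2Y]$, apply \Cref{prop:ud-package}, and regularize degrees via the infinite pro-$3$ group. The differences from the paper are cosmetic: it takes $T=\lfloor (c_0/2)\ell^2\rfloor$, $C_{\rm ar}=3$, separate (not necessarily nested) normal subgroups of index $3^i$ instead of your nested index-$3$ chain, and it checks complete splitting in $L(i)$ directly via $x^2+1$ modulo $\frakp$ rather than via composita.
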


\begin{proof}
We derive the statement from \Cref{prop:ud-package}.  For $\rho\le1/2$,
\[
       H_2(\rho)=O\!\left(\rho\log\frac1\rho\right),
\]
and therefore, for $T\to\infty$,
\[
       T H_2(4/T)=O(\log T).
\]
For the tower parameter $\ell$, put
\[
       T_\ell^*:=\floor{(c_0/2)\ell^2}.
\]
Then $T_\ell^*\le T_\ell$ (recall $T_{\ell} := \lfloor (\ell - 1)^2/100 \rfloor$ and $c_0 = 1/200$) and, along the tower parameters of \Cref{prop:ud-package},
\[
       T_\ell^*\log 2-\log H_\ell
       \ge \bigl((c_0/2)\ell^2-1\bigr)\log2-C_H\ell\log\ell
       =\Omega(\ell^2).
\]
Choose once and for all a sufficiently large $\ell$ such that $T_\ell^*\ge16$ and
\[
       T_\ell^*\log2-\log H_\ell>T_\ell^* H_2 (4/T_\ell^*).
\]
Set
\[
       T:=T_\ell^*,\qquad H:=H_\ell,\qquad N:=N_\ell,\qquad d_0:=d_\ell = 3,
       \qquad C_{\rm ar}:=\max\{2,d_0\}.
\]
These constants now depend only on the objects fixed in \Cref{prop:ud-package}.

Let $Y$ be sufficiently large.  
We now use only the standard Chebotarev Density Theorem for the fixed finite normal extension $N/\Q$: if $\pi_N(x)$ counts rational primes $q\le x$ that split completely in $N$, then
\begin{equation}
    \label{eq:prime-density}
    \pi_N(x)\sim \frac{1}{[N:\Q]}\frac{x}{\log x} \text{.}
\end{equation}
   
To see this, recall the Chebotarev Density Theorem below.

\begin{definition}[Density (see e.g. \cite{milne2020algebraic})]
    \label{def:density}
    Let $S$ be a set of rational primes, and let $P$ be the set of all rational primes. 
    We say that $S$ has density $\delta$ if
    \begin{equation*}
        \delta = \lim_{n \rightarrow \infty} \frac{|\set{p \in S \mid p \leq n}|}{|\set{p \in P \mid p \leq n}|} \text{.}
    \end{equation*}
\end{definition}

\begin{theorem}[Chebotarev Density Theorem (see e.g. \cite{milne2020algebraic})]
    \label{thm:chebotarev-density}
    Let $N/\Q$ be a finite Galois extension.
    The primes that split completely in $N$ have density $1/[N:\Q]$.
\end{theorem}

In particular, combined with the Prime Number Theorem, we have \eqref{eq:prime-density}.
Subtracting the asymptotics at $2Y$ and $Y$ gives
\[
       \#\{q\in[Y,2Y]:q\text{ splits completely in }N\}
       \sim \frac{1}{[N:\Q]}\frac{Y}{\log Y}.
\]
For all large $Y$ this number is at least the fixed integer $T$.  Choose distinct primes
\[
       q_1,\ldots,q_T\in[Y,2Y]
\]
splitting completely in $N$, and put $\mathcal Q:=\{q_1,\ldots,q_T\}$. 
Applying \Cref{prop:ud-package} to this $\mathcal Q$ gives an infinite pro-$3$ Galois extension over $L_{0,\ell}$.  
Because \Cref{prop:ud-package} is uniform in $\mathcal Q$, the same $H$ works for all choices of $Y$ and all choices of the dyadic split primes.

Every $q_t$ splits completely in each totally real finite subextension $L/L_{0, \ell}$ by \Cref{prop:ud-package}.  
Also $N \supseteq \Q(i)$, so $q_t$ splits in $\Q(i)$ since $\Q(i)$ is an intermediate extension.
Hence $q_t \equiv 1 \pmod{4}$ by Euler's criterion.
We claim that $q_t$ splits completely in $K=L(i)$, so that every prime of $K$ above $q_t$ has residue field $\F_{q_t}$. 
Indeed $q_t$ splits completely in $L$ and $q_t \equiv 1 \pmod{4}$ so consider a prime $\frakp$ above $q_t$ in $L$.
$q_t \equiv 1 \pmod{4}$ implies that $x^2 + 1$ splits into linear factors in $\calO_{L}/\frakp \cong \F_{q_t}$.
Then, each $\frakp$ splits completely into two distinct primes in $K$ (see e.g. Theorem 27 of \cite{marcus1977number}), implying the complete splitting of $q_t$ in $K$.

The same \Cref{prop:ud-package} gives
\[
       h(K)\le H^{[L:\Q]}
\]
for every such layer.

It remains to regularize the degrees.  
Let $G := \Gal(\mathcal L_{\ell,\mathcal Q}/L_{0,\ell})$, an infinite pro-$3$ group.  
For every $i \ge 0$, infinitude of $G$ gives a finite $3$-group quotient $P$ of order at least $3^i$. 
Since every finite $3$-group has a normal subgroup of index $3^i$ whenever its order is at least $3^i$ (see e.g. Theorem 6.1 of \cite{dummit2004abstract}),
pulling such a subgroup back to $G$ gives a normal subgroup $G_i \triangleleft G$ that is both open and closed with
\[
       [G:G_i]=3^i \text{.}
\]
Let $L_i$ be the corresponding finite layer (i.e. the fixed field of $G_i$) promised by the infinite Galois correspondence (Theorem 7.13 of \cite{milne2003fields}).  
Then, 
\[
       [L_i:\Q] = [L_i : L_{0, \ell}] [L_{0, \ell}: \Q] = d_0\,3^i \text{.}
\]
For every sufficiently large $b$, choose the first $i$ with
\[
       d_0\,3^i\ge C_{\rm ar}b .
\]
Since $b$ is large, $i\ge1$, and minimality gives
\[
       d_0\,3^{i-1}<C_{\rm ar}b.
\]
Thus
\[
       C_{\rm ar}b\le [L_i:\Q]=d_0\,3^i<3C_{\rm ar}b.
\]
Taking the sequence of all such finite layers proves the theorem, with constants independent of $Y$ and $b$.
\end{proof}

\begin{theorem}[Medium-Prime Norm-One $S$-Unit Fiber]\label{thm:medium-input}
There exist absolute constants
\[
        T\in\Z_{\ge 16},
        \qquad C_{\rm ar}\in\Z_{\ge2},
        \qquad \gamma>0,
\]
with
\[
        \gamma>T H_2(4/T),
        \qquad
        H_2(\rho):=-\rho\log \rho-(1-\rho)\log(1-\rho),
\]
such that the following holds for every sufficiently large parameter $Y$ and every sufficiently large integer $b$.
There is a CM field $K$ with complex conjugation $c$, an integer $f$ with
\[
        C_{\rm ar}b\le f\le 3C_{\rm ar}b,
        \qquad [K:\Q]=2f,
\]
and $T$ distinct rational primes $q_1,\ldots,q_T\in [Y,2Y]$ satisfying:
\begin{enumerate}[label=(\roman*)]
    \item each $q_t$ splits completely in $K$; in particular every prime of $K$ above $q_t$ has residue field $\F_{q_t}$, and $q_t\equiv1\pmod4$;
    \item the $T$ rational primes determine $Tf$ conjugate prime-ideal pairs
    \[
        \{\mathfrak P_m,c\mathfrak P_m\},\qquad m=1,\ldots,Tf,
    \]
    in $K$;
    \item there is a vector $\eta\in\{0,1\}^{Tf}$, a family $\mathcal F\subseteq\{0,1\}^{Tf}$ with
    \[
        |\mathcal F|\ge \exp(\gamma f),
    \]
    and, for every $\eps\in\mathcal F$, an element $u_\eps\in K^\times$ such that, with $Q:=\prod_{t=1}^Tq_t$,
    \begin{align*}
        u_\eps c(u_\eps)&=1,\qquad
        |\sigma(u_\eps)|=1\quad\text{for every complex embedding }\sigma,\\
        Q^2u_\eps&\in\OO_K,\\
        \val_{\mathfrak P_m}(u_\eps)&=2(\eps_m-\eta_m),\qquad
        \val_{c\mathfrak P_m}(u_\eps)=-2(\eps_m-\eta_m)
    \end{align*}
    for every $m=1,\ldots,Tf$.
\end{enumerate}
All constants are independent of $Y$ and $b$.
\end{theorem}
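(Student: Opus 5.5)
The plan is to start from \Cref{thm:medium-tower}, with its constants $T$, $H$ and $C_{\rm ar}$, and set $\gamma:=T\log 2-\log H$, so that $\gamma>TH_2(4/T)$ holds by construction. For large $Y$ and $b$, take the primes $q_1,\ldots,q_T\in[Y,2Y]$ and the layer $K=K_s=L_s(i)$ with $f=f_s\in[C_{\rm ar}b,3C_{\rm ar}b]$; then $[K:\Q]=2f$ and (i) is part of \Cref{thm:medium-tower}.

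For (ii), each $q_t$ splits completely in $L$ into $f$ primes $\frakp$. Each such $\frakp$ splits into two distinct primes of $K$, and these are swapped by $c$, because $c$ generates $\Gal(K/L)$, which acts transitively on the primes above $\frakp$. Choosing one prime from each pair gives $Tf$ conjugate pairs $\{\frakP_m,c\frakP_m\}$, lying over distinct rational primes or distinct primes of $L$, so all $2Tf$ primes are distinct.

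For (iii), use a pigeonhole argument in the class group. For each $\eps\in\{0,1\}^{Tf}$, form the ideal
\[
\fraka_\eps:=\prod_m \frakP_m^{\eps_m}(c\frakP_m)^{1-\eps_m}.
\]
There are $2^{Tf}$ such ideals but at most $h(K)\le H^f$ ideal classes, so some class contains a family $\calF$ with $|\calF|\ge 2^{Tf}/H^f=\exp(\gamma f)$. Fix $\eta\in\calF$. For each $\eps\in\calF$, the ideal $\fraka_\eps\fraka_\eta^{-1}$ is principal, say generated by $\beta_\eps$. Its valuations are $\eps_m-\eta_m$ at $\frakP_m$ and $-(\eps_m-\eta_m)$ at $c\frakP_m$.

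Now set
\[
u_\eps:=\beta_\eps/c(\beta_\eps).
\]
Then $u_\eps c(u_\eps)=1$, and for every complex embedding $|\sigma(u_\eps)|=|\sigma(\beta_\eps)|/|\overline{\sigma(\beta_\eps)}|=1$, using \Cref{def:cm-field}. By \Cref{lem:valuation-complex}, $\val_{\frakP_m}(c\beta_\eps)=\val_{c\frakP_m}(\beta_\eps)=-(\eps_m-\eta_m)$, so $\val_{\frakP_m}(u_\eps)=2(\eps_m-\eta_m)$, and symmetrically $\val_{c\frakP_m}(u_\eps)=-2(\eps_m-\eta_m)$. The ideal $(u_\eps)$ is supported only on the $2Tf$ chosen primes, because $(\beta_\eps)$ is, and these are exactly all primes above the $q_t$. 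Every valuation of $u_\eps$ is at least $-2$, and $\val_{\frakP}(Q)=1$ at each such prime (argued as in \Cref{lem:valuation-table}). Hence $Q^2u_\eps$ has nonnegative valuation everywhere and lies in $\calO_K$.

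The main point to check is the class-group pigeonhole. It relies on the class-number bound $h(K)\le H^{f}$ being in terms of $f=[L:\Q]$, as given by \Cref{thm:medium-tower}(ii), and on the choice of $\gamma$. The remaining steps are routine valuation bookkeeping. All constants come from \Cref{thm:medium-tower}, so they are independent of $Y$ and $b$.
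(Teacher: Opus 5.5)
Your proposal is correct and follows the paper's proof essentially step for step: the same class-group pigeonhole over the ideals $\fraka_\eps$, the same choice $u_\eps=\beta_\eps/c(\beta_\eps)$, and the same valuation bookkeeping to get $Q^2u_\eps\in\calO_K$. The only deviation is in (ii). There you show that $c$ fixes no prime above $q_t$ because $\Gal(K/L)=\{1,c\}$ acts transitively on the two primes above each $\frakp$ of $L$, whereas the paper argues through the completion $K_{\frakP}=\Q_{q_t}$; your argument is valid and somewhat simpler.
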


\begin{proof}
Apply \Cref{thm:medium-tower} with the given $Y$, and choose the layer with $f=[L_s:\Q]$ in the interval $[C_{\rm ar}b,3C_{\rm ar}b]$.  Put $K=K_s=L_s(i)$.  
Each $q_t$ splits completely in $K$, hence gives $2f$ primes of residue degree and ramification index one.
We claim that complex conjugation permutes these primes and fixes none of them.
Recall that complex conjugation preserves prime ideals (\Cref{def:conjugate-prime}) and complex conjugation fixes rational primes, so that if $\frakP$ lies above $(q_t)$, so does $c \frakP$.

We claim that \(c\) fixes no prime \(\frakP \mid q_t\). 
Suppose \(c\frakP=\frakP\). 
Since \(q_t\) splits completely in \(K\), we have \(e(\frakP|q_t)=f(\frakP|q_t)=1\). 
Hence, by the
local degree formula for completions (\Cref{clm:local-degree-formula}),
\[
[K_{\frakP} : \mathbb Q_{q_t}] = e(\frakP|q_t) f(\frakP|q_t)=1,
\]
so \(K_{\frakP} = \mathbb Q_{q_t}\). 
Consider the injective map $c: K \hookrightarrow K_{c\frakP} = K_{\frakP}$ given by embedding $K$ into $K_{\frakP}$.
Since \(c\frakP=\frakP\), the automorphism \(c\) preserves the
\(\frakP\)-adic valuation and therefore extends to a map $K_{\frakP} \rightarrow K_{\frakP}$ (\Cref{clm:completion-homomorphism}) that fixes $\Q_{q_t} \subseteq K_{\frakP}$. 
But \(K_{\frakP} = \mathbb Q_{q_t}\), so this extension is the identity. 
Hence \(c\) is the identity on \(K\), contradicting that \(c\) is the complex conjugation automorphism.
Thus \(c\frakP \ne \frakP\).

Since $c^2 = 1$, the $2f$ primes form exactly $f$ conjugate pairs.  
Enumerate these pairs as
\[
       \{\mathfrak P_m,c\mathfrak P_m\},\qquad m=1,\ldots,Tf .
\]
The class-number bound gives $h(K)\le H^f$.  Let
\[
       \gamma:=T\log 2-\log H.
\]
By \Cref{thm:medium-tower}, $\gamma>T H_2(4/T)>0$.

For each $\eps\in\{0,1\}^{Tf}$ define an integral ideal
\[
       \mathfrak A_\eps:=\prod_{m=1}^{Tf}\mathfrak P_m^{\eps_m}(c\mathfrak P_m)^{1-\eps_m} \text{.}
\]
There are $2^{Tf}$ such ideals and at most $h(K)\le H^f$ ideal classes.  
Hence some ideal class contains a fiber
\[
       \mathcal F\subseteq\{0,1\}^{Tf},
       \qquad
       |\mathcal F|\ge \frac{2^{Tf}}{H^f}=\exp(\gamma f).
\]
Fix one element $\eta\in\mathcal F$.  For every $\eps\in\mathcal F$, the fractional ideal
\[
       \mathfrak A_\eps\mathfrak A_\eta^{-1}
\]
is principal.  Choose $v_\eps\in K^\times$ with
\[
       (v_\eps)=\mathfrak A_\eps\mathfrak A_\eta^{-1},
\]
and set
\[
       u_\eps:=v_\eps/c(v_\eps).
\]
Then $u_\eps c(u_\eps)=1$. For any complex embedding $\sigma$, the embeddings $\sigma$ and $\sigma\circ c$ are complex conjugates on the CM field, so
\[
       |\sigma(u_\eps)|=\frac{|\sigma(v_\eps)|}{|\sigma(c(v_\eps))|}
       =\frac{|\sigma(v_\eps)|}{|\overline{\sigma(v_\eps)}|}=1.
\]
 Passing from $v_\eps$ to $v_\eps/c(v_\eps)$ doubles these valuations, giving the displayed valuation pattern for $u_\eps$.

We now prove that \(Q^2u_\varepsilon\in\mathcal O_K\). 
It suffices to show that \(\operatorname{val}_{\mathfrak P}(Q^2u_\varepsilon)\ge 0\) for every finite prime ideal \(\mathfrak P\subseteq\mathcal O_K\).
Recall that \((v_\varepsilon)=\frakA_\varepsilon \frakA_\eta^{-1}\), where \(\frakA_\varepsilon=\prod_{m=1}^{Tf}\frakP_m^{\varepsilon_m}(c\frakP_m)^{1-\varepsilon_m}\) and \(\frakA_\eta=\prod_{m=1}^{Tf}\frakP_m^{\eta_m}(c\frakP_m)^{1-\eta_m}\). 
Thus the only prime ideals at which \(v_\varepsilon\) can have nonzero valuation are the primes \(\frakP_m\) and \(c\frakP_m\), all of which lie above the selected rational primes \(q_1,\dots,q_T\). 
Moreover, for each \(m\), we have \(\operatorname{val}_{\frakP_m}(v_\varepsilon)=\varepsilon_m-\eta_m\) and \(\operatorname{val}_{c\frakP_m}(v_\varepsilon)=(1-\varepsilon_m)-(1-\eta_m)=-(\varepsilon_m-\eta_m)\). 
Since \(\varepsilon_m,\eta_m\in\{0,1\}\), the valuations all lie in \(\{-1,0,1\}\).
Furthermore, $\val_{\frakP_{m}}(u_{\eps}) = \val_{\frakP_{m}}(v_{\eps}/c(v_{\eps})) = 2 (\eps_m - \eta_m)$ and similarly for $\val_{c \frakP_{m}}(u_{\eps}) = - 2 (\eps_m - \eta_m)$.

Now let \(\mathfrak P\) be any finite prime of \(K\). If \(\mathfrak P\) does not lie above any of the selected rational primes \(q_t\), then \(\operatorname{val}_{\mathfrak P}(v_\varepsilon)=0\) and, since complex conjugation permutes the selected primes, \(\operatorname{val}_{\mathfrak P}(c(v_\varepsilon))=0\). Hence \(\operatorname{val}_{\mathfrak P}(u_\varepsilon)=\operatorname{val}_{\mathfrak P}(v_\varepsilon/c(v_\varepsilon))=0\). Since \(Q^2\in\mathcal O_K\), we get \(\operatorname{val}_{\mathfrak P}(Q^2u_\varepsilon)\ge 0\).

It remains to consider the case where \(\mathfrak P\mid q_t\) for some selected rational prime \(q_t\). At such a prime, both \(\operatorname{val}_{\mathfrak P}(v_\varepsilon)\) and \(\operatorname{val}_{\mathfrak P}(c(v_\varepsilon))\) lie in \(\{-1,0,1\}\). Therefore \(\operatorname{val}_{\mathfrak P}(u_\varepsilon)=\operatorname{val}_{\mathfrak P}(v_\varepsilon)-\operatorname{val}_{\mathfrak P}(c(v_\varepsilon))\ge -2\). On the other hand, \(q_t\) splits completely in \(K\). In particular, \(q_t\) is unramified in \(K\), so every prime \(\mathfrak P\mid q_t\) has ramification index \(1\). Hence \(\operatorname{val}_{\mathfrak P}(q_t)=1\). Since \(Q=\prod_{t=1}^T q_t\), we have \(\operatorname{val}_{\mathfrak P}(Q^2)=2\). Thus \(\operatorname{val}_{\mathfrak P}(Q^2u_\varepsilon)=\operatorname{val}_{\mathfrak P}(Q^2)+\operatorname{val}_{\mathfrak P}(u_\varepsilon)\ge 2-2=0\).
Therefore \(\operatorname{val}_{\mathfrak P}(Q^2u_\varepsilon)\ge 0\) for every finite prime \(\mathfrak P\subseteq\mathcal O_K\). Hence \(Q^2u_\varepsilon\) is integral, and so \(Q^2u_\varepsilon\in\mathcal O_K\).

\end{proof}

\NumberTheoryProp*

\begin{proof}
Apply \Cref{thm:medium-input} with $Y=\ceil{4\sqrt L}$ and the given $b$.  We get a CM field $K$, a degree parameter $f\in[C_{\rm ar}b,3C_{\rm ar}b]$, $T$ rational primes $q_t\in[Y,2Y]$ splitting completely in $K$, and a family
\[
        \mathcal F\subseteq\{0,1\}^{Tf},\qquad |\mathcal F|\ge e^{\gamma f},
\]
with associated elements $\set{u_\eps}_{\eps \in \calF}$ and $\eta$.

We first extract many shattered valuation coordinates.  
Let
\[
        M:=Tf,
        \qquad s:=(3C_{\rm ar}+2)b.
\]
Since $f\ge C_{\rm ar}b$, we have
\[
        \frac{s}{M}
        \le \frac{(3C_{\rm ar}+2)b}{TC_{\rm ar}b}
        \le \frac4T
        \le \frac12.
\]
The Sauer--Shelah lemma says that if $\mathcal F$ shatters no set of size $s$, then
\[
        |\mathcal F|\le \sum_{i=0}^{s-1}\binom{M}{i}
        \le \exp\bigl(M H_2(s/M)\bigr)
        \le \exp\bigl(Tf H_2(4/T)\bigr) \text{.}
\]
The entropy slack $\gamma > T H_2(4/T)$ in \Cref{thm:medium-input} makes this strictly smaller than $e^{\gamma f}\le |\mathcal F|$.  Hence $\mathcal F$ shatters a coordinate set $S$ of size at least $(3C_{\rm ar}+2)b$.

Each rational-prime label contributes exactly $f\le3C_{\rm ar}b$ coordinates.  
Our goal is to find a subset of $S$ of size at least $2b$ that contains at most $b$ coordinates of each rational-prime label.
If at least two labels have more than $b$ selected coordinates in $S$, then after truncation those two labels alone contribute $2b$ retained coordinates. 
If at most one label has more than $b$ selected coordinates in $S$, then the total number lost is at most $f-b\le(3C_{\rm ar}-1)b$, so the retained number is at least
\[
       (3C_{\rm ar}+2)b-(3C_{\rm ar}-1)b=3b.
\]
Thus in all cases at least $2b$ coordinates remain, and no label occurs more than $b$ times among them.  Retain exactly $2b$ of the surviving coordinates.  This retained set is a subset of the shattered set $S$, and hence is itself shattered.  

We claim that these coordinates can be partitioned into $b$ pairs whose two members have distinct rational-prime labels.  Indeed, suppose an even number $2m$ of coordinates remain and every label class has size at most $m$.  If the maximum multiplicity is exactly $m$, then at most two label classes have this multiplicity.  Remove one coordinate from a largest label class and one coordinate from a different largest class if there are two, or from any different label class if there is only one; in the one-largest-class case such a different class exists whenever $m>0$, because only $m$ of the $2m$ coordinates have the largest label.  All remaining label classes then have size at most $m-1$.  If the maximum multiplicity is already less than $m$, then it is at most $m-1$; remove any two coordinates with distinct labels, which exist unless $m=0$.  Again every remaining label class has size at most $m-1$, half of the remaining $2m-2$ coordinates.  Induction on $m$ (beginning with $m = b$) proves the claim.

Label the resulting pairs by $j=1,\ldots,b$, and write the two coordinates in the $j$th pair as $m(j,1)$ and $m(j,2)$.  For each $r\in\{1,2\}$, define the selected prime ideal
\[
      \mathfrak p_{j,r}:=
      \begin{cases}
      \mathfrak P_{m(j,r)} & \text{if }\eta_{m(j,r)}=0,\\
      c\mathfrak P_{m(j,r)} & \text{if }\eta_{m(j,r)}=1.
      \end{cases}
\]
Now fix $a\in\{0,1\}^b$.  For each selected coordinate $m(j,r)$ prescribe the bit
\[
      \theta_{j,r}(a):=
      \begin{cases}
      a_j & \text{if }\eta_{m(j,r)}=0,\\
      1-a_j & \text{if }\eta_{m(j,r)}=1.
      \end{cases}
\]
Because the retained coordinates form a shattered subset of $S$, there exists $\eps(a)\in\mathcal F$ with
\[
      \eps(a)_{m(j,r)}=\theta_{j,r}(a)\qquad\text{for all }j\in[b],\ r\in\{1,2\}.
\]
Define $u_a:=u_{\eps(a)}$.  If $\eta_{m(j,r)}=0$, then
\[
      \val_{\mathfrak p_{j,r}}(u_a)=\val_{\mathfrak P_{m(j,r)}}(u_{\eps(a)})
      =2(\eps(a)_{m(j,r)}-\eta_{m(j,r)})=2a_j.
\]
If $\eta_{m(j,r)}=1$, then
\[
      \val_{\mathfrak p_{j,r}}(u_a)=\val_{c\mathfrak P_{m(j,r)}}(u_{\eps(a)})
      =-2(\eps(a)_{m(j,r)}-\eta_{m(j,r)})=2a_j.
\]
In both cases the conjugate prime has valuation $-2a_j$, so \eqref{eq:val1}--\eqref{eq:val2} hold.  
The two rational-prime labels assigned to bit $j$ are distinct and both lie in $[Y,2Y]$, so their product is at least $Y^2>L$.  Finally, $[K:\Q]=2f\le6C_{\rm ar}b$, so the degree bound follows after setting $C_0:=6C_{\rm ar}$.
\end{proof}

\section{Reductions from \texorpdfstring{$\Z$}{Z}-OV}
\label{app:reductions}

In this appendix, we give reductions from $\Z$-OV to $\Z$-Max-IP, Furthest Pair, and Bichromatic Closest Pair for completeness. 

\begin{lemma}[Implicit in \cite{DBLP:conf/soda/Williams18}, Theorem 4.3 in \cite{DBLP:journals/toc/Chen20}]\label{lem:zov-to-maxip}
There is an $O(n\poly(r))$-time reduction from $\Z$-OV in dimension $r$ to exact $\Z$-Max-IP in dimension $O(r^2)$.  If the $\Z$-OV coordinates have bit-length $B$, then the produced $\Z$-Max-IP coordinates have bit-length $O(B)$.
\end{lemma}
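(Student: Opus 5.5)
The plan is to follow Williams' and Chen's tensoring trick: turn the test ``$x\cdot y=0$'' into maximizing a quantity that is never positive and vanishes exactly on orthogonal pairs. For $x,y\in\Z^r$ we have $-(x\cdot y)^2\le 0$, with equality iff $x\cdot y=0$. So it suffices to realize $-(x\cdot y)^2$ as an inner product of vectors of dimension $O(r^2)$ that can be computed from $x$ alone and from $y$ alone.

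The construction uses the identity
\[
(x\cdot y)^2=\sum_{k,l\in[r]}x_kx_l\,y_ky_l=(x\otimes x)\cdot(y\otimes y).
\]
Map each $a\in A$ to $\phi(a):=a\otimes a\in\Z^{r^2}$ and each $b\in B$ to $\psi(b):=-(b\otimes b)\in\Z^{r^2}$. Then $\phi(a)\cdot\psi(b)=-(a\cdot b)^2$. Consequently the maximum of $\phi(a)\cdot\psi(b)$ over $A\times B$ equals $0$ iff some pair is orthogonal, and otherwise it is at most $-1$. The reduction calls exact $\Z$-Max-IP once on $\{\phi(a)\}$ and $\{\psi(b)\}$ and answers yes iff the returned value is $0$. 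Computing each image takes $O(r^2)$ arithmetic operations on the given integers, so the total time is $O(n\poly(r))$ once arithmetic on $O(B)$-bit numbers is counted as polynomial.

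For bit-lengths, each new coordinate is a product of two original coordinates, possibly negated. Hence its bit-length is at most $2B+1=O(B)$, as required.

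The main thing to watch is whether ``exact $\Z$-Max-IP'' in \cref{app:reductions} means computing the maximum value or deciding a threshold. In the threshold form we ask whether some pair has inner product $\ge 0$, which is the same question. If the Max-IP problem is required to have nonnegative entries, or the convention needs dimension exactly $O(r^2)$, we can pad with a constant coordinate: shift by adding a coordinate pair $(M, M)$ and offset the threshold by $M^2$. This keeps the dimension at $r^2+O(1)$. Otherwise the argument is a direct verification.
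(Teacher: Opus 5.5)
Your proposal is correct and is exactly the paper's argument: map $x$ to $(x_ix_j)_{i,j}$ and $y$ to $(-y_iy_j)_{i,j}$, so that the inner product equals $-(x\cdot y)^2\le 0$ and the maximum is $0$ iff an orthogonal pair exists. Your closing aside is unnecessary, since the paper's $\Z$-Max-IP allows arbitrary integer entries, and as written it would not work: appending a coordinate pair $(M,M)$ only shifts every inner product by $M^2$ and leaves the existing negative entries negative, so making entries nonnegative would instead need a per-coordinate shift such as $a_i\mapsto(M+a_i,M-a_i)$.
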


\begin{proof}
    Represent
    \[
          P(x,y)=-(x\cdot y)^2=-\sum_{i,j}x_ix_jy_iy_j
    \]
    as an ordinary inner product in dimension $r^2$, for example by mapping $x$ to coordinates $x_ix_j$ and $y$ to coordinates $-y_iy_j$.  Then $P(x,y)\le0$ for all pairs, and $P(x,y)=0$ exactly when $x\cdot y=0$.  Thus a zero inner product exists in the $\Z$-OV instance iff the maximum produced inner product is $0$.  Products of two $B$-bit integers have $O(B)$ bits, so the bit-length stays $O(B)$.
\end{proof}

Next, we give a reduction from $\Z$-OV to Furthest Pair, which is based on \cite[Theorem 1.2]{DBLP:conf/soda/Williams18}. 

\begin{lemma}
\label{lem:zov-to-furthest-pair}
    There is an $O(n \poly(r))$-time reduction from $\Z$-OV in dimension $r$ to Furthest Pair in dimension $O(r^2)$.
    If the $\Z$-OV coordinates have bit-length $B$, then the produced Furthest Pair coordinates have bit-length $O(B + \log r)$.
\end{lemma}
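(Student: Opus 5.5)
The plan is to follow Williams' reduction from Max-IP to Furthest Pair and compose it with \Cref{lem:zov-to-maxip}. Start from the $\Z$-OV instance $A,B\subseteq\Z^r$. Apply the tensoring of \Cref{lem:zov-to-maxip}: map $x\in A$ to $a=(x_ix_j)_{i,j}\in\Z^{r^2}$ and $y\in B$ to $b=(-y_iy_j)_{i,j}$. Then $a\cdot b=-(x\cdot y)^2\le 0$, with equality iff $x\cdot y=0$. The task is then to decide whether $\max_{a,b}a\cdot b=0$, and to encode this as a furthest-pair question on a single point set.

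The standard device is to force all points onto a common sphere and to separate the two colours. Let $W$ be an integer bound on $\|a\|_2^2$ and $\|b\|_2^2$ over all vectors; since entries are $O(B)$-bit and there are $r^2$ coordinates, $W=2^{O(B)}r^2$. Pad each vector with two extra coordinates so that every norm is exactly equalized. Since $W-\|a\|^2$ need not be a perfect square, I would instead use Williams' trick of appending coordinates whose squares sum to $W-\|a\|^2$. Lagrange's four-square theorem gives four integers, but computing them efficiently needs care. A simpler option is to write $W-\|a\|^2$ in base $2$ and use unary-free gadgets.

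My first choice avoids square roots by building the distance identity directly. Set $a'=(a,\,W-\|a\|^2,\,0,\,M)$ and $b'=(-b,\,0,\,W-\|b\|^2,\,-M)$ for a large $M$. Then
\[
\|a'-b'\|^2=\|a\|^2+\|b\|^2+2a\cdot b+(W-\|a\|^2)^2+(W-\|b\|^2)^2+4M^2 .
\]
The unwanted terms $\|a\|^2+(W-\|a\|^2)^2$ are not constant, so this attempt fails. I therefore fall back to the Lagrange approach: decompose $W-\|a\|^2$ as a sum of four squares $s_1^2+\dots+s_4^2$, placed in coordinates reserved for the $A$-side, and put $b$'s four-square completion in separate coordinates. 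Every $a'$ then has squared norm $W$, and every $b'$ too.

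With this padding, $\|a'-b'\|^2=2W-2a'\cdot b'$, and $a'\cdot b'=-a\cdot b$ if $b'$ carries $-b$. Hence the bichromatic furthest distance corresponds to the extreme inner product. To rule out monochromatic pairs being furthest, append one more coordinate equal to $+M$ on the $A$ side and $-M$ on the $B$ side, with $M^2\gg W$. Cross pairs then gain $4M^2$ while same-side pairs gain $0$, and same-side distances are at most $4W<4M^2$. The answer is yes iff the furthest squared distance equals $2W+4M^2$, where $4M^2$ counts only with the correct sign convention and the identity gives $2W+4M^2+2a\cdot b$ in the version corresponding to $a\cdot b\le 0$. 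The final dimension is $r^2+9=O(r^2)$, and the entries are $O(B+\log r)$ bits because $W$, $M$, and the four-square components are at most $\sqrt{W}$.

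The main obstacle is computing the four-square decompositions within $O(n\,\poly(r))$ time. Deterministic algorithms for integers with $O(B+\log r)$ bits run in time polynomial in the bit length only under randomness or assumptions. To sidestep this, I would avoid exact equalization. Take $W$ and split the deficit $W-\|a\|^2$ into a sum of $O(r^2)$ squares of small integers, using the fact that each $a$-coordinate is bounded and greedily subtracting squares $\lfloor\sqrt{\cdot}\rfloor^2$. The greedy remainder shrinks from $D$ to $O(\sqrt D)$ each step, so $O(\log\log D)$ squares suffice, which still gives dimension $O(r^2)$ with a deterministic polynomial computation.
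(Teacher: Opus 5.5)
\textbf{Gap: norm equalization.} Your overall architecture is sound: tensoring as in \Cref{lem:zov-to-maxip}, a $\pm M$ coordinate so that cross pairs dominate, and the test ``furthest squared distance $=2W+4M^2$''. The problem is the norm-equalization step, which does not deliver the stated bounds, and that step is the real content of the lemma.

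The four-square branch has the right dimension $r^2+9$. But, as you note, no unconditional deterministic polynomial-time algorithm for Lagrange decompositions is known, so it gives at best a randomized reduction.

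Your fallback, greedy subtraction of $\lfloor\sqrt{\cdot}\rfloor^2$, needs up to $\Theta(\log\log W)$ squares in the worst case, and the padding width must be uniform over all vectors. Your claim that this ``still gives dimension $O(r^2)$'' is false, because nothing bounds $\log\log W=O(\log(B+\log r))$ by $O(r^2)$.

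This matters in the paper's application. There the $\Z$-OV instances have $r=O(L)$ with $L=\omega(1)$ growing arbitrarily slowly (e.g.\ $L\approx\log\log\log\log n$), while their entries have bit-length of order $b\sqrt L\log L=\log n\cdot L^{-1/2+o(1)}$. Hence $\log\log W=\Theta(\log\log n)\gg r^2$. Your construction then only guarantees Furthest Pair dimension of order $\log\log n$, which destroys the Furthest Pair part of \Cref{thm:main-intro}.

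\textbf{Missing idea: exact equalization is unnecessary.} You only have to preserve the integrality gap between $a\cdot b=0$ and $a\cdot b\le -1$.

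The paper does this over the reals. It maps $u\mapsto(u/\|u\|_2,C,0)$ and $v\mapsto(-v/\|v\|_2,0,C)$. Cross squared distances are then $2(C^2+1)+2u\cdot v/(\|u\|_2\|v\|_2)$, and same-side ones are at most $4$. It then rounds every coordinate to a grid of mesh $\delta=1/(\poly(r)2^{O(B)})$. This is fine enough because a nonzero $u\cdot v$ gives a normalized gap of at least $1/(\poly(r)2^{O(B)})$.

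In your integer language the same fix reads as follows. Multiply $a,b$ by an integer $S>4\sqrt W$, and append a single side-specific coordinate $s_a=\lfloor\sqrt{S^2(W-\|a\|^2)}\rfloor$ (resp.\ $s_b$). Ignoring the $M$ coordinate, each squared norm then falls short of $S^2W$ by at most $2S\sqrt W<S^2/2$. Meanwhile the term $2S^2a\cdot b$ separates orthogonal from nonorthogonal cross pairs by at least $2S^2$, which exceeds the total defect $<S^2$. Same-side pairs stay at squared distance at most $4S^2W$.

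So with $M^2\ge S^2W$, you decide the instance by testing whether the furthest squared distance exceeds $2S^2W+4M^2-S^2$. This uses dimension $r^2+3$, entries of $O(B+\log r)$ bits, and one integer square root per vector.
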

\begin{proof}
    We first run the reduction in \cref{lem:zov-to-maxip} to obtain a $\Z$-Max-IP instance in dimension $O(r^2)$ between sets of vectors $U$ and $V$, where  $u \cdot v \le 0$ for $u \in U$ and $v \in V$, and the goal is to determine if the maximum dot product is $0$. If there is any zero vector, the maximum dot product is trivially $0$. So in the following, we assume all vectors are nonzero. 
    
    Let $C$ be a sufficiently large integer. 
    For any $u \in U$ and $v \in V$, define 
    \[
    u' = (u / \lVert u \rVert_2, C, 0), v' = (-v / \lVert v \rVert_2, 0, C). 
    \]
    Let the set of $u'$ and $v'$ be the Furthest Pair instance.  

    For $u_1'$ and $u_2'$, their squared distance is 
    \[
    \lVert u_1' - u_2'\rVert_2^2 = \lVert u_1 / \lVert u_1 \rVert_2 - u_2 / \lVert u_2 \rVert_2 \rVert_2^2 \le 4, 
    \]
    and similarly, 
    \[
    \lVert v_1' - v_2'\rVert_2^2 \le 4.  
    \]
    Next consider 
    \[
    \lVert u' - v'\rVert_2^2 = \lVert u'\rVert_2^2 + \lVert v'\rVert_2^2 - 2 u' \cdot v' = 2(C^2 + 1) + \frac{2 u \cdot v}{\lVert u \rVert_2 \lVert v \rVert_2}. 
    \]
    For large enough $C$, the furthest pair must be $u', v'$ for some $u \in U$ and $v \in V$. Furthermore, if the furthest pair has squared distance $2(C^2 + 1)$, then there exists $u \in U, v \in V$ with $u \cdot v = 0$; otherwise, there does not exist such pair. 

    The above completes the reduction to Furthest Pair with real coordinates. To convert the vectors to integer coordinates, first consider if $u \cdot v < 0$, then 
    \[
    \frac{u \cdot v}{\lVert u \rVert_2 \lVert v \rVert_2} \le -\frac{1}{\poly(r) \cdot 2^{O(B)}}. 
    \]
    Let $\Delta$ be the absolute value of the above gap.  
    Next, choose $\delta > 0$ be some small real number to be determined. If we round each coordinate $z$ to a nearest multiple $k(z) \delta$ of  $\delta$ for $k(z) \in \Z$, then the squared distance between any two vectors can change by at most $O(\delta r^2)$. Hence, we can have $\delta = O(\Delta/r^2) = \frac{1}{\poly(r) \cdot 2^{O(B)}}$, so that the squared distance between any two vectors changes by at most $\Delta / 2$. This way, we can determine whether the maximum inner product before rounding is $0$, by checking whether the maximum inner product after rounding is larger than $-\Delta / 2$. Finally, after rounding $z$ to $k(z) \delta$, output the integer coordinate $k(z)$ to obtain integer vectors. 
\end{proof}

Finally, we give a reduction from $\Z$-OV to Bichromatic Closest Pair, which is based on \cite[Corollary 2.1]{DBLP:conf/soda/Williams18}.

\begin{lemma}
    There is an $O(n \poly(r))$-time reduction from $\Z$-OV in dimension $r$ to Bichromatic Closest Pair in dimension $O(r^2)$.
    If the $\Z$-OV coordinates have bit-length $B$, then the produced Bichromatic Closest Pair coordinates have bit-length $O(B + \log r)$.
\end{lemma}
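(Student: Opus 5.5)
The plan is to compose \Cref{lem:zov-to-maxip} with the standard identity $\lVert u-v\rVert_2^2=\lVert u\rVert_2^2+\lVert v\rVert_2^2-2u\cdot v$, after first making all norms on each side equal, so that minimizing distance across the two colors becomes equivalent to maximizing inner product. First I run \Cref{lem:zov-to-maxip} to obtain sets $U,V\subseteq\Z^{O(r^2)}$ with coordinates of bit-length $O(B)$. These satisfy $u\cdot v\le 0$ for all pairs, and the original $\Z$-OV instance is a yes-instance iff $\max_{u,v}u\cdot v=0$.

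Next I equalize norms using integer padding coordinates, which avoids the rounding needed in \Cref{lem:zov-to-furthest-pair}. Let $R$ be an integer upper bound on $\max\lVert w\rVert_2^2$ over $U\cup V$, so $R\le \poly(r)2^{O(B)}$. For each $u\in U$, the integer $R-\lVert u\rVert_2^2\ge 0$ is a sum of four squares by Lagrange's theorem, say $a_1^2+\cdots+a_4^2$. I set
\[
u'=(u,a_1,\dots,a_4,0,0,0,0),
\]
and define $v'=(v,0,0,0,0,b_1,\dots,b_4)$ in the same way. Then $\lVert u'\rVert_2^2=\lVert v'\rVert_2^2=R$ and $u'\cdot v'=u\cdot v$, which gives
\[
\lVert u'-v'\rVert_2^2=2R-2\,u\cdot v.
\]
So the bichromatic closest pair has squared distance exactly $2R$ iff some pair has $u\cdot v=0$, and otherwise every bichromatic squared distance is at least $2R+2$. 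The red set is $\{u'\}$, the blue set is $\{v'\}$, and the dimension stays $O(r^2)$.

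The main obstacle is computing the four-square decomposition deterministically within the stated time. The numbers $R-\lVert u\rVert_2^2$ have $O(B+\log r)$ bits, and known deterministic four-square algorithms are polynomial only under ERH, while the unconditional ones are randomized. I would sidestep this with a binary-digit decomposition. I write $R-\lVert u\rVert_2^2=\sum_k c_k4^k$ with base-$4$ digits $c_k\in\{0,1,2,3\}$, and represent each term $c_k4^k$ as a sum of at most three squares $(2^k)^2$. This uses $O(B+\log r)$ extra coordinates with entries of bit-length $O(B+\log r)$, and since $r\ge B$ is not guaranteed, the dimension becomes $O(r^2+B)$. If that dimension overhead is unacceptable, the fallback is to normalize and round exactly as in \Cref{lem:zov-to-furthest-pair}. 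In that version I scale to $u/\lVert u\rVert_2$ and $v/\lVert v\rVert_2$, exploit the gap $\Delta\ge 1/(\poly(r)2^{O(B)})$, and round to multiples of $\delta=\Theta(\Delta/r^2)$. The threshold test then becomes whether the minimum squared distance is below $2+\Delta$ rather than an exact equality, and the bit-length is $O(B+\log r)$ as required. I would present this rounding version as the main proof, since it directly matches the claimed bounds.
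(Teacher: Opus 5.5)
Your final rounding version is the paper's proof: reduce to $\Z$-Max-IP via \Cref{lem:zov-to-maxip}, normalize both sides so that $\lVert u'-v'\rVert_2^2 = 2 - 2\,u\cdot v/(\lVert u\rVert_2\lVert v\rVert_2)$, and round to multiples of $\delta=\Theta(\Delta/r^2)$ exactly as in \Cref{lem:zov-to-furthest-pair}. The one detail you omit is that normalizing requires $u,v\neq 0$, which the paper handles by first checking for a zero vector, since then the maximum inner product is trivially $0$.
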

\begin{proof}
    We first run the reduction in \cref{lem:zov-to-maxip} to obtain a $\Z$-Max-IP instance in dimension $O(r^2)$ between sets of vectors $U$ and $V$, where $u \cdot v \le 0$ for $u \in U$ and $v \in V$, and the goal is to determine if the maximum dot product is $0$. If there is any zero vector, the maximum dot product is trivially $0$. So in the following, we assume all vectors are nonzero.  
    
    For any $u \in U$ and $v \in V$, define 
    \[
    u' = u / \lVert u \rVert_2, v' = v / \lVert v \rVert_2. 
    \]
    Let $U' = \{u': u \in U\}$ and $V' = \{v': v \in V\}$ be the input to Bichromatic Closest Pair. 

    Consider 
    \[
    \lVert u' - v'\rVert_2^2 = \lVert u'\rVert_2^2 + \lVert v'\rVert_2^2 - 2 u' \cdot v' = 2 - \frac{2 u \cdot v}{\lVert u \rVert_2 \lVert v \rVert_2}. 
    \]
    If the closest pair has squared distance $2$, then there exists $u \in U, v \in V$ with $u \cdot v = 0$; otherwise, there does not exist such pair. 

    The above completes the reduction to Bichromatic Closest Pair with real coordinates. The vectors can be rounded to integers in the same way as \cref{lem:zov-to-furthest-pair}. 
\end{proof}

\section{Details on Lean Formalization of \texorpdfstring{\cref{lem:nonuiform-local-lean}}{}}
\label{app:lean}

\begin{figure}[htbp]
    \centering

    \begin{subfigure}{0.48\textwidth}
        \centering
        \includegraphics[width=\linewidth]{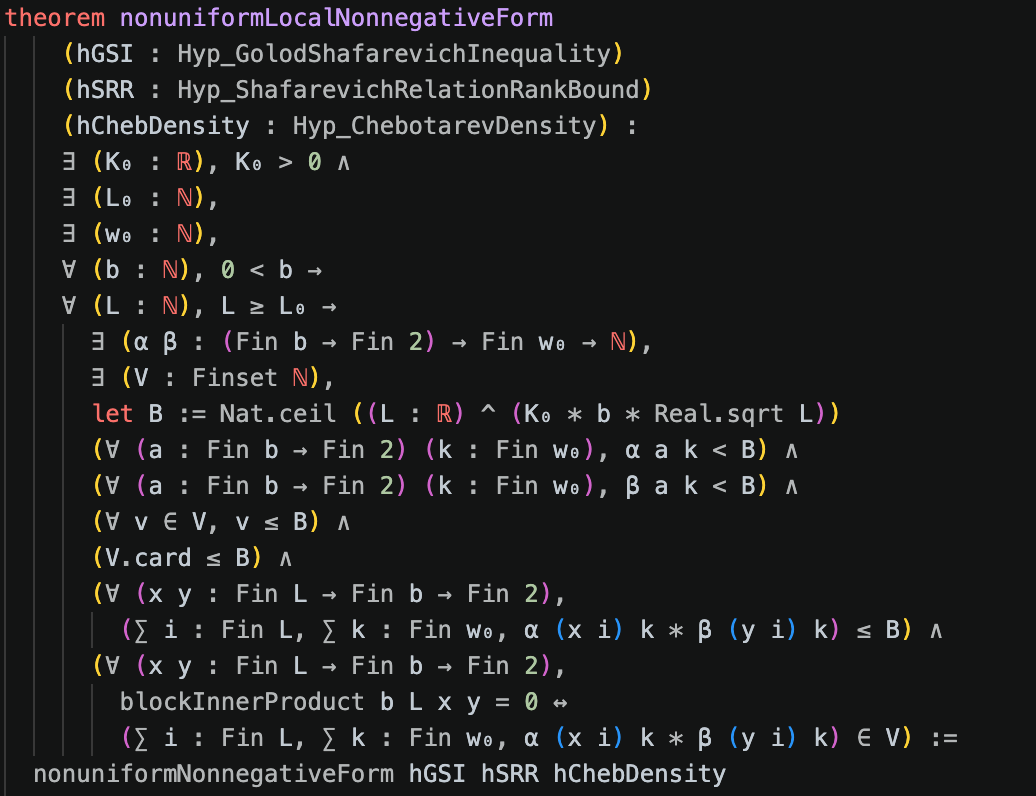}
        \caption{}
        \label{fig:a}
    \end{subfigure}
    \hfill
    \begin{subfigure}{0.48\textwidth}
        \centering
        \includegraphics[width=\linewidth]{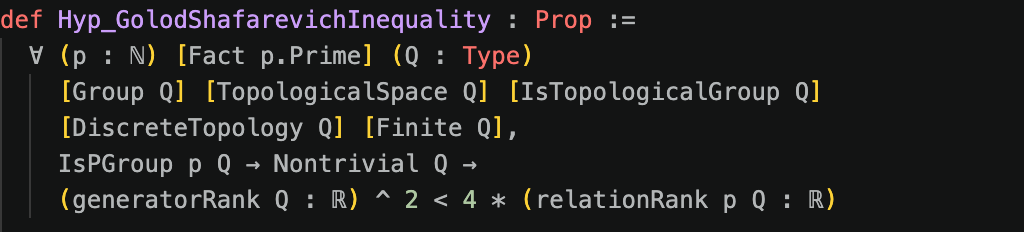}
        \caption{}
        \label{fig:b}
    \end{subfigure}

    \vspace{0.5em}

    \begin{subfigure}{0.48\textwidth}
        \centering
        \includegraphics[width=\linewidth]{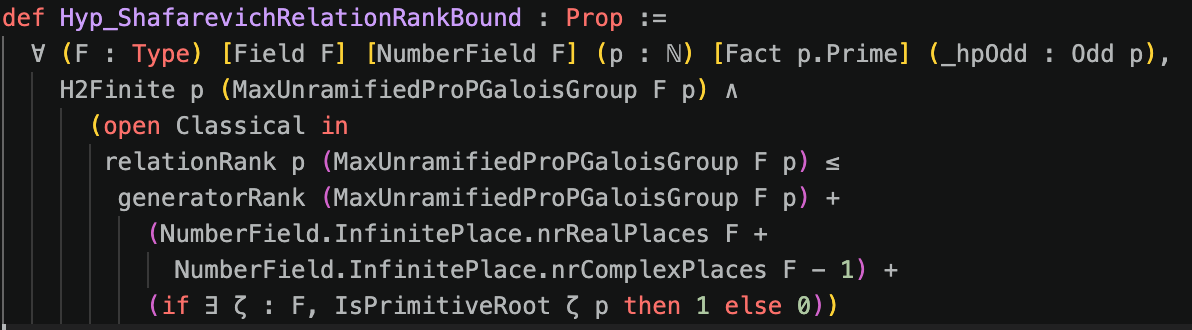}
        \caption{}
        \label{fig:c}
    \end{subfigure}
    \hfill
    \begin{subfigure}{0.48\textwidth}
        \centering
        \includegraphics[width=\linewidth]{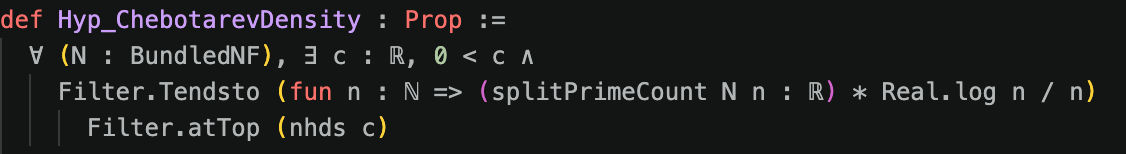}
        \caption{}
        \label{fig:d}
    \end{subfigure}

    \caption{Some code snippets of the Lean 4 formalization.}
    \label{fig:four_subfigures}
\end{figure}

We used Aristotle \cite{achim2025aristotleimolevelautomatedtheorem} to formalize the proof of \cref{lem:nonuiform-local-lean} in Lean 4, which depends on Aleph Prover's formalization \cite{erdos_unit_distance_formalization_2026} of \cite{UnitDistance}. 

The main result formalized by Lean 4 is shown in \cref{fig:a}, which proves \cref{lem:nonuiform-local-lean} assuming \texttt{Hyp\_GolodShafarevichInequality}, \texttt{Hyp\_ShafarevichRelationRankBound} and \texttt{Hyp\_ChebotarevDensity}, the first two of which are also assumed in \cite{erdos_unit_distance_formalization_2026}. These are standard results in algebraic number theory:
\begin{enumerate}
    \item Golod-Shafarevich inequality (\cref{fig:b}): For every prime $p$, and every nontrivial finite $p$-group $Q$, $d(Q)^2 < 4 r(Q)$, where $d(Q)$ and $r(Q)$ denote the generator rank and relation rank of $Q$ respectively. This is exactly \cref{prop:golod-shafarevich}. See, e.g. \cite{golod1964class,GolodShafarevich1965}, for references. 
    \item Shafarevich's relation-rank bound (\cref{fig:c}): for every number field $F$ and every odd prime $p$, let $G=
\Gal(F^{\ur,p} / F)$, then the following holds: 
    \begin{itemize}
        \item The second continuous cohomology group $H^2(G, \F_p)$ is finite dimensional over $\F_p$. 
        \item $r(G) \le d(G) + r_1(F) + r_2(F) - 1 + \delta_p(F)$, where $r_1(F)$ denotes the number of real places of $F$, $r_2(F)$ denotes the number of complex places of $F$, and $\delta_p(F)$ is $1$ iff $F$ contains a $p$-th primitive root of unity and $0$ otherwise. 
    \end{itemize}
    \cref{prop:shafarevich} is a special case of it where $F$ is a totally real cubic field and $p = 3$. See e.g., \cite{vsafarevivc1963extensions, shafarevich1966extensions, mayer2015new} for references. 
    \item Chebotarev density theorem (\cref{fig:d}): For every number field $N$, there exists a positive constant $c$ such that the following holds: let $\pi_N(n)$ be the number of primes up to $n$ that split completely in $N$, then 
    $\frac{\pi_N(n)}{n} \cdot \log n \sim c$. This can be deduced from \cref{thm:chebotarev-density}: Let $\widetilde{N}$ be the Galois closure of $N / \Q$, then \cref{thm:chebotarev-density} implies $c = 1 / [\widetilde{N} : \Q]$. 
    See e.g., \cite{tschebotareff1926bestimmung,neukirch2013algebraic,milne2020algebraic}, for references. 
\end{enumerate}

\end{document}